\crefname{line}{Line}{Lines}
\Crefname{line}{Line}{Lines}
\theoremstyle{plain}
\newtheorem{thm}{Theorem}[section]
\newtheorem{lem}[thm]{Lemma}
\newtheorem{prop}[thm]{Proposition}
\theoremstyle{definition}
\theoremstyle{remark}
\theoremstyle{observation}
\newtheorem{obs}[thm]{Observation}
\theoremstyle{plain}
\newtheorem*{thm*}{Theorem}
\newtheorem*{lem*}{Lemma}
\newtheorem*{prop*}{Proposition}
\newtheorem*{cor*}{Corollary}
\newtheorem*{conj*}{Conjecture}
\theoremstyle{definition}
\newtheorem*{ass*}{Assumption}
\newtheorem*{dfn*}{Definition}
\theoremstyle{remark}
\newtheorem*{rem*}{Remark}
\newcommand{\R}{\mathbb{R}}
\newcommand{\N}{\mathbb{N}}
\newcommand{\ee}{\mathrm{e}}
\DeclareMathOperator*{\argmax}{arg\,max}
\newcommand{\step}[2]{\vphantom{{#1}^{i}}\smash{#1^{(#2)}}}
\newcommand{\nil}{\textsc{nil}}
\let\set\relax
\DeclarePairedDelimiter\set{\{}{\}}
\let\Set\relax
\DeclarePairedDelimiterX\Set[2]{\{}{\}}{\mspace{2mu}{#1}\;\delimsize|\;{#2}\mspace{2mu}}
\newcommand{\Ord}{\mathrm{O}}
\title{Lazy and Fast Greedy MAP Inference for \\Determinantal Point Process}
\author{%
  Shinichi Hemmi\\
  The University of Tokyo\\
  \href{mailto:hemmi.shinichi@gmail.com}{hemmi.shinichi@gmail.com}
  \and
  Taihei Oki\\
  The University of Tokyo\\
  \href{mailto:oki@mist.i.u-tokyo.ac.jp}{oki@mist.i.u-tokyo.ac.jp}\\
  \and
  Shinsaku Sakaue\\
  The University of Tokyo\\
  \href{mailto:sakaue@mist.i.u-tokyo.ac.jp}{sakaue@mist.i.u-tokyo.ac.jp} \\
  \and
  Kaito Fujii\\
  National Institute of Informatics\\
  \href{mailto:fujiik@nii.ac.jp}{fujiik@nii.ac.jp}\\
  \and
  Satoru Iwata\thanks{SI also belongs to Institute for Chemical Reaction Design and Discovery (WPI‑ICReDD).}\\
  The University of Tokyo\\
  \href{mailto:iwata@mist.i.u-tokyo.ac.jp}{iwata@mist.i.u-tokyo.ac.jp}\\
}
\date{}
\begin{document}

\maketitle

\begin{abstract}
  The maximum a posteriori (MAP) inference for determinantal point processes (DPPs) is crucial for selecting diverse items in many machine learning applications. Although DPP MAP inference is NP-hard, the greedy algorithm often finds high-quality solutions, and many researchers have studied its efficient implementation. One classical and practical method is the lazy greedy algorithm, which is applicable to general submodular function maximization, while a recent fast greedy algorithm based on the Cholesky factorization is more efficient for DPP MAP inference. This paper presents how to combine the ideas of ``lazy'' and ``fast'', which have been considered incompatible in the literature. Our lazy and fast greedy algorithm achieves almost the same time complexity as the current best one and runs faster in practice. The idea of ``lazy + fast'' is extendable to other greedy-type algorithms. We also give a fast version of the double greedy algorithm for unconstrained DPP MAP inference. Experiments validate the effectiveness of our acceleration ideas.
\end{abstract}

\section{Introduction}\label{section:introduction}
The determinantal point process (DPP) has been a popular diversification model in machine learning.
\citet{Macchi1975-yh} first used DPPs to represent repulsion in quantum physics, and later, DPPs have been used in various scenarios such as recommendation systems~\cite{Chen2018-aa}, document summarization~\cite{Gillenwater2012-uw,Kulesza2012-er}, and diverse molecule selection~\citep{Nakamura2022-nd}.
An important problem in the DPP applications is the \emph{maximum a posteriori} (MAP) inference, which asks to find an item subset with the highest probability.
Intuitively, if each item is associated with a vector whose length and direction represent its importance and feature, respectively, then the aim of DPP MAP inference is to select items whose vectors form the largest volume parallelotope, thus selecting important and diverse items.

In DPPs, exact MAP inference is NP-hard \citep{Ko1995-yx}.
Fortunately, however, the standard greedy algorithm (\textsc{Greedy}) for submodular function maximization~\citep{Nemhauser1978-dr} enjoys a ($1 - 1/\mathrm{e}$)-approximation guarantee in terms of the log-determinant function value under the monotonicity assumption, and it often finds high-quality solutions in practice.
A naive implementation of \textsc{Greedy}, however, incurs too much computation cost for large instances since evaluating the determinant of a $k\times k$ matrix takes $\Ord(k^\omega)$ time, where $\omega \in [2, 3]$ is the matrix-multiplication exponent (usually $\omega = 3$).
To overcome this issue, researchers have studied techniques for implementing efficient greedy algorithms.
One classical and powerful method is the so-called \emph{lazy} greedy algorithm (\textsc{LazyGreedy})~\citep{Minoux1978-jr}, which avoids the redundant computation of function values by making good use of the submodularity.
\textsc{LazyGreedy} is applicable to submodular function maximization, and it empirically runs much faster than naive \textsc{Greedy}, although the worst-case running time is not improved.
\citet{Chen2018-aa} proposed another notable Cholesky-factorization-based method, called the \emph{fast} greedy algorithm (\textsc{FastGreedy}).
Their algorithm is specialized for DPP MAP inference and provides the fastest $\Ord(knd)$-time implementation of \textsc{Greedy} for selecting $k$ out of $n$ items represented by $d$-dimensional vectors.
\citet{Chen2018-aa} also experimentally showed that \textsc{FastGreedy} can run faster than \textsc{LazyGreedy}.

Since the study of \citep{Chen2018-aa}, although a pre-processing method for customized DPP MAP inference~\citep{Han2020-ez} and fast parallel algorithms for submodular function maximization~\citep{Balkanski2018-dq,Balkanski2019-mj,Fahrbach2019-kz,Breuer2020-do,Ene2020-gj,Chen2021-vs,Kuhnle2021-wb} have been studied, no progress has been made that directly accelerates \textsc{Greedy} for DPP MAP inference.
Since real-world dataset sizes have been growing, a further speed-up of \textsc{Greedy} is eagerly awaited.

\textbf{Our main contribution} is to combine the two ideas, ``lazy'' and ``fast'', to develop an even faster implementation of \textsc{Greedy} for DPP MAP inference, which we call \textsc{LazyFastGreedy}.
In the literature, the two ideas have been thought to be incompatible; in fact, experiments in~\citep{Chen2018-aa} compared \textsc{FastGreedy} with \textsc{LazyGreedy} without considering their combination.
The core idea of ``lazy + fast'' is widely applicable to other greedy-type algorithms.
Below is a summary of our results.

\begin{enumerate}
  \item We present \textsc{LazyFastGreedy} for cardinality-constrained DPP MAP inference. It takes $\Ord(kn(d + \log n))$ time even in the worst case and is faster than \textsc{FastGreedy} in practice.
  We also extend the idea of ``lazy + fast'' to other greedy-type algorithms: \textsc{RandomGreedy}~\citep{Buchbinder2014-il}, \textsc{StochasticGreedy}~\citep{Mirzasoleiman2015-qo,Sakaue2020-ap}, and \textsc{InterlaceGreedy}~\citep{Kuhnle2019-io}.
  \item We present a ``fast'' version of \textsc{DoubleGreedy}~\citep{Buchbinder2015-rc} for unconstrained DPP MAP inference by extending the idea of \citep{Chen2018-aa} with Jacobi's complementary minor formula.
  \item Experiments on synthetic and real-world datasets validate the empirical effectiveness of our acceleration techniques for the greedy-type algorithms.
  In particular, our \textsc{LazyFastGreedy} runs up to about $17$ times faster than \textsc{FastGreedy} in real-world settings.
\end{enumerate}
The greedy variants \citep{Buchbinder2014-il,Buchbinder2015-rc,Mirzasoleiman2015-qo,Kuhnle2019-io,Sakaue2020-ap} enjoy approximation guarantees for non-monotone submodular function maximization, which are relevant since the log-determinant function is non-monotone in general.
In short, we accelerate various important greedy-type algorithms for DPP MAP inference.

\subsection{Related work}
The greedy algorithm (\textsc{Greedy}) is a popular approach to DPP MAP inference~\citep{Kulesza2012-er}.
\citet{Han2017-bv} gave a fast but inexact implementation of \textsc{Greedy}.
Later, \citet{Chen2018-aa} gave an exact implementation of \textsc{Greedy} with the same time complexity as that of \citep{Han2017-bv}.
\citet{Han2020-ez} studied the case where kernel matrices of DPPs are generated via customization (or re-weighting) of fixed feature vectors and developed a pre-processing method for accelerating \textsc{Greedy}.
A continuous-relaxation-based $1/4$-approximation algorithm for general down-closed constraints was also studied~\citep{Gillenwater2012-uw}.
Our implementation of \textsc{InterlaceGreedy} \citep{Kuhnle2019-io} yields a faster $1/4$-approximation algorithm for a special case with a cardinality constraint.
Approximation algorithms and inapproximability results of DPP MAP inference (without log) have also been extensively studied \citep{Civril2009-uj,Mahabadi2019-bc,Mahabadi2020-vh,Bhaskara2020-cg,Ohsaka2022-ng}.
Sampling is another important research subject in DPPs and has been widely studied \citep{Anari2016-hw,Li2016-ia,Derezinski2019-zg,Gillenwater2019-gq,Anari2020-cj,Calandriello2020-qh,Launay2020-af}.

Since the log-determinant function is known to have the submodularity~\citep{Fan1968-pg}, DPP MAP inference has a close connection to submodular function maximization \citep{Nemhauser1978-dr}.
Besides \textsc{LazyGreedy}, \textsc{StochasticGreedy} \citep{Mirzasoleiman2015-qo,Sakaue2020-ap} is a popular fast variant of \textsc{Greedy}, which we will discuss later.
A recent line of work \citep{Balkanski2018-dq,Balkanski2019-mj,Fahrbach2019-kz,Breuer2020-do,Ene2020-gj,Chen2021-vs,Kuhnle2021-wb} has studied \textit{adaptive} algorithms for submodular function maximization, where we are allowed to execute polynomially many queries in parallel to reduce the number of sequential rounds.
Those studies consider oracle models of submodular functions, whereas we focus on the log-determinant functions and develop fast algorithms without such parallelization.

\section{Background}\label{section:background}
Let $[n]$ denote the set $\{1,2,\dots,n\}$ for any $n \in \N$.
For any $S \subseteq [n]$, $\overline{S}$ denotes its complement $[n] \setminus S$.
We use ${\bm 0}$ and $O$ as all-zero vectors and matrices, respectively, and ${\bm 1}$ as all-ones vectors (their sizes will be clear from the context).
Let $\langle\cdot, \cdot\rangle$ denote the inner product.
For a matrix $L\in\R^{n\times n}$ and subsets $S, T \subseteq [n]$, $L[S,T]$ is the submatrix of $L$ indexed by $S$ in rows and $T$ in columns.
For brevity, we write $L_{i,j} = L[\{i\},\{j\}], L[S] = L[S,S], L[S,i] = L[S,\{i\}]$, and $L[i,S] = L[\{i\}, S]$ for any $i,j \in [n]$ and $S \subseteq [n]$.
The determinant of $L$ is denoted by $\det L$.
Set $\det L[\emptyset] =1$ by convention.
For any $M \in \R^{n \times n}$, we suppose that $M^\top M$ is computed in $\Ord(n^\omega)$ time, which implies that we can compute $\det M$ and $M^{-1}$ (if non-singular) in $\Ord(n^\omega)$ time (see, e.g., \citep[Chapter~2]{Bini1994-qa}).

\paragraph{DPP MAP inference.}
Let $L\in\R^{n\times n}$ be a positive semi-definite matrix.
A probability measure $\mathcal{P}$ on $2^{[n]}$ is called a \emph{determinantal point process}~(DPP) with a kernel matrix $L$ if $\mathcal{P}[X = S] \propto \det L[S]$ holds for all $S\subseteq [n]$.
MAP inference for DPPs is the problem of finding a subset $S \subseteq [n]$ with the largest $\det L[S]$ value.
We suppose that each $i \in [n]$ is associated with a vector ${\bm \phi_i}\in\R^{d}$ and that a kernel matrix $L\in\R^{n\times n}$ is given as $L=B^{\top}B$, where $B = \mleft[{\bm \phi_1}, {\bm \phi_2}, \dots, {\bm \phi_n}\mright] \in\R^{d\times n}$, i.e., $L_{i,j} =\mleft\langle {\bm \phi_{i}}, {\bm \phi_{j}}\mright\rangle$ for $i,j\in[n]$.
Note that $\sqrt{\det L[S]}$ represents the volume of the parallelotope spanned by $\{{\bm \phi}_i\mid i\in S\}$.
Hence, if the length and direction of ${\bm \phi_i}$ indicate $i$'s importance and feature, respectively, the larger value of $\det L[S]$ implies that $S$ contains more important and diverse items.
In many situations, we want to select a limited number of items; let $k \in \N$ denote the upper bound.
Therefore, MAP inference for DPPs with a cardinality constraint, or $k$-DPP \citep{Kulesza2011-pf}, is often considered.
Note that since $\mathop{\mathrm{rank}} L \le \min\set{n, d}$, we can assume $k \le \min\set{n, d}$ without loss of generality.

\paragraph{Submodular function maximization.}
For a set function $f\colon 2^{[n]}\to\R\cup\set{-\infty}$, the \emph{marginal gain} of $i\in[n]$ with respect to $S\subseteq[n]$ is defined by $f_i(S)=f(S\cup\{i\})- f(S)$.
A set function $f\colon 2^{[n]}\to\R\cup\set{-\infty}$ is called \emph{monotone} if $f_i(S)\ge 0$ for every $S\subseteq [n]$ and $i \in \overline{S}$.
It is called \emph{submodular} if it has the \emph{diminishing returns property}: $f_i(S) \geq f_i(T)$ for every $S\subseteq T\subseteq [n]$ and $i\in \overline{T}$.
Since $f(S) = \log \det L[S]$ is submodular, DPP MAP inference can be written as submodular function maximization: $\max_{S\in\mathcal{X}} f(S)$, where $\mathcal{X}\subseteq 2^{[n]}$ is a family of feasible subsets.
This paper mostly considers the cardinality-constrained setting, i.e., $\mathcal{X}=\{S\subseteq [n] \mid |S| \le k\}$ for given $k \in \N$;
in \cref{section:double-greedy}, we study the unconstrained setting, i.e., $\mathcal{X}=2^{[n]}$.
The log-determinant function $f$ is monotone if the smallest eigenvalue of $L$ is at least $1$, but this is not always the case in practice.

It is well-known that the greedy algorithm (\textsc{Greedy}) enjoys a $(1-1/\ee)$-approximation guarantee for cardinality-constrained monotone submodular function maximization with $f(\emptyset) \ge 0$~\citep{Nemhauser1978-dr}.
This approximation ratio is optimal under the evaluation oracle model~\citep{Nemhauser1978-vk}.
\textsc{Greedy} works as follows:
setting $\step{S}{0}=\emptyset$, in each $t$th step ($t = 1,\dots,k$), choose $j_t \in \argmax \Set{f_i(\step{S}{t-1})}{i \in \overline{\step{S}{t-1}}}$ and put $\step{S}{t} = \step{S}{t-1} \cup \set*{j_t}$.
We call a solution obtained in this way a \emph{greedy solution}.

Besides \textsc{Greedy}, many algorithms \citep{Buchbinder2015-rc,Buchbinder2014-il,Kuhnle2019-io,Sakaue2020-ap} achieve constant-factor approximations for cardinality-constrained/unconstrained submodular function maximization.
These results motivate us to apply submodular-function-maximization algorithms to DPP MAP inference, although constant-factor approximations of the log-determinant value do not imply those of the determinant value.

\subsection{Lazy greedy algorithm for submodular function maximization}

\textsc{LazyGreedy}~\cite{Minoux1978-jr} is an efficient implementation of \textsc{Greedy} for submodular function maximization.
As explained above, \textsc{Greedy} finds $j_t$ by computing marginal gains $f_i(\step{S}{t-1})$ for all $i \in \overline{\step{S}{t-1}}$.
\textsc{LazyGreedy} attempts to find $j_t$ more efficiently by keeping an upper bound $\rho_i$ on $f_i(\step{S}{t-1})$ for each $i \in \overline{\step{S}{t-1}}$, which is an \emph{old} marginal gain, i.e., $\rho_i=f_i(S^{(u_i)})$ for some $u_i \le t-1$.
In each iteration, \textsc{LazyGreedy} picks $i\in \overline{\step{S}{t-1}}$ with the largest $\rho_i$ value as the most promising element.
It then updates the $\rho_i$ value to the latest marginal gain $f_i\mleft(\step{S}{t-1}\mright)$.
If $\rho_i$ is still the largest among $\rho_{i^\prime}$ for all $i^\prime\in\overline{\step{S}{t-1}}$, the diminishing returns property guarantees $i \in \argmax \Set{f_{i^\prime}(\step{S}{t-1})}{i^\prime \in \overline{\step{S}{t-1}}}$, and hence it adds $j_t = i$ to $\overline{\step{S}{t-1}}$.
\textsc{LazyGreedy} thus constructs a greedy solution, deferring updates of upper bounds of unpromising elements.
If the upper bounds are managed by a priority queue, every single iteration computes the marginal gain only once and makes $\Ord(\log n)$ comparisons (note that \text{``iteration'' is distinguished from ``step'' of \textsc{Greedy}}).
With this contrivance, \textsc{LazyGreedy} runs much faster than \textsc{Greedy} in practice, even though it does not improve the worst-case complexity.

\subsection{Fast greedy algorithm for DPP MAP inference}
We turn to DPP MAP inference with a kernel matrix $L = B^\top B$ and $B \in \R^{d \times n}$.
If we apply \textsc{Greedy} to $f(S) = \log \det L[S]$, it takes $\Ord(k^{\omega+1}n)$ time since computing an $f$ value takes $\Ord(k^{\omega})$ time; in addition, computing $L$ at first takes $\Ord(\min \set{n^{\omega-1}d, n^2d^{\omega-2}})$ time.
\textsc{FastGreedy} \citep{Chen2018-aa} provides an $\Ord(knd)$-time implementation of \textsc{Greedy} for cardinality-constrained DPP MAP inference.

\cref{alg:FG_MAP_Inf} describes the procedure of \textsc{FastGreedy}, which is based on the Cholesky factorization of $L$ with maximum pivoting.
The Cholesky decomposition produces a matrix $V\in\R^{n\times n}$ such that $L=VV^{\top}$ and $PV$ is lower triangular for some permutation matrix $P$.
We call $V$ a \emph{Cholesky factor} of $L$.
In each $t$th step, \cref{line:fast-for-Sc,line:fast-compute-Vijt,line:compute-di} calculate the $t$th column of $PV$ via backward substitution (see \cref{fig:CZZ}).
The following \cref{prop:rel_bet_Gre_Chol} implies an important fact that once $V[i, \step{S}{t}]$ is filled, we can obtain $f_i(\step{S}{t})$ from $d_i^{(t)}$ computed in \cref{line:compute-di}, which is equal to the diagonal $V_{i, i}$ of the current Cholesky factor.
Although the proposition is already proved in \citep{Chen2018-aa}, we present a proof sketch since it would be helpful to understand the subsequent discussion (see~\cite{Chen2018-aa} for the complete proof).

\begin{restatable}[\cite{Chen2018-aa}]{prop}{RelBetGreChol}\label{prop:rel_bet_Gre_Chol}
  For $t = 0, 1, \dots, k-1$, it holds that $f_i\mleft(\step{S}{t}\mright) = 2\log \step{d_i}{t}$ for every $i\in\overline{\step{S}{t}}$.
\end{restatable}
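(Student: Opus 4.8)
The plan is to reduce the marginal gain to a ratio of principal minors of $L$ and then identify that ratio with the squared diagonal $\bigl(\step{d_i}{t}\bigr)^2$ produced by the pivoted Cholesky process. Write $S=\step{S}{t}$. By the definition of $f$,
\[
  f_i(S)=\log\det L[S\cup\set{i}]-\log\det L[S]=\log\frac{\det L[S\cup\set{i}]}{\det L[S]},
\]
so it suffices to prove $\det L[S\cup\set{i}]/\det L[S]=\bigl(\step{d_i}{t}\bigr)^2$. Applying the Schur-complement expansion to the bordered matrix $L[S\cup\set{i}]$ (whose trailing row and column are $L[i,S]$ and $L[S,i]$, with corner $L_{i,i}$) rewrites the left-hand side as the scalar Schur complement $L_{i,i}-L[i,S]\,L[S]^{-1}\,L[S,i]$, the conditional variance of item $i$ given $S$.

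I would then connect this Schur complement to the quantity the algorithm computes. Partition the Cholesky factorization $L=VV^{\top}$ so that the indices of $S$ come first; lower-triangularity of $PV$ yields a block form in which the off-diagonal block satisfies $V[\overline S,S]=L[\overline S,S]\,V[S]^{-\top}$, where $V[S]$ is the Cholesky factor of $L[S]$. The diagonal value read off in \cref{alg:FG_MAP_Inf} is the standard Cholesky update $\bigl(\step{d_i}{t}\bigr)^2=L_{i,i}-\bigl\langle V[i,S],V[i,S]\bigr\rangle$. Substituting the expression for $V[i,S]$ and using $V[S]V[S]^{\top}=L[S]$ collapses $\bigl\langle V[i,S],V[i,S]\bigr\rangle$ to exactly $L[i,S]\,L[S]^{-1}\,L[S,i]$. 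Combining the two displays gives $\bigl(\step{d_i}{t}\bigr)^2=\det L[S\cup\set{i}]/\det L[S]$, hence $f_i(S)=2\log\step{d_i}{t}$.

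This also matches the geometric picture stressed earlier: since $L=B^{\top}B$, the pivoted Cholesky process is Gram--Schmidt orthogonalization of the columns of $B$, so $\step{d_i}{t}$ is the length of the component of ${\bm \phi}_i$ orthogonal to $\operatorname{span}\set{{\bm \phi}_j : j\in S}$. The volume ratio $\sqrt{\det L[S\cup\set{i}]}/\sqrt{\det L[S]}$ equals precisely this orthogonal distance, again giving $\bigl(\step{d_i}{t}\bigr)^2=\det L[S\cup\set{i}]/\det L[S]$; I would use whichever of the two derivations reads more cleanly.

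The main obstacle is bookkeeping rather than depth. One must track the pivoting permutation $P$ so that ``the diagonal after processing $S$'' is unambiguous, and verify that the backward-substitution steps have filled $V[i,S]$ with the claimed entries before $\step{d_i}{t}$ is evaluated. Positive semidefiniteness of $L$ makes every Schur complement nonnegative, so the square root and logarithm are well defined; the degenerate case $\step{d_i}{t}=0$ (when ${\bm \phi}_i$ lies in the span of the selected vectors) is consistent with the identity under the convention $f_i(S)=-\infty=2\log 0$. With these points settled, the result follows from the two determinant computations above.
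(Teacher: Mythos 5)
Your proof is correct and takes essentially the same route as the paper's: both boil down to identifying $\bigl(\step{d_i}{t}\bigr)^2$ with the ratio $\det L[\step{S}{t}\cup\{i\}]/\det L[\step{S}{t}]$ via the bordered Cholesky factorization of $L[\step{S}{t}\cup\{i\}]$. The only cosmetic difference is that you pass explicitly through the Schur complement $L_{i,i}-L[i,\step{S}{t}]\,L[\step{S}{t}]^{-1}L[\step{S}{t},i]$ and the identity $V[i,\step{S}{t}]=L[i,\step{S}{t}]\,V[\step{S}{t}]^{-\top}$, whereas the paper simply takes determinants of the displayed block factorization and phrases the argument as an induction on $t$ (which your version still implicitly needs, to certify that $V[\step{S}{t}]$ is a genuine Cholesky factor of $L[\step{S}{t}]$ when $\step{d_i}{t}$ is read off).
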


\begin{proof}[Proof sketch]
  The proof is by induction on $t$.
  If $t=0$, it holds that $f_i\mleft(\step{S}{t}\mright) = f(\{i\})-f(\emptyset)  = \log L_{i,i} = 2\log d_i^{(0)}$ for $i\in [n]$.
  Given a Cholesky factor $V[\step{S}{t}]$ of $L[\step{S}{t}]$, for $i\in\overline{\step{S}{t}}$, we have
  \begin{equation}\label{eq:Cholesky_ith_step}
    L[\step{S}{t}\cup \{i\}]
  =
    \begin{bmatrix}
      L\mleft[\step{S}{t}\mright] & L\left[\step{S}{t},i\right]\\
      L\left[i,\step{S}{i}\right] & L_{i,i}
   \end{bmatrix}
  =
    \left[\begin{array}{cc}
        V[\step{S}{t}] & {\bm 0}\\
        V[i,\step{S}{t}] & d_{i}^{(t)}
     \end{array}
    \right]
    \left[\begin{array}{cc}
        {V[\step{S}{t}]}^{\top} & {V[i,\step{S}{t}]}^\top \\
        {\bm 0}^{\top} & d_{i}^{(t)}
     \end{array}
    \right].
\end{equation}
Hence, we have $\log\det L[\step{S}{t}\cup\{{i}\}]=\log {\left(\step{d_{i}}{t}\det V[\step{S}{t}]\right)}^2=2\log \step{d_{i}}{t}+\log\det L[\step{S}{t}]$, which implies $\log\det L[\step{S}{t}\cup\{i\}] - \log\det L[\step{S}{t}] =2\log \step{d_{i}}{t}$.
Thus, the statement holds.
\end{proof}

Therefore, by iteratively adding $j_{t} \in \argmax_{i\in \overline{\step{S}{t-1}}}\,d_{i}^{(t-1)}$ to $\step{S}{t-1}$ as in \cref{alg:FG_MAP_Inf}, we can obtain a greedy solution.
Since $L_{i, j_t} = \mleft\langle {\bm \phi_{i}}, {\bm \phi_{j_t}}\mright\rangle$ is computed in $\Ord(d)$ time and $d \ge k$, \cref{line:fast-compute-Vijt} takes $\Ord(d)$ time.
This is repeated $\Ord(kn)$ times, hence the total time complexity of $\Ord(knd)$.

\begin{algorithm}[tb]
  \caption{\textsc{FastGreedy} \citep{Chen2018-aa} for cardinality-constrained DPP MAP inference}\label{alg:FG_MAP_Inf}
  \begin{algorithmic}[1]
    \State $V\gets O,~d_i^{(0)} \leftarrow \sqrt{L_{i,i}}~(\forall i\in [n]),~S^{(0)} \leftarrow\emptyset$%
    \For{$t=1$ to $k$}
      \State Take $j_{t} \in \argmax_{i\in \overline{\step{S}{t-1}}}\,d_{i}^{(t-1)}$ \Comment{Terminate if ${d}^{(t-1)}_{j_t} \le 1$ (i.e., $f_{j_t}(\step{S}{t-1}) \le 0$)}
      \State $\step{S}{t}\leftarrow \step{S}{t-1}\cup\{j_{t}\}$ \Comment{$\step{S}{t} = \set*{j_1,\dots,j_t}$}
      \For{$i$ in $\overline{\step{S}{t}}$} {\label[line]{line:fast-for-Sc}}
      \Comment{Skip \cref{line:fast-for-Sc,line:fast-compute-Vijt,line:compute-di} (updates for the next step) if $t = k$}
        \State $V_{i,j_t} \leftarrow (L_{i,j_t}-\langle V[i,\step{S}{t-1}], V[j_t,\step{S}{t-1}]\rangle)/\step{d_{j_{t}}}{t-1}$ {\label[line]{line:fast-compute-Vijt}}
        \Comment{$V[i, \emptyset] = {\bm 0}$ for any $i \in [n]$}
        \State $d_i^{(t)} \leftarrow \sqrt{\left(\step{d_i}{t-1}\right)^2-{V_{i,j_t}}^2}$ {\label[line]{line:compute-di}}
      \EndFor
    \EndFor
    \State \Return $\step{S}{k}$ \Comment{Or $\step{S}{t-1}$ if terminated with $t<k$}
  \end{algorithmic}
\end{algorithm}

\section{Lazy and fast algorithms for cardinality-constrained DPP MAP inference}

\subsection{Lazy and fast greedy algorithm}\label{subsection:lazyfastgreedy}

\begin{figure}[tb]
	\centering
	\begin{minipage}[b]{.465\linewidth}
			\centering
			\includegraphics[width=.7\linewidth]{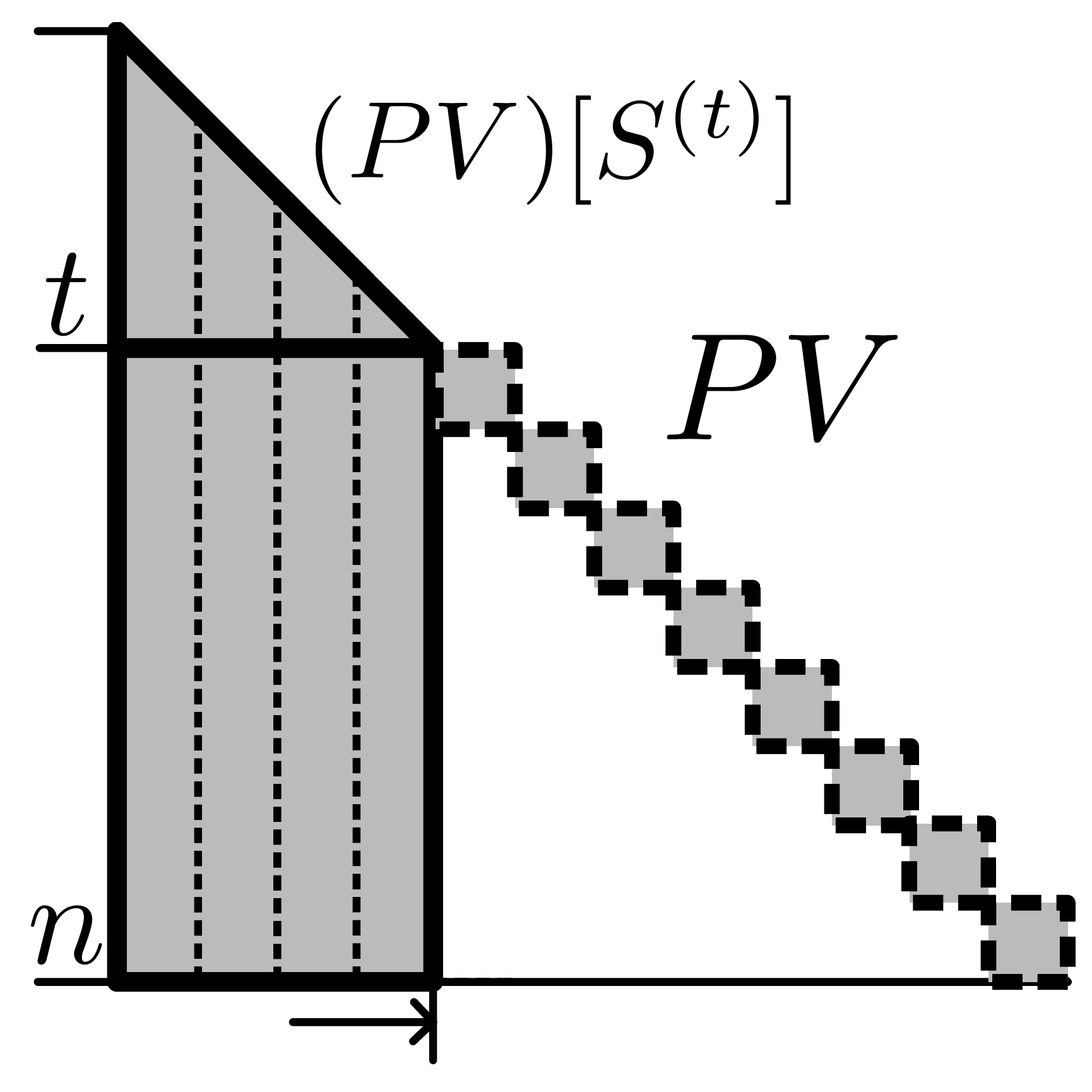}
			\subcaption{\textsc{FastGreedy}}\label{fig:CZZ}
	\end{minipage}
	\begin{minipage}[b]{.465\linewidth}
			\centering
			\includegraphics[width=.7\linewidth]{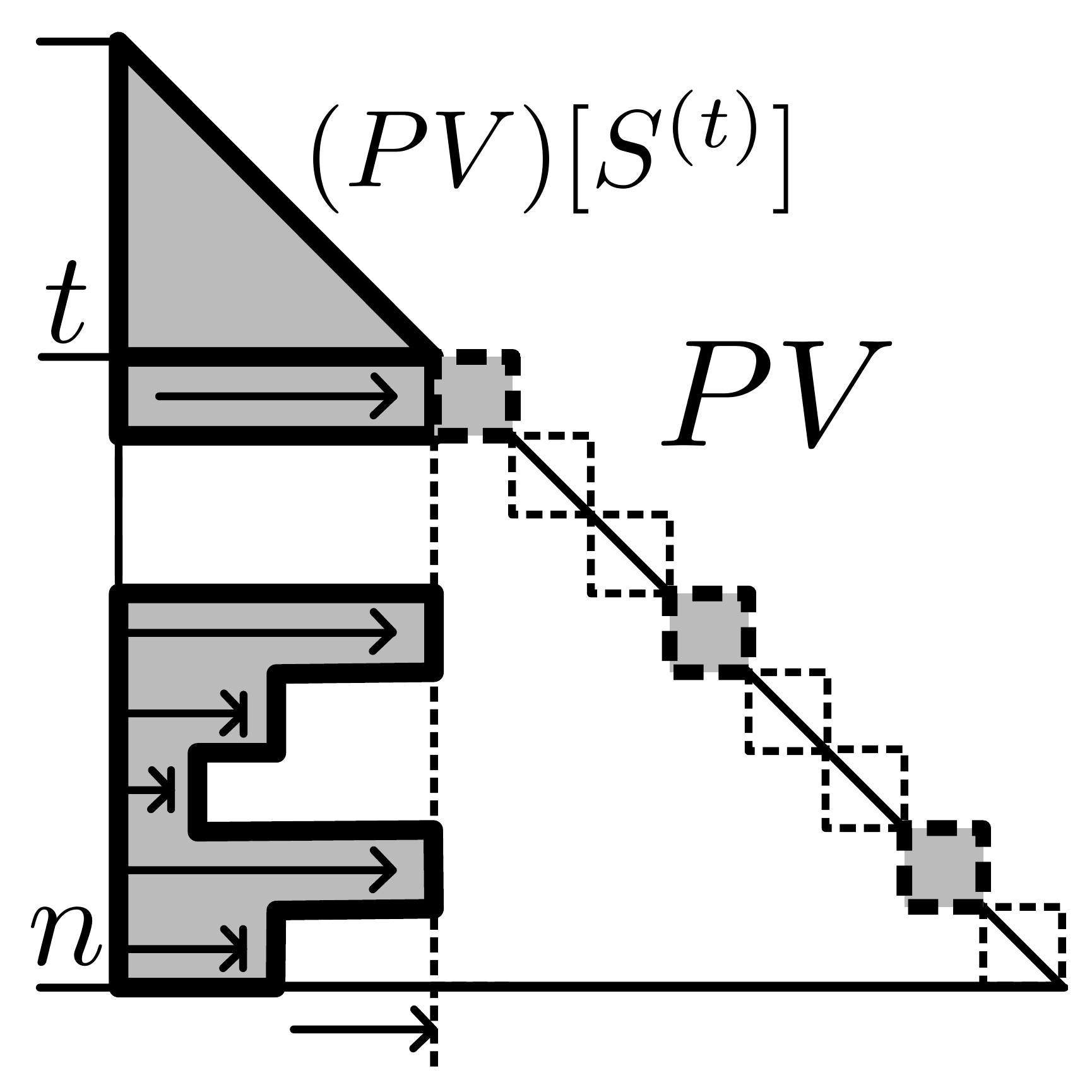}
			\subcaption{\textsc{LazyFastGreedy}}\label{fig:Lazy}
	\end{minipage}
	\caption{
		Images of Cholesky factors computed by \textsc{FastGreedy} and \textsc{LazyFastGreedy}.
    Shaded areas indicate the entries that are already computed.
    Diagonals with bold (thin) dashed lines indicate where the latest marginal gains are (not) available.
	}\label{fig:wise_update}
\end{figure}



We combine \textsc{LazyGreedy} and \textsc{FastGreedy} to obtain even faster \textsc{LazyFastGreedy}.

As described above, \textsc{LazyGreedy} is designed for submodular function maximization under the oracle model, where a marginal gain is computed for each $i \in \overline{\step{S}{t-1}}$.
Meanwhile, the core idea of \textsc{FastGreedy} is to obtain marginal gains of all $i \in \overline{\step{S}{t-1}}$ efficiently by computing a new column of a Cholesky factor, as in \cref{fig:CZZ}.
To combine these seemingly incompatible methods, we need to take a closer look at how \textsc{FastGreedy} updates the entries in $V$.
The following observation is obvious but elucidates the essential row-wise independence of the updates of Cholesky-factor entries.
\begin{obs}\label{obs:independency}
  In each $t$th step in \cref{alg:FG_MAP_Inf}, for each $i\in\overline{\step{S}{t}}$, $V_{i,j_t}$ in \cref{line:fast-compute-Vijt} is computed from $\step{d_{j_t}}{t-1}$, $L_{i,j_t}$, and $V[\{i,j_t\},\step{S}{t-1}]$.
  Thus, conditioned on $\step{S}{t}$, the $i$th row of $V[\overline{\step{S}{t}},\step{S}{t}]$ can be updated independently for each $i\in\overline{\step{S}{t}}$ in a sense that computing $V_{i,j_t}$ only requires $V_{i,j_{t^{\prime}}}$ for $t^{\prime} < t$ and $V[j_t,\step{S}{t-1}]$, which is included in the already fixed Cholesky factor $V[\step{S}{t}]$.
\end{obs}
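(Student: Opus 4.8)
The plan is to verify the claim by directly inspecting the update rule in \cref{line:fast-compute-Vijt} and tracking which entries of $V$ have been filled by the end of each step. First I would read off the right-hand side of \cref{line:fast-compute-Vijt},
\[
  V_{i,j_t} = \bigl(L_{i,j_t}-\langle V[i,\step{S}{t-1}], V[j_t,\step{S}{t-1}]\rangle\bigr)/\step{d_{j_t}}{t-1},
\]
and enumerate the quantities it references: the scalar $L_{i,j_t}$, the pivot $\step{d_{j_t}}{t-1}$, and the two row vectors $V[i,\step{S}{t-1}]$ and $V[j_t,\step{S}{t-1}]$. Packaging the latter two as $V[\{i,j_t\},\step{S}{t-1}]$ gives the first assertion immediately.

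For the row-wise independence, the key step is to locate each referenced quantity within the already-computed part of the Cholesky factor. Since $\step{S}{t-1}=\{j_1,\dots,j_{t-1}\}$, the entries of $V[i,\step{S}{t-1}]$ are exactly $V_{i,j_{t'}}$ for $t'<t$, i.e., earlier entries of the very row $i$ being updated. I would then argue that $V[j_t,\step{S}{t-1}]$ belongs to the fixed factor $V[\step{S}{t}]$: at the end of step $t-1$, the inner loop (\cref{line:fast-for-Sc,line:fast-compute-Vijt,line:compute-di}) has filled $V_{i',j_{t-1}}$ for every $i'\in\overline{\step{S}{t-1}}$, and unwinding this over all previous steps shows that $V[i',\step{S}{t-1}]$ is available for every $i'\in\overline{\step{S}{t-1}}$; in particular it is available for $i'=j_t\in\overline{\step{S}{t-1}}$. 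Once $j_t$ is absorbed into $\step{S}{t}=\step{S}{t-1}\cup\{j_t\}$, this row is precisely $V[j_t,\step{S}{t}\setminus\{j_t\}]$, a row of the lower-triangular block $V[\step{S}{t}]$. The pivot $\step{d_{j_t}}{t-1}$ is likewise determined by the first $t-1$ steps. Collecting these facts, the update of $V_{i,j_t}$ uses only row-$i$ history together with the pinned pivot row, and never any entry $V_{i',\cdot}$ with $i'\in\overline{\step{S}{t}}\setminus\{i\}$, which is exactly the claimed independence.

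Since every ingredient is read straight from the algorithm, there is no analytic difficulty; the only point requiring care is the bookkeeping of the second paragraph, namely confirming that $V[j_t,\step{S}{t-1}]$ has genuinely been computed before step $t$ and is subsumed into the fixed factor $V[\step{S}{t}]$ rather than remaining a still-pending off-diagonal entry. I expect this index tracking --- distinguishing entries of the fixed triangular block $V[\step{S}{t}]$ from the pending rectangular block $V[\overline{\step{S}{t}},\step{S}{t}]$ --- to be the only place where a misstatement could creep in, so I would state the invariant ``after step $s$, $V[i,\step{S}{s}]$ is filled for all $i\in\overline{\step{S}{s}}$ and all of $V[\step{S}{s}]$ is filled'' explicitly and verify it by the same induction on $s$ that underlies \cref{prop:rel_bet_Gre_Chol}.
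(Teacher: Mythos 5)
Your argument is correct and matches the paper's intent exactly: the paper states this observation without proof, calling it ``obvious,'' and the justification it has in mind is precisely your direct reading of \cref{line:fast-compute-Vijt} together with the bookkeeping that $V[j_t,\step{S}{t-1}]$ lies in the already-fixed triangular block $V[\step{S}{t}]$. Your explicit invariant and induction only make rigorous what the paper leaves implicit; there is no gap and no divergence in approach.
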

That is, in \cref{fig:Lazy}, once $(PV)[\step{S}{t}]$ is fixed, we can update the $i$th row independently for each $i \in \overline{\step{S}{t}}$.
In the words of submodular function maximization, we can compute
$f_i\left(\step{S}{t}\right)$ if $f_{j_{t^\prime}}\left(\step{S}{t^{\prime}-1}\right)$ and $f_{i}\left(\step{S}{t^{\prime}-1}\right)$ for $t^{\prime}\le t$ are available.
This enables us to apply the idea of lazy updates to \textsc{FastGreedy}.

\cref{alg:LazyMapInf} describes our \textsc{LazyFastGreedy}, and \cref{fig:Lazy} illustrates how entries in $V$ are updated.
In each iteration, it picks the most promising element $i$ in \cref{line:lazyfast-argmax}, as with \textsc{LazyGreedy}.
Then, it calls \textsc{UpdateRow} to compute the entries of $V[i, S]$ via backward substitution.
Once $V[i, S]$ is filled, the resulting $d_i$ computed in \cref{line:updaterow-compute-di} satisfies $2\log d_i = f_i(S)$ by \cref{prop:rel_bet_Gre_Chol}, thus obtaining the latest marginal gain of $i$.
\cref{line:lazyfast-if-max,line:lazyfast-jsp,line:lazyfast-add-to-S} check whether the latest $f_i(S)$ is still the largest among the (old) marginal gains of $i^\prime \in \overline{S}$; if so, $j_{|S|+1} = i$ is added to $S$.
Note that the deferred updates of $d_{i^\prime}$ ($i^\prime \in \overline{S} \setminus \set*{i}$) do not matter when deciding whether to add $i$ to $S$ due to the submodularity.
Here, $\bm{d} = (d_1, \dotsc, d_n)$ plays the role of upper bounds $(\rho_1,\dots,\rho_n)$ of \textsc{LazyGreedy} and is maintained by a priority queue.
\textsc{LazyFastGreedy} thus finds a greedy solution by exactly mimicking the behavior of \textsc{LazyGreedy} while computing marginal gains efficiently as with \textsc{FastGreedy}.

The vector ${\bm u} = (u_1,\dots,u_n)$ keeps track of when the upper bounds are last updated; specifically, each $u_i \in \set*{0,1,\dots,k-1}$ indicates that the upper bound $d_i$ is last updated with respect to $\set*{j_1,\dots,j_{u_i}}$, i.e., $2\log d_i = f_i(\set*{j_1,\dots,j_{u_i}})$.
Since \textsc{UpdateRow} starts to fill $V[i, S]$ from the $(u_i + 1)$st entry, no off-diagonals $V_{i, j_t}$ are computed more than once in \cref{line:updaterow-compute-Vijt}.
Thus, the complexity of \cref{alg:LazyMapInf} depends on the number of computed off-diagonals.
We denote it by $U = \sum_{i\in[n]}u_i$ with $u_i$ values at the end of \cref{alg:LazyMapInf}.
The $U$ value changes in the range of $\left[ k(k-1)/2, (k-1)(n-k/2) \right]$ depending on how well the lazy update works and affects the overall time complexity as follows.

\begin{algorithm}[tb]
  \caption{\textsc{LazyFastGreedy} for cardinality-constrained DPP MAP inference}\label{alg:LazyMapInf}
  \begin{algorithmic}[1]
    \State{$V\gets O, \bm{d} \gets {\left(\sqrt{L_{i,i}}\right)}_{i \in [n]}, \bm{u} \gets {\bm 0}, S\gets \emptyset$}\Comment{${\bm d}$ is maintained by a priority queue}
   \While{$|S|<k$}
    \State{Take $i \in \argmax_{i^\prime\in \overline{S}} d_{i'}$} \Comment{Terminate if ${d}_{i} \le 1$ (i.e., $f_{i}(S) \le 0$)} {\label[line]{line:lazyfast-argmax}}
    \State \Call{UpdateRow}{$V, \bm{d}, \bm{u}; i, S, L$}\Comment{Nothing is done if $|S| = 0$}
    \If {$d_{i}\geq\max_{i^\prime\in \overline{S}} d_{i^\prime}$} {\label[line]{line:lazyfast-if-max}}
    \Comment{Otherwise insert $d_i$ into the priority queue}
    \State $j_{|S|+1}\leftarrow i$ {\label[line]{line:lazyfast-jsp}}
    \State $S\leftarrow S\cup\{j_{|S|+1}\}$ {\label[line]{line:lazyfast-add-to-S}}
    \EndIf
   \EndWhile
  \State \Return $S$
  \item[]
  \Function{UpdateRow}{$V, \bm{d}, \bm{u}; i, S, L$}
  \Comment{$S = \set*{j_1, j_2,\dots,j_{|S|}}$}
  \For{$t = u_i+1, u_i+2,\dots, |S|$}
    \State $V_{i,{j}_{t}} \leftarrow (L_{i,{j}_{t}}-\langle V[i, \step{S}{t-1}], V[j_t, \step{S}{t-1}]\rangle)/d_{j_t}$
    \Comment{$\step{S}{t-1} = \set*{j_1,\dots,j_{t-1}}$} {\label[line]{line:updaterow-compute-Vijt}}
    \State $d_i \leftarrow \sqrt{{d_i}^2-{V_{i,{j}_t}}^2}$ {\label[line]{line:updaterow-compute-di}}
  \EndFor
  \State{$u_i\leftarrow |S|$}\Comment{This line is not needed in \cref{alg:DoubleGreedy}}
  \EndFunction
  \end{algorithmic}
\end{algorithm}

\begin{thm}\label{thm:LazyGreedyCorrectness_Complexity}
  \Cref{alg:LazyMapInf} returns a greedy solution in $\Ord(nd + U(d+\log n))$ time.
  If the lazy update works best and worst, it runs in $\Ord((n + k^2)d)$ and $\Ord(kn(d + \log n))$ time, respectively.
\end{thm}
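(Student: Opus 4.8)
The plan is to prove correctness and the running-time bounds separately. \textbf{Correctness.} First I would maintain the loop invariant that, for every $i\in\overline S$, the priority-queue key satisfies $2\log d_i = f_i(\step{S}{u_i})$ with $\step{S}{u_i}=\set*{j_1,\dots,j_{u_i}}$; this is the ``lazy'' upper bound. The invariant holds initially ($u_i=0$, $2\log d_i=\log L_{i,i}=f_i(\emptyset)$) and is preserved by \textsc{UpdateRow}: by \cref{obs:independency} row $i$ may be completed in isolation, and once $V[i,S]$ is filled \cref{prop:rel_bet_Gre_Chol} gives $2\log d_i=f_i(S)$, after which $u_i\gets|S|$. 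Since $\step{S}{u_i}\subseteq S$ and $f$ is submodular, diminishing returns give $f_i(\step{S}{u_i})\ge f_i(S)$, and as $x\mapsto 2\log x$ is increasing, $d_i$ is a genuine upper bound on the current marginal gain, so $\bm d$ plays the exact role of the $(\rho_i)$ in \textsc{LazyGreedy}. When \cref{line:lazyfast-argmax} extracts the element $i$ of largest key, \textsc{UpdateRow} refreshes it to $2\log d_i=f_i(S)$, and \cref{line:lazyfast-if-max} finds $d_i\ge d_{i'}$ for all $i'\in\overline S$, the upper-bound property yields $f_i(S)=2\log d_i\ge 2\log d_{i'}\ge f_{i'}(S)$, so $i\in\argmax_{i'\in\overline S}f_{i'}(S)$; adding it as $j_{|S|+1}$ reproduces a greedy step, and the output is a greedy solution.

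\textbf{Running time.} The initialization computes each $\sqrt{L_{i,i}}=\sqrt{\langle{\bm\phi}_i,{\bm\phi}_i\rangle}$ in $\Ord(d)$ time and builds the queue by heapify, for $\Ord(nd)$ total. Each execution of \cref{line:updaterow-compute-Vijt} costs $\Ord(d)$: the entry $L_{i,j_t}=\langle{\bm\phi}_i,{\bm\phi}_{j_t}\rangle$ needs $\Ord(d)$, and the inner product of the two length-$(t-1)$ rows needs $\Ord(k)=\Ord(d)$ since $k\le d$. Because \textsc{UpdateRow} resumes from column $u_i+1$ and then sets $u_i\gets|S|$, no off-diagonal is computed twice, so the number of executions of \cref{line:updaterow-compute-Vijt} over the whole run is exactly $U=\sum_i u_i$, contributing $\Ord(Ud)$.

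It remains to bound the number $R$ of while-loop iterations, each costing $\Ord(\log n)$ for the queue. At most $k$ iterations are \emph{successful} (reaching \cref{line:lazyfast-add-to-S}). For the rest I would argue that every \emph{unsuccessful} iteration computes at least one off-diagonal: if \textsc{UpdateRow} did nothing, then $u_i=|S|$ already held, so $2\log d_i=f_i(S)$ was exact; but $i$ had just been extracted as the maximum key, so $d_i\ge\max_{i'\in\overline S}d_{i'}$, forcing the iteration to be successful. Hence the unsuccessful iterations number at most $U$, giving $R\le k+U$ and queue cost $\Ord((k+U)\log n)$. Summing yields $\Ord(nd+Ud+(k+U)\log n)$; since $k\log n=\Ord(nd+U\log n)$ (using $U\ge k(k-1)/2\ge k$ when $k\ge 3$, and $k\log n=\Ord(n)$ otherwise), this simplifies to $\Ord(nd+U(d+\log n))$. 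I expect \textbf{this iteration-counting step to be the main obstacle}, because $U$ bounds the off-diagonal work but not, a priori, how often the queue is touched; the crux is precisely that a do-nothing \textsc{UpdateRow} cannot occur in an unsuccessful iteration.

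\textbf{Extremes of $U$.} Finally I would substitute the two endpoints. In the best case the lazy rule never forces the algorithm to touch a row it does not select, so only rows $j_1,\dots,j_k$ are filled, each $j_t$ up to column $t-1$; then $U=\sum_{t=1}^{k}(t-1)=k(k-1)/2$ and the bound becomes $\Ord(nd+k^2(d+\log n))=\Ord((n+k^2)d)$ after absorbing $k^2\log n$ via the same comparison of $d$ and $\log n$. In the worst case the algorithm degenerates to \textsc{FastGreedy}, filling every row $i\in\overline{\step{S}{t}}$ at each step $t<k$, i.e.\ $\sum_{t=1}^{k-1}(n-t)=(k-1)(n-k/2)=\Theta(kn)$ off-diagonals, so $U(d+\log n)=\Ord(kn(d+\log n))$ dominates $nd$ and gives $\Ord(kn(d+\log n))$. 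Checking that $k(k-1)/2$ and $(k-1)(n-k/2)$ are indeed the smallest and largest attainable values of $U$ is then a routine counting argument.
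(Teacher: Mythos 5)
Your proposal is correct and follows essentially the same route as the paper's proof: the same decomposition into $\Ord(nd)$ initialization, $\Ord(Ud)$ for the off-diagonals (each computed at most once since \textsc{UpdateRow} resumes from column $u_i+1$), and priority-queue cost charged to iterations that compute at least one off-diagonal, followed by substituting $U=k(k-1)/2$ and $U=(k-1)(n-k/2)$ for the two extremes. You merely spell out more explicitly what the paper leaves to the surrounding prose (the submodularity-based upper-bound invariant for correctness, and the observation that a do-nothing \textsc{UpdateRow} forces a successful iteration), so there is no substantive difference.
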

\begin{proof}
  \Cref{alg:LazyMapInf} returns a greedy solution as explained above.
  We below discuss the running time.

  At the beginning, we need $\Ord(nd)$ time to compute $L_{i,i} = \mleft\langle {\bm \phi_{i}}, {\bm \phi_{i}}\mright\rangle$ for $i=1,\dots,n$.
  In \textsc{UpdateRow}, an access to $L_{i, j_t}$ in \cref{line:updaterow-compute-Vijt} takes $\Ord(d)$ time, and the inner product takes $\Ord(k)$ ($\lesssim \Ord(d)$) time.
  This computation is done $U$ times, and thus the total computation time caused by \textsc{UpdateRow} is $\Ord(Ud)$.
  In \cref{line:lazyfast-if-max}, we need $\Ord(\log n)$ time to update the priority queue if $d_{i} < \max_{i^\prime\in \overline{S}} d_{i^\prime}$, which can hold only when at least one off-diagonal is computed in \textsc{UpdateRow}.
  Therefore, \cref{line:lazyfast-if-max} takes $\Ord(U\log n)$ time in total.
  Thus, the overall time complexity is $\Ord(nd + U(d+\log n))$.

  Let $S$ be the output of \cref{alg:LazyMapInf} and $P$ a permutation matrix such that $(PV)[S]$ is lower triangular.
  In the best case, \textsc{UpdateRow} is called up to $k$ times and $U = k(k-1)/2$ off-diagonals of $(PV)[S]$ are computed.
  Moreover, updates of the priority queue are done only up to $k$ times, taking $\Ord(k \log n)$ ($\lesssim \Ord(nd)$) time in total.
  Thus, it runs in $\Ord((n + k^2)d)$ time.
  In the worst case, \cref{alg:LazyMapInf} calculates the off-diagonals of $(PV)[S]$ and all the entries of $V[\overline{S}, S \setminus \set{j_k}]$; the total number of those entries is $U = k(k-1)/2 + (k-1)(n-k) = (k-1)(n-k/2)$.
  Hence, it takes $\Ord(kn(d + \log n))$ time.
\end{proof}

The best-case time complexity is better than $\Ord(knd)$ of \textsc{FastGreedy} if $k = \mathrm{o}(n)$.
Even in the worse case, it is as fast as \textsc{FastGreedy} if $d = \Omega(\log n)$.
Note that both $k = \mathrm{o}(n)$ and $d = \Omega(\log n)$ are true in most practical situations.
Experiments in \cref{section:experiments} demonstrate that \textsc{LazyFastGreedy} can run much faster than \textsc{FastGreedy} in practice.

\subsection{Extension to random, stochastic, and interlace greedy algorithms}
The core idea of \textsc{LazyFastGreedy} can be used for speeding up other greedy-type algorithms:
\textsc{RandomGreedy}~\citep{Buchbinder2014-il}, \textsc{StochasticGreedy}~\citep{Mirzasoleiman2015-qo,Sakaue2020-ap}, and \textsc{InterlaceGreedy}~\citep{Kuhnle2019-io}, which enjoy $1/\ee$-, $1/4$-, and $1/4$-approximation guarantees, respectively, for non-monotone submodular function maximization with a cardinality constraint.
Note that the guarantees for the non-monotone case are essential in DPP MAP inference since the log-determinant function is non-monotone in general.
Due to the space limitation, we present the details of those extensions in \cref{app-section:extension}.

\section{Fast double greedy algorithm for unconstrained DPP MAP inference}\label{section:double-greedy}
This section discusses unconstrained DPP MAP inference with a kernel matrix $L = B^\top B$, where $B \in \R^{d \times n}$.
In this setting, if $f(S) = \log \det L[S]$ is monotone, $S = [n]$ is a trivial optimal solution.
Thus, we suppose $f$ to be non-monotone.
We also assume $L$ to be positive definite since the algorithm of \citep{Buchbinder2015-rc} discussed below requires $f(S) > -\infty$ for any $S \subseteq [n]$.
Note that this implies $d \ge n$.

A famous algorithm for unconstrained submodular function maximization is \textsc{DoubleGreedy} \citep{Buchbinder2015-rc}, a randomized $1/2$-approximation algorithm.
Although it calls an evaluation oracle only $\Ord(n)$ times, its naive implementation is too costly for large DPP MAP inference instances since computing the log-determinant function value takes $\Ord(n^{\omega})$ time, which will lead to the total time of $\Ord(n^{\omega-1}d + n^{\omega+1})$.
We below extend the idea of \textsc{FastGreedy} \citep{Chen2018-aa} to \textsc{DoubleGreedy} and obtain its $\Ord(n^{\omega-1}d + n^3)$-time implementation for unconstrained DPP MAP inference.

\textsc{DoubleGreedy} maintains two subsets $S$ and $T$, which are initially set to $S=\emptyset$ and $T=[n]$.
For $i = 1,\dots,n$, it computes $a_i = \max\set*{f_i(S), 0}$ and $b_i = \max\set*{-f_i(T\setminus\set*{i}), 0}$, and then either adds $i$ to $S$ with probability $a_i/(a_i + b_i)$ or removes $i$ from $T$ with probability $b_i/(a_i + b_i)$.\footnote{The algorithm thus sequentially examines all elements, and hence there is no room for the lazy update.}
Note that $T = [n]\setminus ([i]\setminus S) = \overline{[i]\setminus S}$ always holds.
Finally, it returns $S$ (or equivalently $T = \overline{[n] \setminus S} = S$).

As for the growing subset $S$, we can efficiently compute marginal gains $f_i(S)$ by incrementally updating a Cholesky factor, as with \textsc{FastGreedy}.
When it comes to the shrinking subset $T$, however, we cannot directly use the efficient incremental update for computing $-f_i(T\setminus\set*{i})$.
If we naively compute it in each step, it takes $\Ord(n^\omega)$ time, resulting in the same time complexity as the naive implementation.
Our key idea for overcoming this difficulty is to use the following Jacobi's complementary minor formula (see, e.g.,~\cite{Brualdi1983-nl}).

\begin{prop}\label{prop:Jacobi}
  Let $L\in\R^{n\times n}$ be a non-singular matrix and $I, J\subseteq [n]$ be subsets with $|I|=|J|$.
  Then, it holds that $\det L[I,J] = {(-1)}^{\sum_{i\in I} i+\sum_{j\in J} j}\det L \det {L}^{-1}[\overline{I},\overline{J}]$.
\end{prop}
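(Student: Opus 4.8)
The plan is to prove the identity by permuting the two complementary index sets into leading blocks and then invoking the block-inverse (Schur complement) formula. First I would pick permutation matrices $P$ and $Q$ that move the rows indexed by $I$ and the columns indexed by $J$ to the first $k \coloneqq |I| = |J|$ positions, keeping each group internally sorted. Writing $\tilde L = PLQ$, this places $L[I,J]$ in the top-left block and the complementary minor in the bottom-right block:
\[
  \tilde L = \begin{bmatrix} L[I,J] & L[I,\overline J] \\ L[\overline I, J] & L[\overline I, \overline J] \end{bmatrix}, \qquad \det \tilde L = \det P\,\det Q\,\det L.
\]
The essential bookkeeping is that the sign of the permutation sorting $I$ to the front equals $(-1)^{\sum_{i\in I} i - k(k+1)/2}$, and likewise for $J$; multiplying the two signs, the $k(k+1)/2$ terms cancel modulo $2$, leaving exactly $\det P\,\det Q = (-1)^{\sum_{i\in I} i + \sum_{j\in J} j}$.

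Next I would read off the complementary minor of the inverse. Since $\tilde L^{-1} = Q^{-1}L^{-1}P^{-1}$ is just $L^{-1}$ with its rows and columns reordered the same way, the bottom-right $(n-k)\times(n-k)$ block of $\tilde L^{-1}$ is precisely $L^{-1}$ restricted to the complementary indices. Applying the block-inverse formula, this block equals $(D - C A^{-1} B)^{-1}$ with $A = L[I,J]$, whose determinant is $\det A/\det \tilde L$ by the Schur determinant identity $\det \tilde L = \det A\,\det(D - CA^{-1}B)$. Combining with the sign computed above,
\[
  \det L^{-1}[\overline I, \overline J] = \frac{\det L[I,J]}{\det \tilde L} = \frac{\det L[I,J]}{(-1)^{\sum_{i\in I}i+\sum_{j\in J}j}\,\det L},
\]
which rearranges into the claimed formula, using that $(-1)^{\sum_{i\in I}i+\sum_{j\in J}j}$ is its own reciprocal.

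The hard part will be the reliance of the Schur complement on invertibility of the sub-block $A = L[I,J]$, which need not hold. I would remove this assumption by a density/continuity argument: both sides are rational functions of the entries of $L$ that are continuous on the open set $\{\det L \ne 0\}$, and the subset where additionally $\det L[I,J]\ne 0$ is the complement of a proper algebraic set, hence dense there; since the identity holds on that dense subset, it extends by continuity to all non-singular $L$. A secondary point of care is the ordering of the index sets on $L^{-1}$: the block computation naturally produces $L^{-1}$ with rows in $\overline J$ and columns in $\overline I$, and one uses the symmetry of $L$ (here $L = B^\top B$, so $L^{-1}$ is symmetric) to identify its determinant with $\det L^{-1}[\overline I,\overline J]$, matching the statement. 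Because the sorted-subset permutation signs are the only place where a careless computation would introduce a spurious factor, I would sanity-check them on the case $n=2$ before trusting the general bookkeeping.
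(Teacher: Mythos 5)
Your proof is correct, but note that the paper does not actually prove this proposition: it is quoted as a classical fact (Jacobi's complementary minor identity) with a pointer to the literature, so any self-contained argument is necessarily "a different route." Your route --- permute $I$ and $J$ to leading blocks, compute the permutation signs $(-1)^{\sum_{i\in I}i - k(k+1)/2}$ and $(-1)^{\sum_{j\in J}j - k(k+1)/2}$ whose product is $(-1)^{\sum_{i\in I}i+\sum_{j\in J}j}$, read off the bottom-right block of $\tilde L^{-1}$ as the inverse Schur complement, and extend from the dense open set where $L[I,J]$ is invertible by continuity of both sides on $\{\det L \ne 0\}$ --- is a standard and complete derivation, and each step checks out. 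You also correctly catch the one genuine subtlety: the block computation yields $\det L^{-1}[\overline J,\overline I]$, so the identity exactly as printed (with $\det L^{-1}[\overline I,\overline J]$) holds for general non-singular $L$ only after transposing one index pair, and one needs symmetry of $L$ to get the stated form; a $2\times 2$ check with $I=\{1\}$, $J=\{2\}$ confirms this. This does not affect the paper, since \cref{lem:DoubleGreedyforLogDet} only invokes the proposition with $I=J$ and with $L$ symmetric positive definite, but your flag is worth keeping. The only thing your write-up leaves implicit is that $\{\det L[I,J]\ne 0\}$ is the complement of a proper algebraic set, i.e., that some non-singular $L$ with $\det L[I,J]\ne 0$ exists (take the permutation-type matrix matching $I$ to $J$ and $\overline I$ to $\overline J$); with that one line added the density step is airtight.
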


This formula provides a lemma that enables us to compute $-f_i(T\setminus\set*{i})$ via incremental updates.

\begin{lem}\label{lem:DoubleGreedyforLogDet}
   Let $L\in\R^{n\times n}$ be positive definite.
   Define $f(S)=\log\det L[S]$ and $g(S)=\log L^{-1}[S]$ for any $S \subseteq [n]$.
   Then, $g(S\cup\{i\}) - g(S) = f(\overline{S} \setminus \{i\}) - f(\overline{S})$ holds for any $S\subseteq [n]$ and $i\in[n]$.
\end{lem}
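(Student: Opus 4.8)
The plan is to first convert the lemma into a single clean identity relating $g$ and $f$ on complementary index sets, and then extract the marginal-gain statement by subtraction. The whole argument is driven by \cref{prop:Jacobi} specialized to principal submatrices.

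The first step is to apply \cref{prop:Jacobi} with $I = J$. Because $\sum_{i \in I} i + \sum_{j \in J} j = 2\sum_{i \in I} i$ is even, the sign factor $(-1)^{\sum_{i\in I} i + \sum_{j\in J} j}$ equals $1$, and the formula collapses to $\det L[I] = \det L \cdot \det L^{-1}[\overline{I}]$ for every $I \subseteq [n]$. Setting $T = \overline{I}$ and taking logarithms---which is legitimate since $L$, and therefore $L^{-1}$, is positive definite, so every principal minor is strictly positive---yields
\[
  g(T) = f(\overline{T}) - \log\det L \qquad \text{for all } T \subseteq [n],
\]
where $\log\det L = f([n])$ is a constant independent of $T$.

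The second step is to subtract two instances of this identity. Evaluating it at $T = S \cup \{i\}$ and at $T = S$ and taking the difference cancels the constant $\log\det L$, giving $g(S\cup\{i\}) - g(S) = f(\overline{S\cup\{i\}}) - f(\overline{S})$. Rewriting $\overline{S\cup\{i\}} = \overline{S}\setminus\{i\}$ then produces exactly the claimed equality; the degenerate case $i \in S$ is handled automatically, since there both sides vanish.

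I expect no real obstacle beyond careful bookkeeping. The only point that needs a moment's attention is confirming that the sign factor in \cref{prop:Jacobi} is trivial \emph{precisely because} we restrict to principal ($I = J$) submatrices, and that positive definiteness is what guarantees all the determinants are strictly positive, so that the logarithmic identity linking $f$ and $g$ is well-defined throughout.
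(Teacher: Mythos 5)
Your proposal is correct and follows essentially the same route as the paper: both apply Jacobi's complementary minor formula (\cref{prop:Jacobi}) with $I=J$ so the sign factor is trivial, translate $\det L[I]=\det L\cdot\det L^{-1}[\overline{I}]$ into the identity $g(T)=f(\overline{T})-\log\det L$, and cancel the constant by subtracting the instances $T=S\cup\{i\}$ and $T=S$. The only difference is presentational---you isolate the intermediate identity and also note the well-definedness of the logarithms and the degenerate case $i\in S$, which the paper leaves implicit.
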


\begin{proof}
  By using \cref{prop:Jacobi}, we can prove the claim as follows:
  \begin{align}
    f(\overline{S} \setminus \{i\}) - f(\overline{S})
    &= \log\det{L[\overline{S} \setminus \{i\}]}-\log\det{L[\overline{S}]}\\
    &= \log\det L \det {L}^{-1}[[n] \setminus(\overline{S}\setminus\{i\})]-\log \det L \det {L}^{-1}[[n]\setminus\overline{S}]\\
    &= \log\det {L}^{-1}[S \cup \{i\}]-\log\det {L}^{-1}[S]\\
    &= g(S\cup\{i\}) - g(S).
    \qedhere
  \end{align}
\end{proof}

\begin{algorithm}[tb]
  \caption{\textsc{FastDoubleGreedy} for unconstrained DPP MAP inference}\label{alg:DoubleGreedy}
  \begin{algorithmic}[1]
    \State Compute $L = B^\top B$ and $L^{-1}$ {\label[line]{line:double-compute-L}}
    \State $V \gets O$, $W \gets O$, $\bm{d} \gets {\left(\sqrt{L_{i,i}}\right)}_{i \in [n]}$, $\bm{e} \gets {\Big(\sqrt{{\left(L^{-1}\right)}_{i,i}}\Big)}_{i \in [n]}, S\gets\emptyset$ {\label[line]{line:double-init}}
    \For {$i = 1$ to $n$} {\label[line]{line:double-for}}
      \State\Call{\textsc{UpdateRow}}{$V, \bm{d}, \bm{0}; i, S, L$}\Comment{$S$ is sorted in order of $1,2,\dots,n$} {\label[line]{line:double-updaterow1}}
      \State\Call{\textsc{UpdateRow}}{$W, \bm{e}, \bm{0}; i, [i] \setminus S, L^{-1}$}\Comment{$[i] \setminus S$ is sorted in order of $1,2,\dots,n$} {\label[line]{line:double-updaterow2}}
      \State $a_i\gets \max\set*{2\log d_i, 0}$, $b_i\gets \max\set*{2\log e_i, 0}$ {\label[line]{line:double-compute-ab}}
      \State $S\leftarrow S\cup\{i\}$ w.p. $a_i/(a_i + b_i)$ {\label[line]{line:double-add-S}} \Comment{Implicity update $T = \overline{[n]\setminus S}$ w.p. $b_i/(a_i + b_i)$}
    \EndFor
  \State \Return $S$
  \end{algorithmic}
\end{algorithm}

\Cref{lem:DoubleGreedyforLogDet} implies $g(([i]\setminus S) \cup \set*{i}) - g([i]\setminus S) = f(\overline{([i]\setminus S)} \setminus \set*{i}) - f(\overline{[i]\setminus S}) = -f_i(T)$.
Therefore, by computing $L = B^\top B$ and $L^{-1}$ in $\Ord(n^{\omega-1}d)$ ($\gtrsim \Ord(n^\omega)$) time in advance, we can compute $-f_i(T\setminus\set*{i})$ in each step by incrementally updating a Cholesky factor of size $|[i]\setminus S|$.

\Cref{alg:DoubleGreedy} presents our \textsc{FastDoubleGreedy} based on this idea.
In each $i$th step, the first and second calls to \textsc{UpdateRow}, defined in \cref{alg:LazyMapInf}, fill $V[i, S]$ and $W[i, [i]\setminus S]$, respectively.
Hence, \cref{prop:rel_bet_Gre_Chol} and \cref{lem:DoubleGreedyforLogDet} imply $2\log d_i = f_i(S)$ and $2\log e_i = g_i([i]\setminus S) = -f_i(T)$.
Thus, \cref{alg:DoubleGreedy} exactly mimics the behavior of \textsc{DoubleGreedy} while efficiently computing marginal gains via incremental updates of Cholesky factors.
In \textsc{UpdateRow} defined in \cref{alg:LazyMapInf}, since $L$ and $L^{-1}$ are already computed, each $V_{i, j_t}$ is calculated in $\Ord(n)$ time; therefore, a single call to \textsc{UpdateRow} takes $\Ord(n^2)$ time.
Since \textsc{UpdateRow} is called $2n$ times, \cref{line:double-init,line:double-for,line:double-updaterow1,line:double-updaterow2,line:double-compute-ab,line:double-add-S} construct a solution in $\Ord(n^3)$ time.
In total, \cref{alg:DoubleGreedy} runs in $\Ord(n^{\omega-1}d + n^3)$ time.

\section{Experiments}\label{section:experiments}
We evaluate the effectiveness of our acceleration techniques on synthetic and real-world datasets.
\Cref{subsection:experiment-greedy} examines speed-ups of \textsc{Greedy} for cardinality-constrained DPP MAP inference, and \cref{subsection:experiment-double-greedy} focuses on \textsc{DoubleGreedy} for the unconstrained setting.
We present experimental results on \textsc{RandomGreedy}, \textsc{StochasticGreedy}, and \textsc{InterlaceGreedy} in \cref{app-section:experiments-variants}.

The algorithms are implemented in C++ with library Eigen~3.4.0 for matrix computations.
Experiments are conducted using a compiler GCC~10.2.0 on a computer with \SI{3.8}{GHz} Intel Xeon Gold CPU and \SI{800}{GB} RAM.

We use synthetic and two real-world datasets, Netflix Prize~\cite{Bennett2007-gr} and MovieLens~\cite{Harper2015-mm}.
Each dataset provides a matrix $B \in \R^{d \times n}$ consisting of column vectors ${\bm \phi_1}, {\bm \phi_2}, \dots, {\bm \phi_n} \in \R^d$ of $n$ items, which defines an $n \times n$ kernel matrix $L = B^\top B$.
Below we explain the item vectors of each dataset.

\paragraph{Synthetic datasets.}
We use the setting of~\cite{Gillenwater2012-uw}.
Each entry of ${\bm \phi}_i \in \R^d$ is independently drawn from the standard normal distribution, $\phi_{ij}\sim\mathcal{N}(0,1)$.
As a result, the kernel matrix $L$ conforms to a Wishart distribution with $n$ degrees of freedom and an identity covariance matrix, i.e., $L\sim\mathcal{W}(I,n,n)$.
We consider various $n$ values in the experiments below, and we always set the vector length $d$ to $n$.

\paragraph{Real-world datasets.}
Both Netflix Prize and MovieLens datasets contain users' ratings of movies from one to five, where we regard a movie as an item.
Following \citep{Chen2018-aa}, we binarize the ratings based on whether it is greater than or equal to four.
After that, we eliminate movies that result in all-zero vectors and users who result in all-zero ratings since those are redundant.
Consequently, the Netflix Prize dataset has $n = 17770$ movies and $d = 478615$ users with $56919190$ ratings; the MovieLens dataset has $n = 40858$ movies and $d = 162342$ users with $12452811$ ratings.

\paragraph{$B$- and $L$-input settings.}
It is important to care about whether $L = B^\top B$ is computed in advance or not.
In practice, a matrix $B \in \R^{d \times n}$ of item vectors is often given.
Then, computing $L = B^\top B$ in advance takes $\Ord(\min \set{n^{\omega-1}d, n^2d^{\omega-2}})$ time, which we should avoid when $n$ is large since the running time of \textsc{LazyFastGreedy} (and \textsc{FastGreedy}) increases only linearly in $n$.
On the other hand, we are sometimes given a pre-computed kernel matrix $L$, and we can access $L_{i,j}$ in $\Ord(1)$ time.

We below consider both settings, called $B$- and $L$-input settings, respectively.
In the $L$-input setting, we exclude the time to compute $L = B^\top B$ from consideration.
Under this condition, \textsc{FastGreedy} takes $\Ord(k^2n)$ time and \textsc{LazyFastGreedy} does $\Ord(n + U(k + \log n))$, where the first $\Ord(n)$ term is for constructing a priority queue.
By similar reasoning to that in the proof of \cref{thm:LazyGreedyCorrectness_Complexity}, it runs in $\Ord(n + k^3 + k\log n)$ and $\Ord(kn(k + \log n))$ time if the lazy update works best and worst, respectively.

\subsection{Greedy algorithm for cardinality-constrained DPP MAP inference}\label{subsection:experiment-greedy}

\begin{figure}[tb]
  \begin{minipage}[b]{0.333\textwidth}
    \centering
    \includegraphics[width=\linewidth]{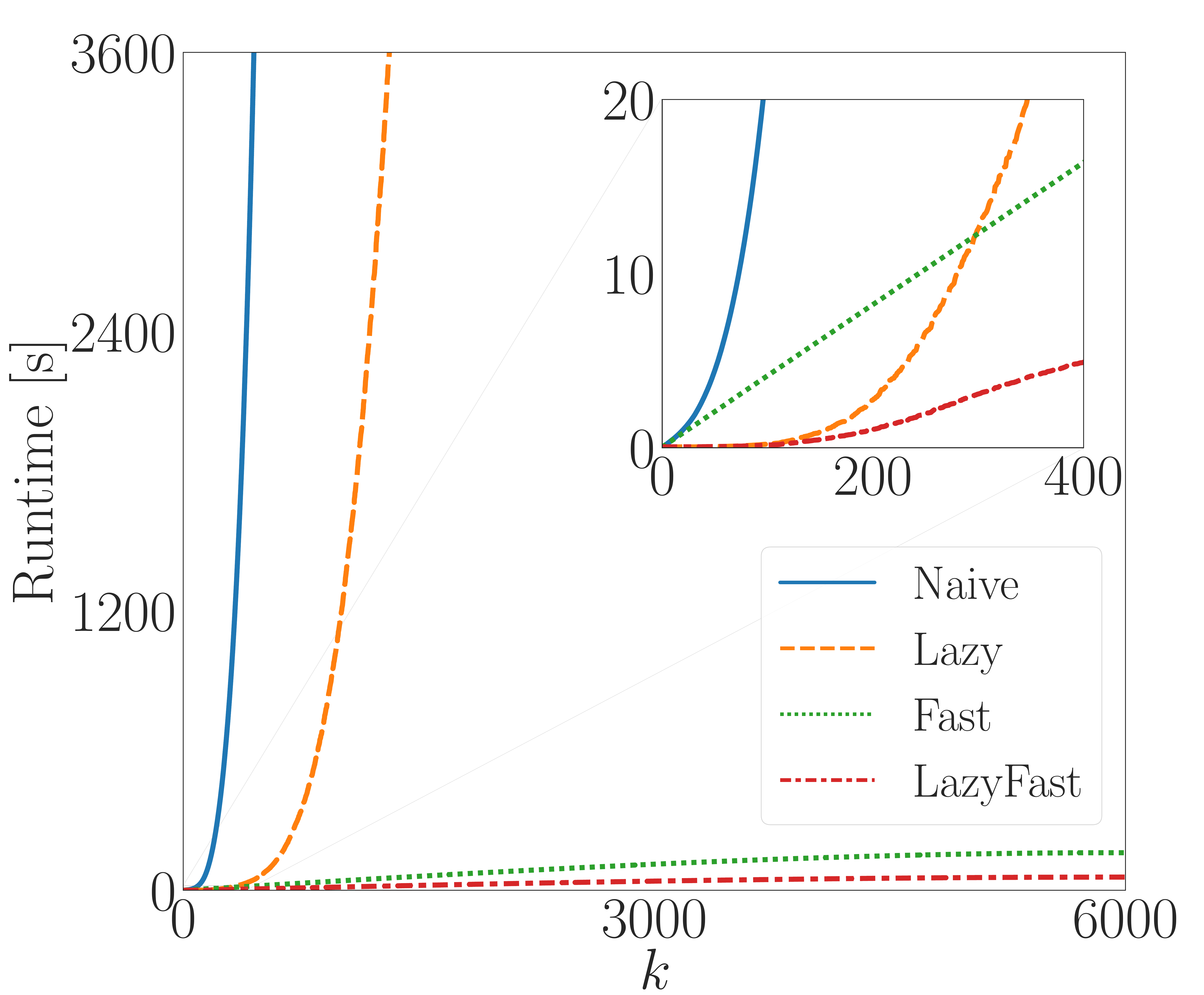}
    \subcaption{$n=6000$, $B$-input, Runtime}\label{subfig:synth-n-b-time}
  \end{minipage}%
  \begin{minipage}[b]{0.333\textwidth}
    \centering
    \includegraphics[width=\linewidth]{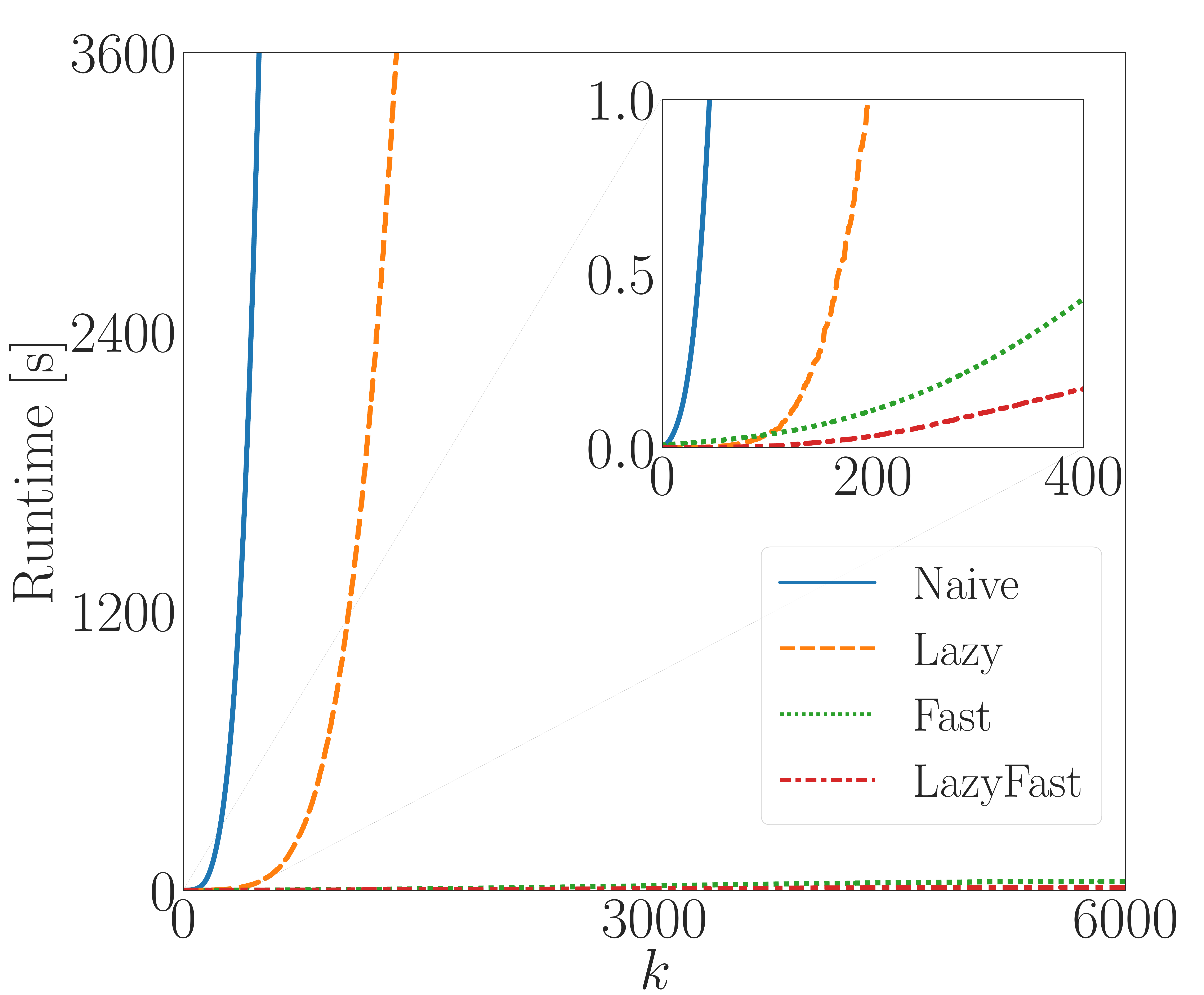}
    \subcaption{$n=6000$, $L$-input, Runtime}\label{subfig:synth-n-l-time}
  \end{minipage}%
  \begin{minipage}[b]{0.333\textwidth}
    \centering
    \includegraphics[width=\linewidth]{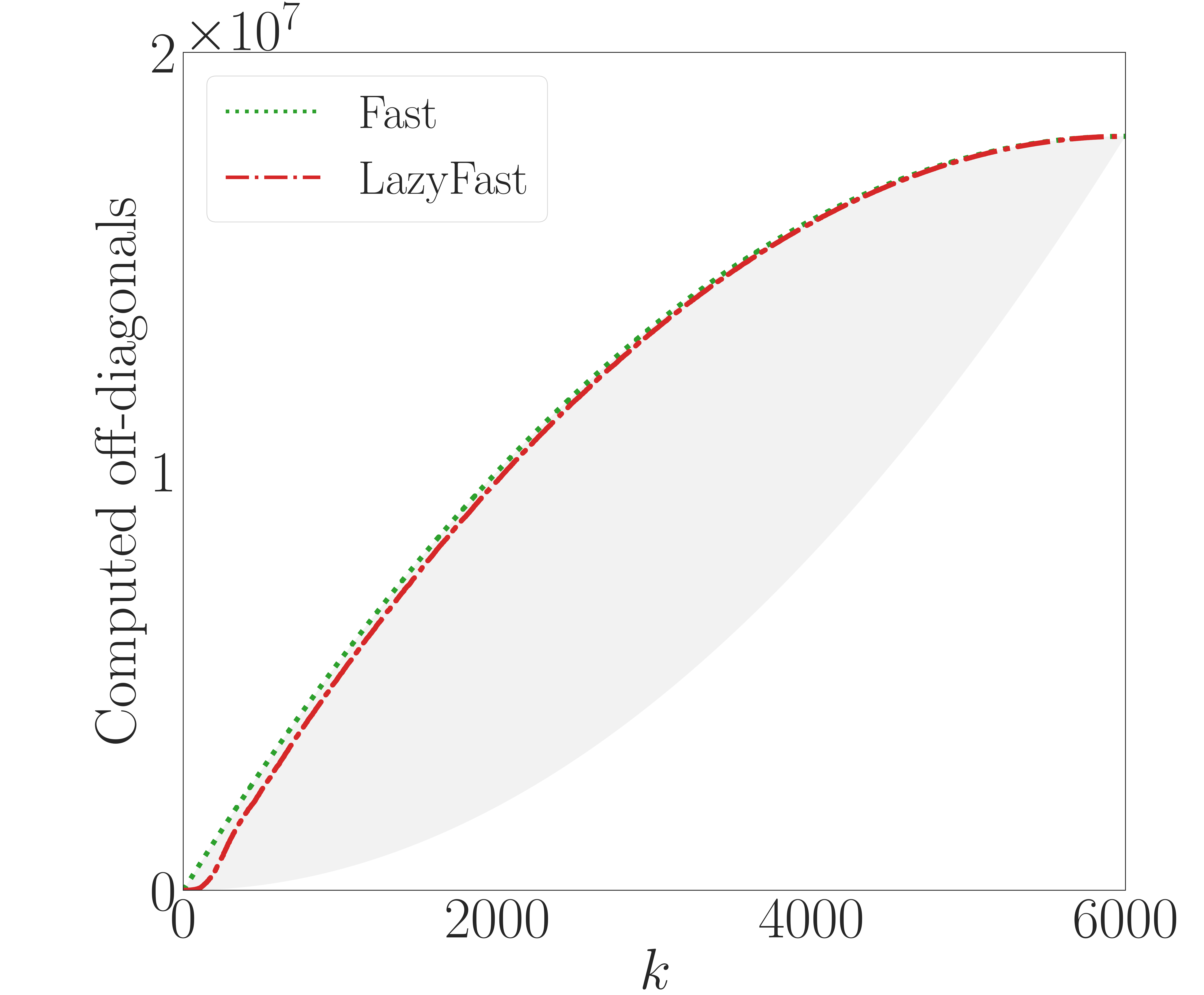}
    \subcaption{$n=6000$, Off-diagonals}\label{subfig:synth-n-offdiag}
  \end{minipage}
  \begin{minipage}[b]{0.333\textwidth}
    \vspace{10pt}
    \centering
    \includegraphics[width=\linewidth]{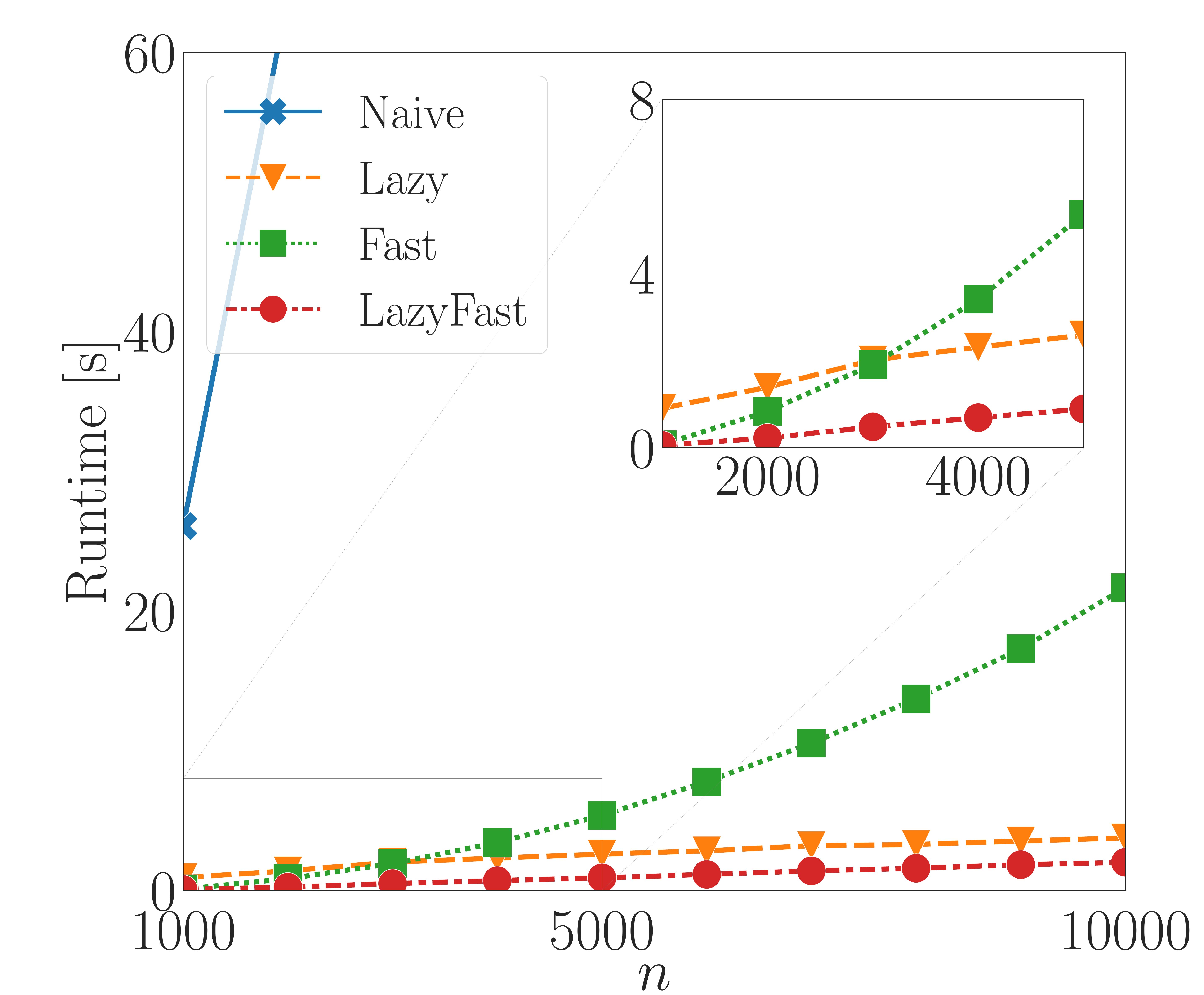}
    \subcaption{$k=200$, $B$-input, Runtime}\label{subfig:synth-k-b-time}
  \end{minipage}%
  \begin{minipage}[b]{0.333\textwidth}
    \vspace{10pt}
    \centering
    \includegraphics[width=\linewidth]{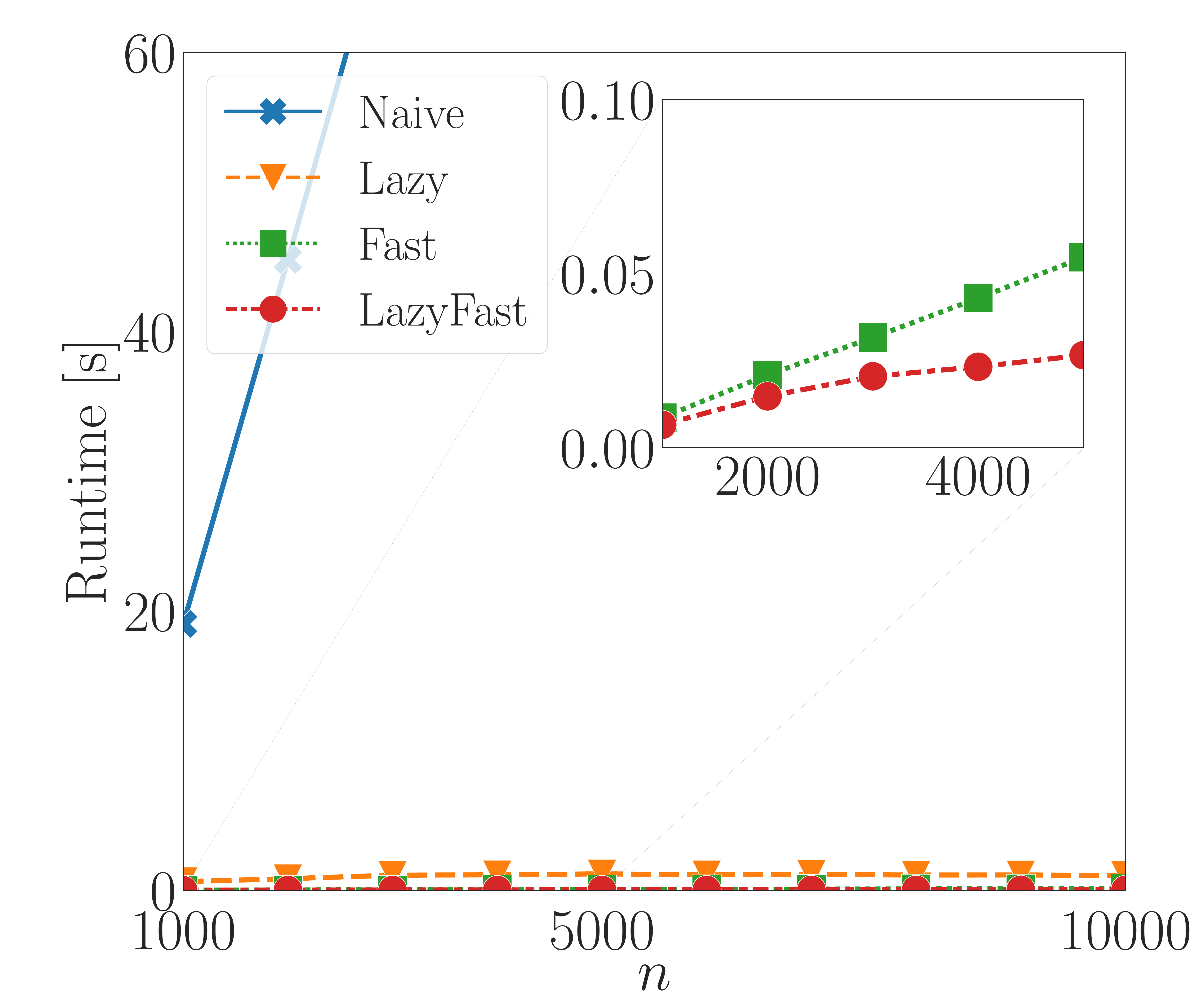}
    \subcaption{$k=200$, $L$-input, Runtime}\label{subfig:synth-k-l-time}
  \end{minipage}%
  \begin{minipage}[b]{0.333\textwidth}
    \vspace{10pt}
    \centering
    \includegraphics[width=\linewidth]{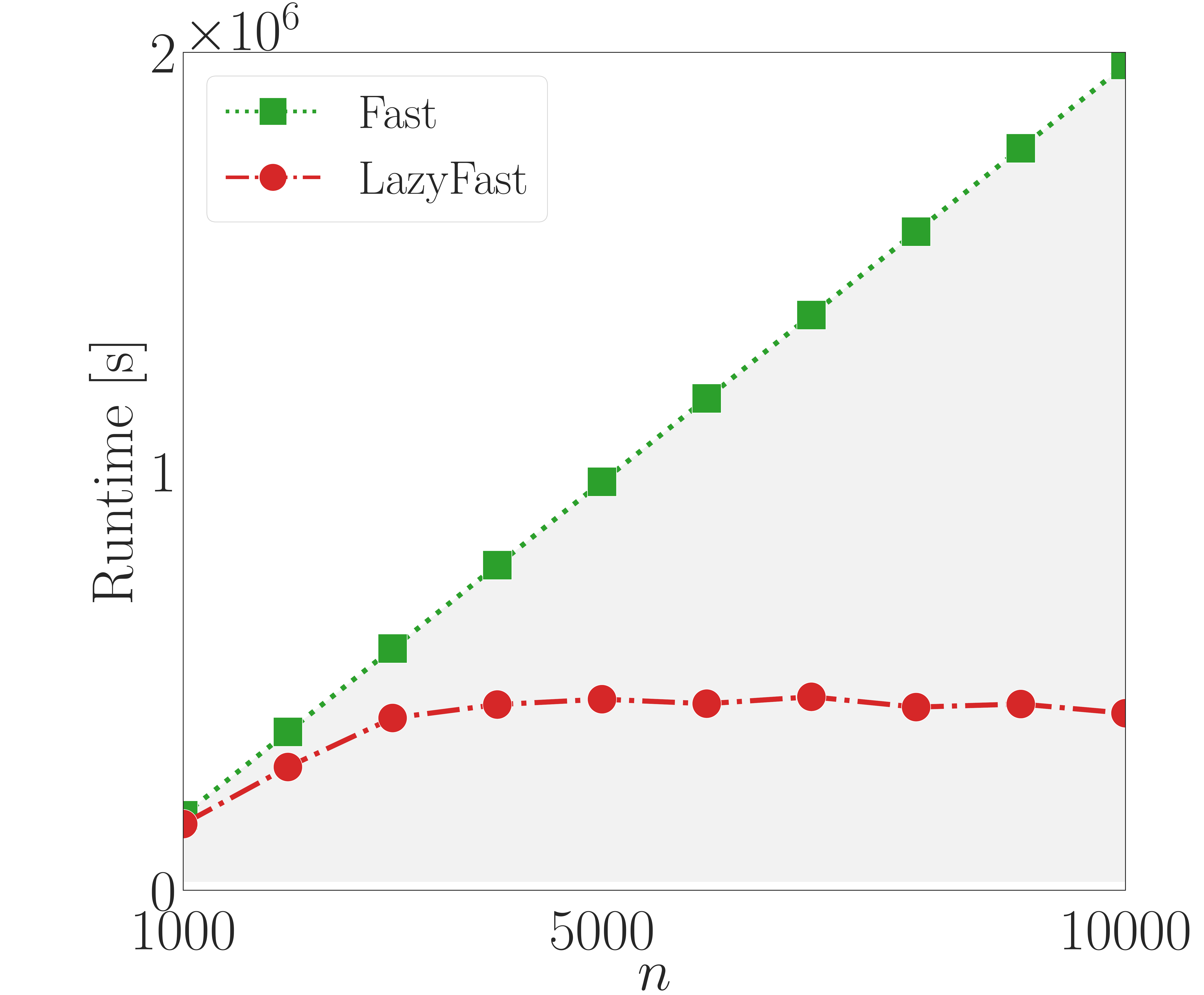}
    \subcaption{$k=200$, Off-diagonals}\label{subfig:synth-k-offdiag}
  \end{minipage}
  \caption{Results on synthetic datasets.
  In the four runtime figures, enlarged views of lower left parts are shown for visibility.
  In \cref{subfig:synth-n-offdiag,subfig:synth-k-offdiag}, the gray band indicates the range of the possible number of computed off-diagonals: $\left[k(k-1)/2, (k-1)(n-k/2)\right]$.}\label{fig:synth-greedy}
\end{figure}

\begin{figure}[tb]
  \begin{minipage}[b]{.333\linewidth}
    \centering
    \includegraphics[width=\textwidth]{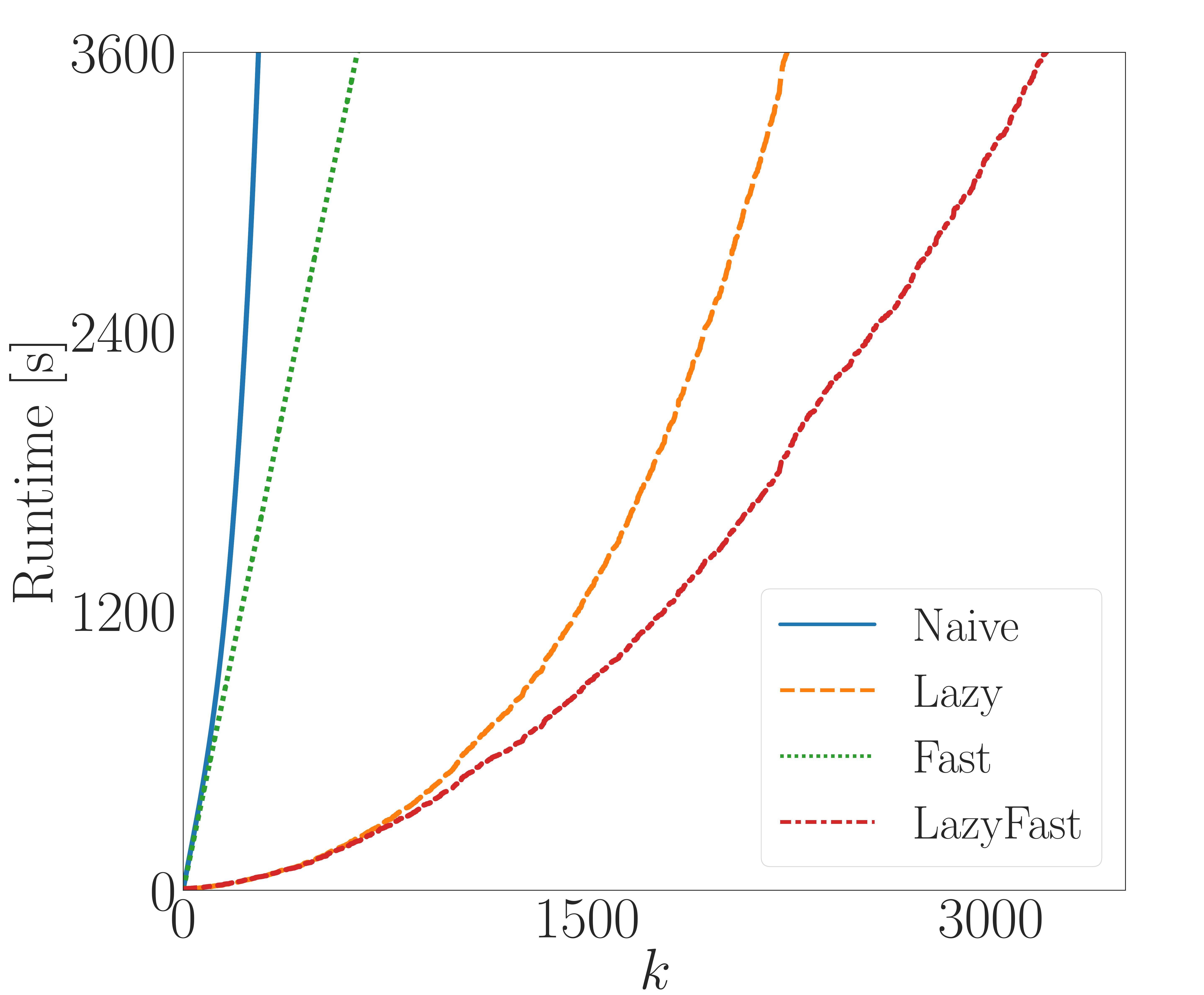}
    \subcaption{Netflix, $B$-input, Runtime}
  \end{minipage}%
  \begin{minipage}[b]{.333\linewidth}
    \centering
    \includegraphics[width=\textwidth]{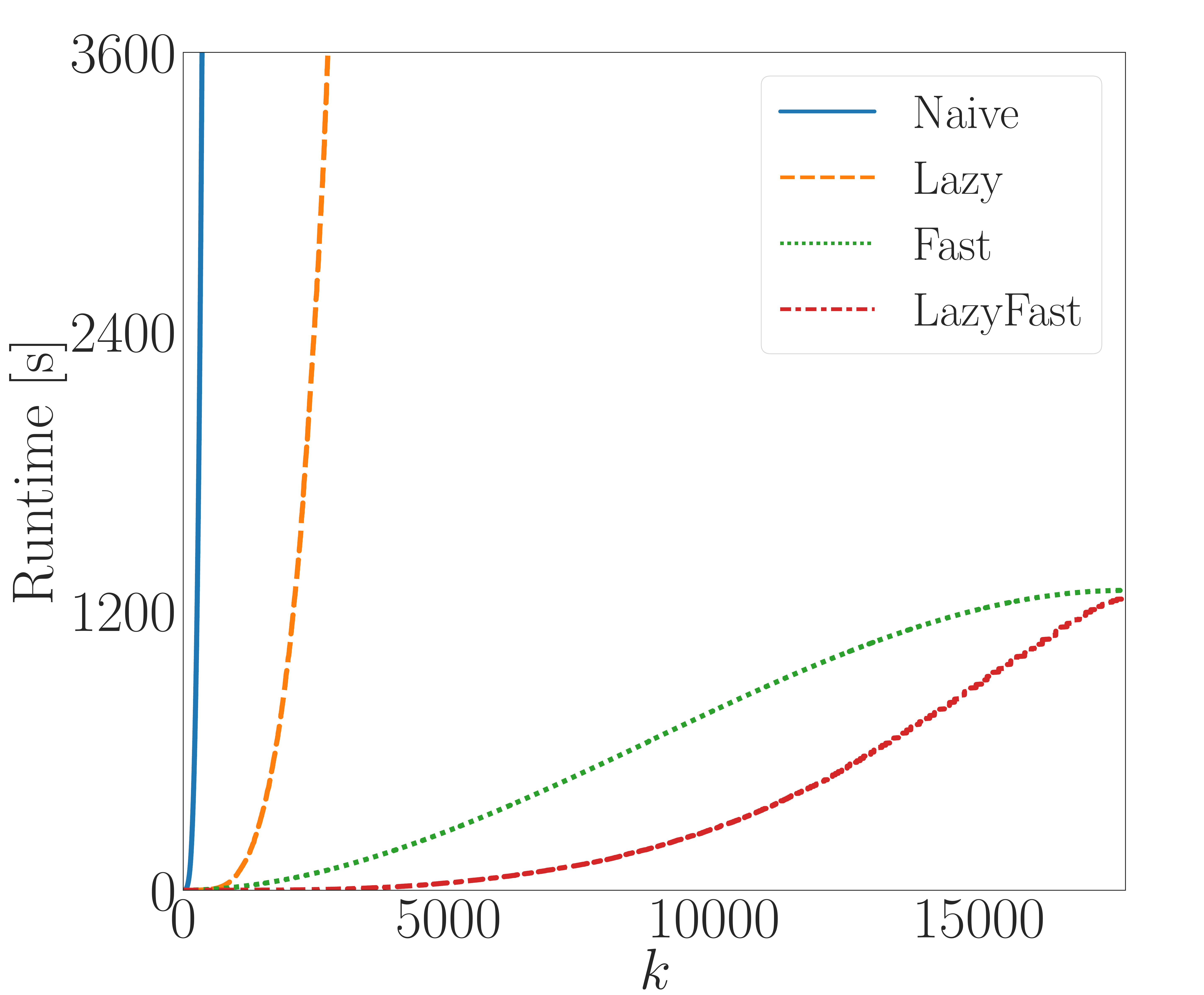}
    \subcaption{Netflix, $L$-input, Runtime}
  \end{minipage}%
  \begin{minipage}[b]{.333\linewidth}
    \centering
    \includegraphics[width=\textwidth]{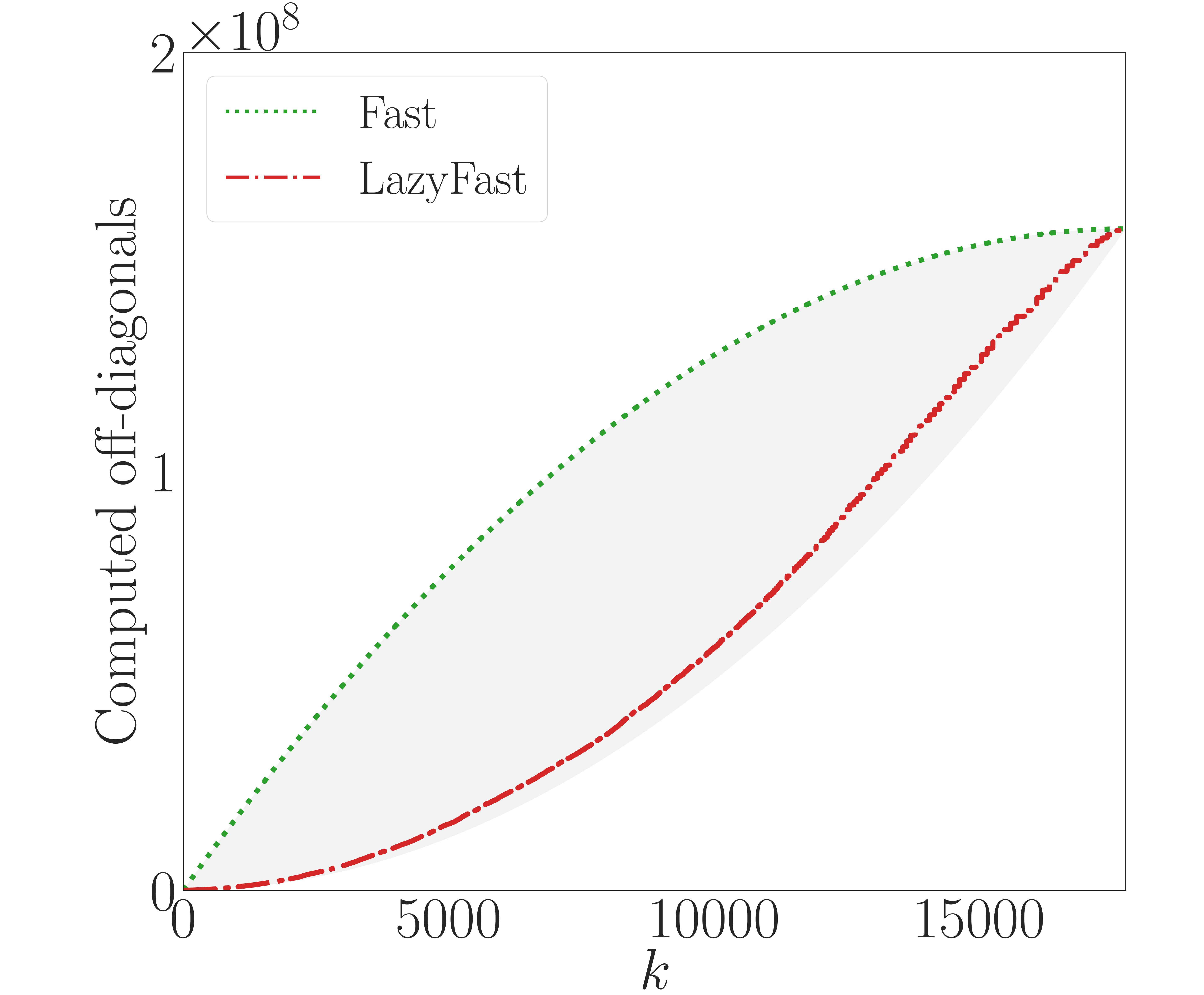}
    \subcaption{Netflix, Off-diagonals}\label{subfig:netflix-offdiag}
  \end{minipage}
  \begin{minipage}[b]{.333\linewidth}
    \vspace{10pt}
    \centering
    \includegraphics[width=\textwidth]{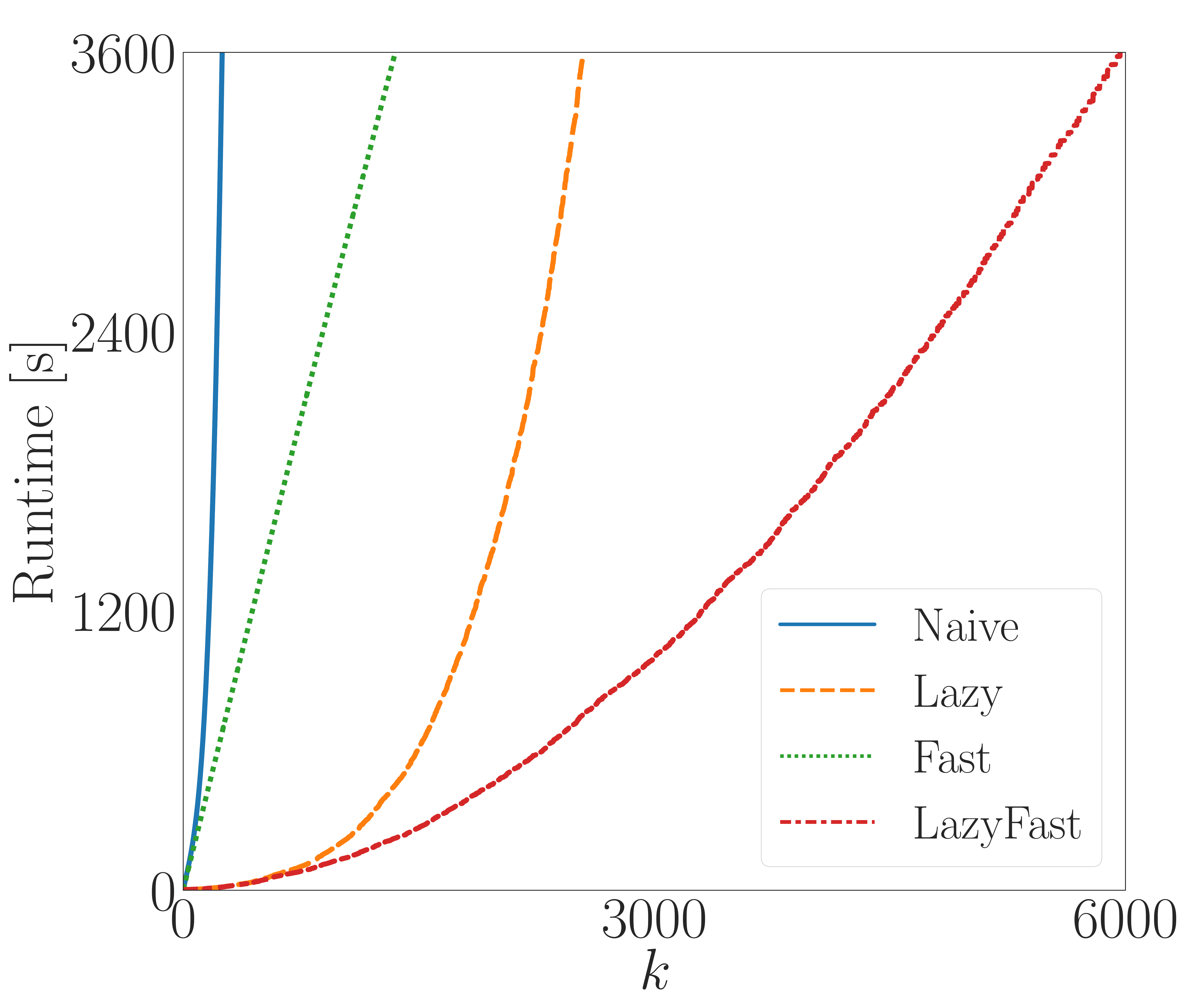}
    \subcaption{MovieLens, $B$-input, Runtime}
  \end{minipage}%
  \begin{minipage}[b]{.333\linewidth}
    \vspace{10pt}
    \centering
    \includegraphics[width=\textwidth]{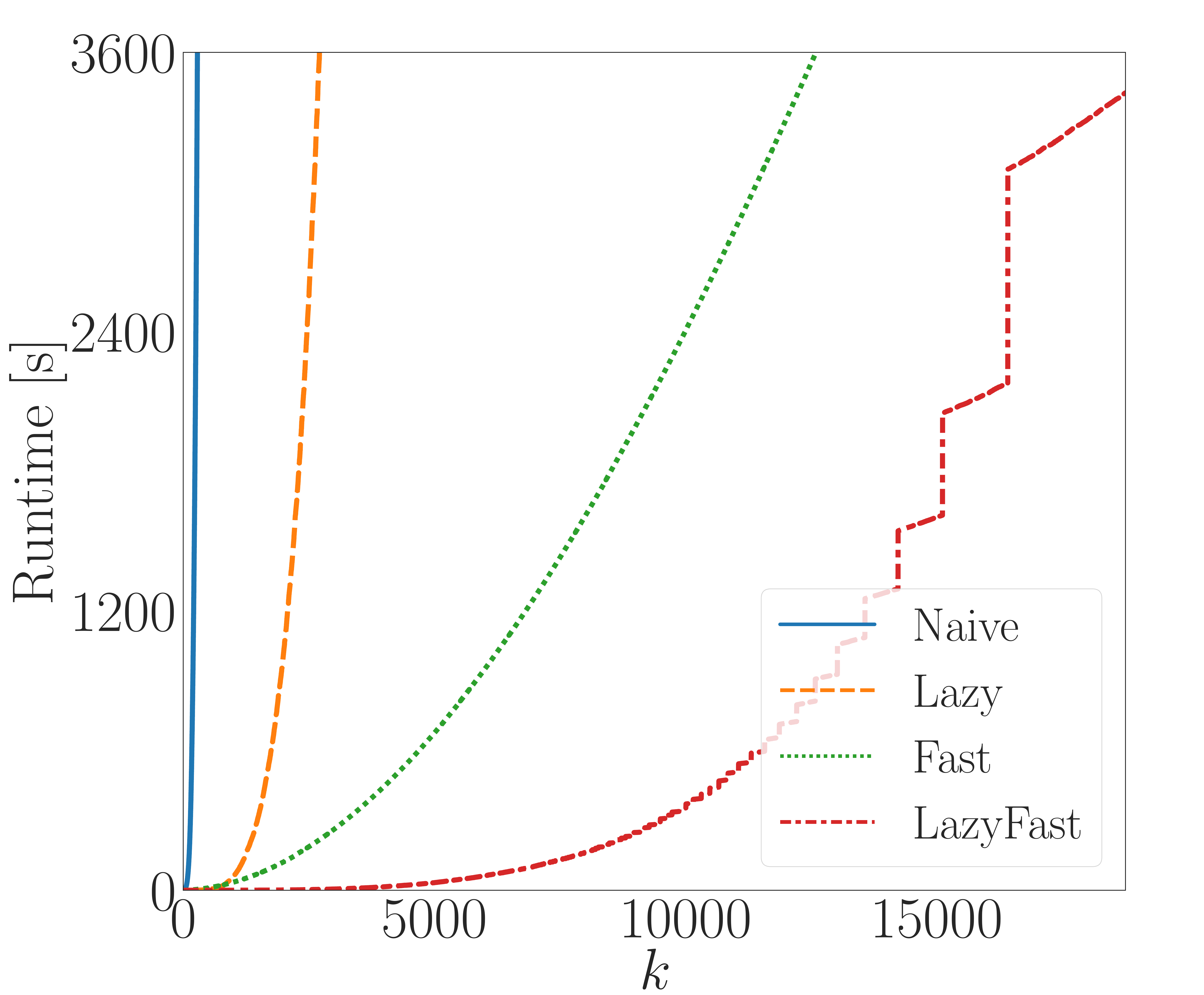}
    \subcaption{MovieLens, $L$-input, Runtime}
  \end{minipage}%
  \begin{minipage}[b]{.333\linewidth}
    \vspace{10pt}
    \centering
    \includegraphics[width=\textwidth]{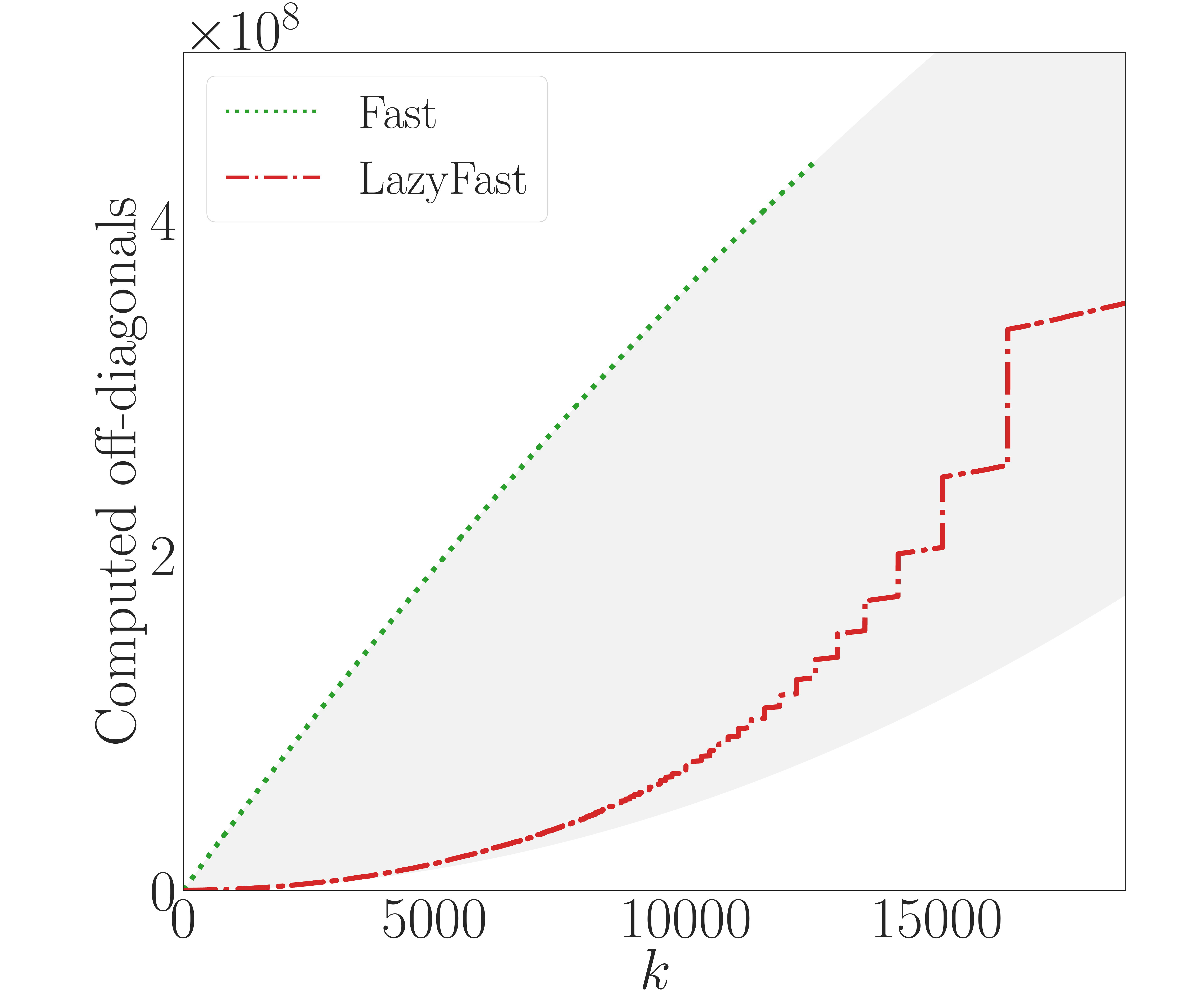}
    \subcaption{MovieLens, Off-diagonals}\label{subfig:movielens-offdiag}
  \end{minipage}
  \caption{Results on real-world datasets. In \cref{subfig:netflix-offdiag,subfig:movielens-offdiag}, the gray band indicates the range of the possible number of computed off-diagonals: $\left[k(k-1)/2, (k-1)(n-k/2)\right]$.}\label{fig:real-greedy}
\end{figure}

We compare the running time of \textsc{Greedy}, \textsc{LazyGreedy}, \textsc{FastGreedy}, and \textsc{LazyFastGreedy}, which we here call Naive, Lazy, Fast, and LazyFast, respectively, for short.
As regards synthetic datasets, we consider two settings that fix either $n$ or $k$:
(i) $n = 6000$ and $k = 1,2,\dots,n$,
and
(ii) $k = 200$ and $n = 1000, 2000,\dots,10000$.
We set the timeout periods of (i) and (ii) to $3600$ and $60$ seconds, respectively.
Regarding real-world datasets, $n$ is fixed as explained above and we increase $k = 1,2,\dots,n$.
We set the timeout period to $3600$ seconds.
With the Netflix (MovieLens) dataset, the objective value peaked with $k = 17762$ ($k = 18763$), and thus we stopped increasing $k$ after that.

\Cref{fig:synth-greedy,fig:real-greedy} present the results on synthetic and real-world datasets, respectively.
The curves in the runtime figures represent that faster algorithms can return greedy solutions to instances with larger $k$ or $n$ values within the timeout periods.
The rightmost figures present the number of off-diagonals of Cholesky factors computed by Fast and LazyFast.
As explained in \cref{subsection:lazyfastgreedy}, while Fast always computes all the off-diagonals of $V[[n], S]$, LazyFast does not due to the lazy update.

LazyFast was the fastest in all the settings.
In particular, in the synthetic~(ii) and real-world settings, LazyFast computed fewer off-diagonals than Fast, thus running the fastest.
In the synthetic setting~(i), although LazyFast computed almost all off-diagonals in $V[[n], S]$, it was still faster than Fast.
For example, for the $L$-input setting with $n = k = 6000$, Fast and LazyFast took $37.4$ and $13.7$ seconds, respectively.
This unexpected speed-up is caused by the cache efficiency of LazyFast.
Specifically, every call to \textsc{UpdateRow} computes off-diagonals from $V_{i, {j_{u_i+1}}}$ to $V_{i, j_{|S|}}$ by accessing entries only in $V[\set{i, {j_{u_i+1},\dots,j_{|S|}}}, \step{S}{|S|-1}]$.
This process virtually creates blocks in a Cholesky factor, enabling the cache-efficient computation of off-diagonals via blocking (see, e.g.,~\citep[Section~2.3]{Hennessy2011-ev}).
In fact, the cache miss rates of Fast and LazyFast for the above example were $71.3\%$ and $3.3\%$, respectively.

Another interesting observation in the synthetic (ii) and real-world settings is that while Fast was often faster than Lazy in the $L$-input setting, the opposite occurred in the $B$-input setting.
This is because computing an off-diagonal in the $B$-input setting is costly relative to the $L$-input setting.
As a result, avoiding the redundant computation of off-diagonals by the lazy update tends to be more effective than computing marginal gains efficiently via the Cholesky factorization.

\subsection{Double greedy algorithm for unconstrained DPP MAP inference}\label{subsection:experiment-double-greedy}

\begin{table}[tb]
  \caption{Running time [s] of \textsc{DoubleGreedy} and \textsc{FastDoubleGreedy}}\label{table:experiment-double}
  \vspace{10pt}
  \centering
  \begin{tabular}{lrrrrrrr}\toprule
    \multicolumn{1}{c}{\multirow{2}{*}[-3pt]{Dataset}} & \multicolumn{3}{c}{\textsc{DoubleGreedy}} & \multicolumn{4}{c}{\textsc{FastDoubleGreedy}} \\\cmidrule(r){2-4}\cmidrule(r){5-8}
     & \multicolumn{1}{c}{Product} & \multicolumn{1}{c}{Greedy} & \multicolumn{1}{c}{\textbf{Total}} & \multicolumn{1}{c}{Product} & \multicolumn{1}{c}{Inverse} & \multicolumn{1}{c}{Greedy} & \multicolumn{1}{c}{\textbf{Total}} \\\midrule
    Synthetic              &    $30.0$ &     $36233.1$  &    $36263.1$  &    $30.0$ & $59.8$ & $34.6$ & $124.4$ \\
    Netflix Prize          & $20561.6$ & $> 86400.0$ & \multicolumn{1}{c}{---} & $20561.6$ & $1706.7$ & $916.2$ & $23184.5$ \\
    MovieLens              & $37829.5$ & $> 86400.0$ & \multicolumn{1}{c}{---} & $37829.5$ & $21999.2$ & $6337.3$ & $66166.0$ \\\bottomrule
  \end{tabular}
\end{table}

We compare naive \textsc{DoubleGreedy} and our \textsc{FastDoubleGreedy} by applying them to unconstrained DPP MAP inference on three datasets: synthetic ($n = 6000$), Netflix Prize, and MovieLens.
Since kernel matrices $L$ in the real-world datasets are singular, we use kernel matrices computed as $L = 0.9 B^\top B + 0.1 I$ in this section, ensuring that the resulting matrices $L$ are positive definite.
In this section, we set the timeout period to one day ($86400$ seconds).

\Cref{table:experiment-double} presents the results.
Note that both algorithms require $L$ to be computed in advance.
Therefore, we measured the time of computing $B^\top B$ (Product) separately from the time of constructing solutions (Greedy).
Our \textsc{FastDoubleGreedy} additionally requires $L^{-1}$ to be computed in advance for accelerating the solution construction; therefore, we also measured the time of computing $L^{-1}$ (Inverse) separately.
As in \cref{table:experiment-double}, naive \textsc{DoubleGreedy} took so long for solution construction that it failed to return solutions to real-world instances in one day.
By contrast, our \textsc{FastDoubleGreedy} constructed solutions far faster and succeeded in returning solutions to all the instances.

Also, the computation of $B^\top B$ (Product) took a considerably long time relative to the running time of \textsc{LazyFastGreedy} in the previous section.
Therefore, as mentioned above, when a matrix $B$ of item vectors is given and our goal is to obtain a greedy solution for small $k = \mathrm{o}(n)$, we should avoid computing the kernel matrix $L = B^\top B$ in advance.


\subsection*{Acknowledgements}
This work was supported by JST ERATO Grant Number JPMJER1903 and JSPS KAKENHI Grant Number JP22K17853.

\bibliographystyle{abbrvnat}
\bibliography{ref}

\appendix

\clearpage

\begin{center}
	{\fontsize{18pt}{0pt}\selectfont \bf Appendix}
\end{center}

%

\section{Extension to random, stochastic, and interlace greedy algorithms}\label{app-section:extension}

We consider cardinality-constrained DPP MAP inference and explain how to extend our ``lazy + fast'' idea to other greedy-type algorithms: \textsc{RandomGreedy}~\citep{Buchbinder2014-il}, \textsc{StochasticGreedy}~\citep{Mirzasoleiman2015-qo,Sakaue2020-ap}, and \textsc{InterlaceGreedy}~\citep{Kuhnle2019-io}.
Roughly speaking, those algorithms have in common the process of finding an element with the largest marginal gain, which we can do efficiently with our ``lazy + fast'' technique.
We also present experimental results on those algorithms.

In the analysis of \textsc{RandomGreedy}~\citep{Buchbinder2014-il}, \textsc{StochasticGreedy}~\citep{Sakaue2020-ap}, and \textsc{InterlaceGreedy}~\citep{Kuhnle2019-io}, $n \ge 2k$, $n \ge 3k$, and $n \ge 4k$, respectively, are assumed.
Thus, we below make the same assumptions.

\subsection{Lazy and fast random greedy algorithm}\label{app-subsection:random}
We consider \textsc{RandomGreedy}~\citep{Buchbinder2014-il}, a $1/\mathrm{e}$-approximation algorithm for non-monotone submodular function maximization with a cardinality constraint.
It works as follows.
First, we add $2k$ dummy elements that do not affect the function value to the ground set in advance.
In each step, \textsc{RandomGreedy} finds a set $M$ of $k$ elements with the top-$k$ marginal gains, draws an element $i$ from $M$ uniformly at random, and then adds $i$ to the solution.
This procedure is repeated $k$ times.

For convenience, we consider another equivalent algorithm.
In each step of the above original version, an element with the  $l$th largest marginal gain is added to a current solution, where $l$ is drawn from $[k]$ uniformly at random.
Since there always exist at least $k$ remaining dummy elements, the added element always has a non-negative marginal gain in every step.
Considering the above, each step of the original \textsc{RandomGreedy} can be equivalently performed without dummy elements as follows.
Let $S \subseteq [n]$ be a current solution.
We draw $l$ from the uniform distribution on $[k]$. ($\overline{S}$ has at least $l$ elements since $n \ge 2k$ implies $|\overline{S}| \ge n - k \ge k \ge l$.)
If an element with the $l$th largest marginal gain, denoted by $i\in\overline{S}$, satisfies $f_i(S) \ge 0$, we add $i$ to $S$.
Otherwise, we add nothing to $S$, corresponding to adding a dummy element in the original algorithm.
Note that each step of this algorithm requires finding elements with the top-$l$ marginal gains.

We below explain how to use the ``lazy + fast'' idea for finding $M$, a set of elements with the top-$l$ marginal gains.
First, we compute $d_i = \sqrt{L_{i,i}}$ for every $i \in [n]$ and push them into a priority queue.
In each $t$th step, given a current solution $S$, $l\in[k]$, and $M = \emptyset$, we find an element $i$ with the largest $f_i(S)$ among $\overline{S \cup M}$ by calling \textsc{UpdateRow} (defined in \cref{alg:LazyMapInf}) and add $i$ to $M$.
We repeat this until $|M| = l$ or $d_i < 1$ holds, where $d_i < 1$ implies that the $l$th largest marginal gain is negative.
After updating the current solution $S$, we insert back the elements that have not been added into the priority queue.
\Cref{alg:Lazy_Random} presents a formal description of this algorithm.

Let $U = \sum_{i \in [n]} u_i$ be the number of computed off-diagonals of $V$ at the end of the algorithm, which affects the time complexity of \cref{alg:Lazy_Random} as follows.

\begin{thm}\label{thm:LazyRandomCorrectness_Complexity}
  \Cref{alg:Lazy_Random} returns a solution obtained by \textsc{RandomGreedy} in $\Ord(nd + U(d+\log n))$ time.
  If the lazy update works best and worst, it runs in $\Ord(nd + k^2(d+\log n))$ and $\Ord(kn(d + \log n))$ time, respectively.
\end{thm}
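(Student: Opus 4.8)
The plan is to mirror the proof of \Cref{thm:LazyGreedyCorrectness_Complexity}, adjusting for the two features that distinguish \Cref{alg:Lazy_Random}: each step extracts an entire top-$l$ set $M$ rather than a single element, and the unchosen elements of $M$ are re-inserted into the priority queue. First I would establish correctness by arguing that \Cref{alg:Lazy_Random} reproduces a run of \textsc{RandomGreedy} in its equivalent dummy-free form. The only nontrivial point is that the inner loop genuinely identifies the top-$l$ marginal gains, and this follows from the same lazy-verification argument as \textsc{LazyGreedy}: when the popped element $i$ has the largest upper bound $d_i$, we call \textsc{UpdateRow} so that $2\log d_i = f_i(S)$ by \Cref{prop:rel_bet_Gre_Chol}, and if $d_i$ still dominates every other (possibly stale) upper bound, the diminishing-returns property certifies that $i$ attains the maximum gain over $\overline{S\cup M}$. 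Iterating $l$ times yields the true top-$l$ set, and the early stop at $d_i<1$ correctly signals that the $l$th largest gain is negative (a dummy). \Cref{obs:independency} guarantees that the row updates inside \textsc{UpdateRow} stay valid as elements move in and out of $M$ across steps, since each $V_{i,j_t}$ depends only on the already-fixed factor $V[S]$.

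Next I would decompose the running time exactly as before. Computing $L_{i,i}$ for all $i$ and building the priority queue costs $\Ord(nd)$. Each off-diagonal computed in \textsc{UpdateRow} costs $\Ord(d)$, dominated by the access to $L_{i,j_t}$, and there are $U$ of them, for $\Ord(Ud)$ total. The priority-queue accounting is where the analysis departs from \Cref{thm:LazyGreedyCorrectness_Complexity}, and I would split the queue operations into three groups: push-backs after a failed verification, successful pops that place an element into $M$, and re-insertions of the $\le l-1$ unchosen elements at the end of each step. A failed verification can occur only after at least one off-diagonal was recomputed (otherwise the popped $d_i$ already equals $f_i(S)$ and is genuinely maximal), so these are charged to off-diagonal computations and number $\Ord(U)$. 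The successful pops and re-insertions together are bounded by $\sum_t l_t \le k^2$, since there are $k$ steps and each $l_t\le k$. Hence the total queue cost is $\Ord\bigl((U+k^2)\log n\bigr)$, and combining the three contributions gives $\Ord(nd + U(d+\log n))$ once the additive $\Ord(k^2\log n)$ is absorbed.

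For the two extreme regimes I would then bound $U$ and the queue overhead directly. When the lazy update works best, only the $\Ord(k)$ currently top-ranked elements are ever examined, so $U=\Ord(k^2)$ and at most $\Ord(k)$ queue operations occur per step, i.e.\ $\Ord(k^2)$ in total; this yields $\Ord(nd + k^2(d+\log n))$. When it works worst, \textsc{UpdateRow} eventually fills essentially all off-diagonals, so $U=(k-1)(n-k/2)$ exactly as in \Cref{thm:LazyGreedyCorrectness_Complexity}, and the $\Ord(k^2\log n)$ overhead is dominated, giving $\Ord(kn(d+\log n))$.

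The step I expect to be the main obstacle is the careful bookkeeping of the priority-queue operations. Unlike the plain lazy case, not every pop can be charged to an off-diagonal computation: re-examining an element at an unchanged $|S|$ during a run of dummy steps produces ``free'' pops that contribute to the $\Ord(k^2)$ term without touching $U$. The delicate point is therefore to isolate these extra pops and re-insertions from the off-diagonal-chargeable failures, bound them crisply by $\sum_t l_t\le k^2$, and verify that the resulting $\Ord(k^2\log n)$ is subsumed by $\Ord(U(d+\log n))$ in each stated regime rather than dominating it.
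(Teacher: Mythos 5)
Your proposal is correct and follows essentially the same route as the paper's proof: correctness via the lazy-verification argument with \cref{prop:rel_bet_Gre_Chol}, the general bound by charging $\Ord(d)$ per computed off-diagonal plus $\Ord(\log n)$ per priority-queue operation, and the extreme cases by bounding $U$ as $\Ord(k^2)$ (only the top-ranked elements are ever touched, with $k-1$ re-insertions per step) and $(k-1)(n-k/2)$ respectively. Your explicit three-way split of the queue operations, and the observation that the resulting $\Ord(k^2\log n)$ term is absorbed because $U=\Omega(k^2)$ whenever $k$ elements are actually selected, is a slightly more careful version of the bookkeeping the paper handles implicitly under its worst-randomness assumption $l=k$.
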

\begin{proof}
  The correctness of \cref{alg:Lazy_Random} and the time complexity depending on $U$ follow from similar arguments to those in \cref{subsection:lazyfastgreedy}.
  We below discuss the best and worst cases.
  For ease of deriving upper bounds on the time complexity, we suppose $l = k$ to hold in every step, which is the worst scenario for the randomness of the algorithm.

  If the lazy update works best, in each step, it picks $k-1$ elements that have belonged to $M$ one step before and a new element whose row in a Cholesky factor has not been computed.
  Then, in each $t$th step ($t=2,3,\cdots,k$), $(k-1)+(t-1)$ off-diagonals are computed, hence $U=\sum_{t=2}^{k}((k-1)+(t-1)) = \Ord(k^2)$.
  In addition, since $k - 1$ elements are inserted in each step, updating the priority queue takes $\Ord(k^2\log n)$ in total.
  Thus, the overall running time is $\Ord(nd + k^2(d+\log n))$.

  We then turn to the worst case.
  In the $t$th step ($t = 2,3,\dots, k$), all the off-diagonals in $V[\overline{S}, j_{t-1}]$ are calculated to determine the top-$k$ element.
  Hence, it computes as many off-diagonals as \textsc{FastGreedy}, i.e., $U = (k-1)(n-k/2)$.
  Therefore, it takes $\Ord(kn(d + \log n))$ time.
\end{proof}

In the above proof, we have assumed the worst scenario of $l = k$, which is pessimistic in practice.
We here discuss the optimistic case of $l = 1$ to derive the range of possible $U$ values.
In this case, \cref{alg:Lazy_Random} selects an element with the largest marginal gain in each step as with \textsc{LazyFastGreedy}.
Hence, $U = k(k-1)/2$ off-diagonals of $(PV)[S]$ are computed, which is the lower bound on $U$.
The upper bound on $U$ is $(k-1)(n-k/2)$ as shown in the above proof.
Therefore, $U$ can take a value in the range of $[k(k-1)/2, (k-1)(n-k/2)]$.

\begin{algorithm}[tb]
  \caption{Lazy and fast \textsc{RandomGreedy} for DPP MAP inference}\label{alg:Lazy_Random}
  \begin{algorithmic}[1]
   \State{$V\gets O$, $\bm{d} \gets {\left(\sqrt{L_{i,i}}\right)}_{i \in [n]}$, ${\bm u}\gets {\bm 0}$, $S\gets \emptyset$}
   \State{Construct a priority queue of ${\bm d}$}
   \For{$t = 1$ to $k$}
   \State Draw $l$ from the uniform distribution on $[k]$ \Comment{$l \le |\overline{S}|$ always holds since $n \ge 2k$}
     \State $M\leftarrow\emptyset$
     \While {$|M|<l$}
       \State Take $i \in \mathrm{arg\,max\,}_{i^\prime\in \overline{S \cup M}} d_{i'}$\Comment{Pop the largest one from the priority queue}
       \State \Call{UpdateRow}{$V, \bm{d}, \bm{u}; i, S, L$}\Comment{$S = \set*{j_1, j_2,\dots,j_{|S|}}$}
       \If {$d_{i}\geq\max_{i^\prime\in \overline{S \cup M}} d_{i^\prime}$}\Comment{Otherwise insert $d_i$ into the priority queue}
       \If{$d_i < 1$}
         \State{\textbf{break}}\Comment{The $l$th largest marginal gain must be negative}
       \EndIf
       \State $M \leftarrow M \cup \{i\}$
       \If {$|M| = l$}
      \State $j_{|S|+1} \gets i$ \Comment{$i$ has the $l$th largest non-negative marginal gain}
      \State $S \leftarrow S \cup \{j_{|S|+1}\}$
    \EndIf
       \EndIf
    \EndWhile
    \State Insert $\set{d_i}_{i \in M \setminus S}$ into the priority queue
   \EndFor
  \State \Return $S$
  \end{algorithmic}
\end{algorithm}

\subsection{Lazy and fast stochastic greedy algorithm}
We then discuss \textsc{StochasticGreedy} \citep{Mirzasoleiman2015-qo}, which was initially proposed as a simple and fast variant of \textsc{Greedy} for maximizing a monotone submodular function under a cardinality constraint.
A recent study \citep{Sakaue2020-ap} showed that \textsc{StochasticGreedy} also enjoys a nearly $1/4$-approximation guarantee even for non-monotone submodular functions if $k \ll n$.

\textsc{StochasticGreedy} has a parameter $\varepsilon \in (0, 1)$ that controls the trade-off between the running time and approximation ratio (see \citep{Mirzasoleiman2015-qo,Sakaue2020-ap} for details).
In each step of \textsc{StochasticGreedy}, given a current solution $S$, it obtains $R \subseteq \overline{S}$ by sampling $s = \mleft\lceil \frac{n}{k}\log\frac{1}{\varepsilon}\mright\rceil$ elements uniformly at random from $\overline{S}$, selects $j_{t} \in \argmax_{i^\prime \in R} f_{i^\prime}(S)$, and adds $j_{t}$ to $S$ if its marginal gain is positive.
This is repeated for $t=1,\dots,k$.
Note that we have $\log\frac{1}{\varepsilon} \le k$ since $\mleft\lceil \frac{n}{k}\log\frac{1}{\varepsilon}\mright\rceil = s \le n$ (otherwise we let $R = \overline{S}$).

We can use the idea of ``lazy + fast'' to speed up the process of finding an element with the largest marginal gain from $R$.
To implement the idea, we use an array that maintains the latest $d_i$ values of each element $i \in [n]$.
In each step, we construct a priority queue that maintains $(d_i)_{i \in R}$, where $d_i$ values are given by those stored in the array.
We then find an element with the largest marginal gain among $R$ by iteratively executing \textsc{UpdateRow} with the priority queue.
We provide a formal description of this algorithm in \Cref{alg:Lazy_Stochastic}.

\begin{algorithm}[tb]
  \caption{Lazy and fast \textsc{StochasticGreedy} for DPP MAP inference}\label{alg:Lazy_Stochastic}
  \begin{algorithmic}[1]
   \State{$V\gets O$, $\bm{d} \gets {(+\infty)}_{i \in [n]}$, ${\bm u}\gets {\bm 0}$, $S\gets \emptyset$, $s\gets\mleft\lceil \frac{n}{k}\log\frac{1}{\varepsilon}\mright\rceil$}\Comment{Initializaiton of $\bm{d}$ is deferred}
   \For {$t = 1$ to $k$}
    \State Get $R$ by sampling $s$ elements from $\overline{S}$ \Comment{$R = \overline{S}$ if $|\overline{S}| \le s$}
	\State Construct a priority queue with ${(d_i)}_{i \in R}$ \Comment{Let $d_i \gets \sqrt{L_{i,i}}$ if $d_i = +\infty$}
	\While{true}
    \State Take $i \in \mathrm{arg\,max\,}_{i^\prime\in R} d_{i'}$ \Comment{Pop the largest one from the priority queue}
    \State \Call{UpdateRow}{$V, \bm{d}, \bm{u}; i, S, L$}
	\If{$d_{i} \ge \max_{i^\prime\in R} d_{i^\prime}$}
    \Comment{Otherwise insert $d_i$ into the priority queue}
    \State \textbf{break}
	\EndIf
	\EndWhile
    \If {$d_i > 1$}
    \State $j_{|S|+1}\leftarrow i$
    \State $S\leftarrow S\cup\{j_{|S|+1}\}$
    \EndIf
   \EndFor
  \State \Return $S$
  \end{algorithmic}
\end{algorithm}

Letting $U$ denote the number of computed off-diagonals as above, the time complexity of \Cref{alg:Lazy_Stochastic} is represented as follows.
\begin{thm}\label{thm:LazyStochasticCorrectness_Complexity}
  \Cref{alg:Lazy_Stochastic} returns a solution obtained by \textsc{StochasticGreedy} in $\Ord\big(nd + U\big(d + \log \big(\frac{n}{k}\log\frac{1}{\varepsilon}\big) \big) \big)$ time.
  If the lazy update works best and worst, it takes $\Ord\big((n + k^2)d \big) \big)$ and $\Ord\big(knd + n \log\big( \frac{1}{\varepsilon} \big) \log \big(\frac{n}{k}\log\frac{1}{\varepsilon}\big)\big)$ time, respectively.
\end{thm}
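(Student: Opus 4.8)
The plan is to follow the template of \cref{thm:LazyGreedyCorrectness_Complexity}: establish correctness first, then bound the running time by separately accounting for diagonal initialization, \textsc{UpdateRow} calls, and priority-queue operations. The only genuinely new ingredient is that the priority queue now lives over the sampled set $R$ of size $s = \lceil\frac{n}{k}\log\frac1\varepsilon\rceil$ rather than over all of $\overline{S}$, so I must keep track of two different bounds on the number of queue operations.

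For correctness I would argue exactly as in \cref{subsection:lazyfastgreedy}. Once \textsc{UpdateRow} fills $V[i,S]$, \cref{prop:rel_bet_Gre_Chol} guarantees $2\log d_i = f_i(S)$, so the latest $d_i$ encodes the current marginal gain. The lazy pop-and-recompute loop, together with submodularity (the deferred $d_{i'}$ remain valid upper bounds on $f_{i'}(S)$), ensures that the element $i$ that breaks out of the \textbf{while} loop attains $\max_{i'\in R} f_{i'}(S)$; adding it precisely when $d_i > 1$, i.e.\ $f_i(S) > 0$, reproduces one step of \textsc{StochasticGreedy}. Since $R$ is drawn from the same distribution, \cref{alg:Lazy_Stochastic} returns a \textsc{StochasticGreedy} solution.

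For the generic bound I would tally four contributions: (i) each diagonal $d_i = \sqrt{L_{i,i}}$ is computed at most once, when $i$ is first sampled, costing $\Ord(nd)$ in total; (ii) building the size-$s$ heap costs $\Ord(s)$ per step, hence $\Ord(ks) = \Ord(n\log\frac1\varepsilon)$ overall, which is absorbed into $\Ord(nd)$ because $\log\frac1\varepsilon \le k \le d$; (iii) the $U$ off-diagonal computations inside \textsc{UpdateRow} cost $\Ord(Ud)$; and (iv) every reinsertion is preceded by at least one off-diagonal update, so the number of reinsertions, and hence of extra pops, is $\Ord(U)$, and each operation on a size-$s$ heap costs $\Ord(\log s)$, giving $\Ord(U\log s)$. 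Summing yields $\Ord(nd + U(d+\log s))$ with $\log s = \log(\frac{n}{k}\log\frac1\varepsilon)$.

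For the two extremes I would reuse the off-diagonal counts from \cref{thm:LazyGreedyCorrectness_Complexity}. When the lazy update works best, only the rows of the $k$ added elements are touched with no reinsertions, so $U = k(k-1)/2 = \Ord(k^2)$, the queue work is negligible, and the total is $\Ord((n+k^2)d)$. The worst case is where I expect the main obstacle: the crude bound $U = \Ord(kn)$ gives the $\Ord(knd)$ term through \textsc{UpdateRow}, but substituting $U = \Ord(kn)$ into the generic $\Ord(U\log s)$ queue estimate would overcount as $\Ord(kn\log s)$. The fix is to bound the queue operations \emph{per step} by $|R| = s$ rather than by $U$: since $S$ is fixed throughout a step, once an element's row is brought up to date it can only reconfirm as the maximum, so each sampled element is popped $\Ord(1)$ times within a step, giving $\Ord(ks) = \Ord(n\log\frac1\varepsilon)$ queue operations overall and $\Ord(n\log\frac1\varepsilon\log(\frac{n}{k}\log\frac1\varepsilon))$ queue cost. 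Combining this with $\Ord(knd)$ yields the claimed worst-case bound. The delicate point to get right is precisely this per-step $\Ord(s)$ pop count, which is what lets the worst case avoid the looser $\Ord(kn\log s)$ estimate.
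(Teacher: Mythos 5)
Your proposal is correct and follows essentially the same route as the paper's proof: the same four-way cost decomposition, the same generic bound $\Ord(nd + U(d+\log s))$, and, crucially, the same observation that in the worst case the priority-queue work must be bounded by $\Ord(sk)$ operations (at most $s$ pops per step, since each sampled element is popped $\Ord(1)$ times once its row is current) rather than by the cruder $\Ord(U)$. The only cosmetic difference is that you use the trivial bound $U = \Ord(kn)$ where the paper computes the worst-case $U$ exactly via the quotient and remainder of $n$ divided by $s$; both suffice for the stated asymptotics.
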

\begin{proof}
  The correctness of \cref{alg:Lazy_Random} follows from a similar argument to that in \cref{subsection:lazyfastgreedy}.

  Since we compute at most $n$ diagonals, the time for computing diagonals is $\Ord(nd)$ in total.
  The time required for constructing priority queues of $s$ sampled elements is $\Ord(sk)$ in total, which is bounded by $\Ord(nd)$ since $s\leq n$ and $k\leq d$.

  In addition, as in the proof of \cref{thm:LazyGreedyCorrectness_Complexity}, the total computation time caused by \textsc{UpdateRow} is $\Ord(Ud)$.
  Furthermore, we do at most $\Ord(U)$ operations on priority queues in total, taking $\Ord(U \log s)$ time.
  Thus, the overall time complexity is $\Ord\big(nd + U\big(d + \log \big(\frac{n}{k}\log\frac{1}{\varepsilon}\big) \big) \big)$.

  We below discuss the best and worst cases.
  As with the proof of \cref{thm:LazyRandomCorrectness_Complexity}, we assume the worst scenario for the randomness of the algorithm to derive upper bounds on the time complexity.

  In the best case, the element on top of the priority queue is added to $S$ in every step.
  Consequently, $k(k-1)/2$ off-diagonals of $V$ are computed.
  In addition, updates of priority queues occur $k$ times, which takes $\Ord(k \log s)$ ($\lesssim \Ord(nd)$) time in total.
  Therefore, the best-case time complexity is written as $\Ord\mleft(nd + \frac{k(k-1)}{2}d \mright) =  \Ord\mleft((n + k^2)d \mright)$.

  The worst case occurs if $R \subseteq \overline{S}$ sampled in each step minimizes $\sum_{i \in R'} u_i$ among all $R' \subseteq \overline{S}$ with $|R'| = s$, where $u_i$ values are those of the current step, and if the marginal gains of all elements in $R$ are updated.
  Let $q$ and $r$ be the quotient and the remainder, respectively, of $n$ divided by $s$, i.e., $n = qs+r$ and $0 \le r < s$.
  Then, we have
  \begin{align}
    U = \sum_{t=k-q+1}^k s(t-1) + \left(r-\frac{k-q}{2}\right)(k-q-1) = \left(n-\frac{k}{2}\right)(k-q-1)+ \frac{kq}{2} = \Ord\left(nk\right).
  \end{align}
  Moreover, since priority queues are updated at most $sk$ times, the total time taken for operations on priority queues is $\Ord(sk \log s) \lesssim \Ord\big( n \log\big( \frac{1}{\varepsilon} \big) \log \big(\frac{n}{k}\log\frac{1}{\varepsilon}\big) \big)$.
  Thus, the overall time complexity is $\Ord\big(knd + n \log\big( \frac{1}{\varepsilon} \big) \log \big(\frac{n}{k}\log\frac{1}{\varepsilon}\big)\big)$.
\end{proof}

Note that by using $\log\frac{1}{\varepsilon} \le k$, we can confirm that the time complexity bounds in \cref{thm:LazyStochasticCorrectness_Complexity} are at least as good as those of \textsc{LazyFastGreedy} in \cref{thm:LazyGreedyCorrectness_Complexity}.

\subsection{Lazy and fast interlace greedy algorithm}

The idea of ``lazy + fast'' is also applicable to \textsc{InterlaceGreedy}~\citep{Kuhnle2019-io}, which is a $1/4$-approximation algorithm for non-monotone submodular maximization with a cardinality constraint.
Note that \textsc{InterlaceGreedy} is a deterministic algorithm, unlike \textsc{RandomGreedy} and \textsc{StochasticGreedy}.

\textsc{InterlaceGreedy} sequentially updates two solution sets.
Starting from $A = \emptyset$ and $B = \emptyset$, in each step, it adds
$j^A \in \argmax_{i \in \overline{A \cup B}} f_i(A)$ to $A$ and
$j^B \in \argmax_{i \in \overline{(A \cup \set{j^A}) \cup B}} f_i(B)$ to $B$, until $|A| = |B| = k$ holds (or the largest marginal gain turns out to be negative).
Then, it repeats the same procedure with different initial solutions $C$ and $D$ consisting of the first element added to $A$.
Finally, it returns a set with the largest function value among all sets that have appeared as $A$, $B$, $C$, or $D$ at some step during the execution.

The idea of ``lazy + fast'' can be applied to \textsc{InterlaceGreedy} by using priority queues for $A$, $B$, $C$, and $D$.
First, we initialize two priority queues for $A$ and $B$ with ${\left(\sqrt{L_{i,i}}\right)}_{i \in [n]}$.
In each step, by iteratively using \textsc{UpdateRow}, we can efficiently find an element with the largest marginal gain for $A$ and $B$, as with \textsc{LazyFastGreedy}.
Here, since elements already added to $A$ cannot be added to $B$, if $i$ is added to $A$, we remove $i$ from the priority queue for $B$ and vice versa.
We thus construct sequences of solutions, ${\left(\step{A}{t}\right)}_{t=0}^k$ and ${\left(\step{B}{t}\right)}_{t=0}^k$.
Then, we obtain ${\left(\step{C}{t}\right)}_{t=1}^k$ and ${\left(\step{D}{t}\right)}_{t=1}^k$ in the same manner except that $C$ and $D$ are initially set to $\step{A}{1}$.
We provide a formal description of this algorithm in \cref{alg:Lazy_Interlace}.

\begin{algorithm}[tb]
  \caption{Lazy and fast \textsc{InterlaceGreedy} for DPP MAP inference}\label{alg:Lazy_Interlace}
  \begin{algorithmic}[1]
    \State ${\left(\step{A}{t}\right)}_{t=0}^k, {\left(\step{B}{t}\right)}_{t=0}^k$ $\gets$\Call{GetInterlacedSets}{$\nil, L$}
    \State ${\left(\step{C}{t}\right)}_{t=0}^k, {\left(\step{D}{t}\right)}_{t=0}^k$ $\gets$\Call{GetInterlacedSets}{$j^A_1, L$}\Comment{$\step{A}{1} = \set{j_1^A}$}
    \State{\Return $S \in \argmax \{ \log \det L[S'] \mid S' = \text{$\step{X}{t} \, (X \in \set{A, B, C, D}, t = 0, \dotsc, k)$} \}$}
    \item[]
    \Function{GetInterlacedSets}{$j_1, L$}
    \State $V^S \gets O$, $V^T \gets O$, $\bm{d}^S \gets {\left(\sqrt{L_{i,i}}\right)}_{i \in [n]}$, $\bm{d}^T \gets {\left(\sqrt{L_{i,i}}\right)}_{i \in [n]}$, ${\bm u}^S \leftarrow {\bm 0}$, ${\bm u}^T\leftarrow{\bm 0}$
    \State Construct priority queues of ${\bm d}^S$ and ${\bm d}^T$
    \State{$\step{S}{0} \gets \emptyset$, $\step{T}{0} \gets \emptyset$, $t_0 \gets |\{j_t\}|+1$}\Comment{$|\{\textsc{nil}\}| =0$}
    \If{$j_1 \ne \nil$}
      \State{$j^S_1 \gets j_1$, $j^T_1 \gets j_1$, $\step{S}{1} \gets \set{j^S_1}$, $\step{T}{1} \gets \set{j^T_1}$}
      \State{Remove $j_1$ from the priority queues of $\bm{d}^S$ and $\bm{d}^T$}
    \EndIf
    \For{$t = t_0$ to $k$}
      \State{$i \gets$ \Call{GetArgMax}{$V^S, \bm{d}^S, \bm{u}^S; \step{S}{t-1}, \step{T}{t-1}, L$}}
      \If{$i \ne \nil$ and $d^S_i \ge 1$}{\label[line]{line:interlace-repeat-start}}
        \State{$j_{t}^S \gets i$, $\step{S}{t} \gets \step{S}{t-1} \cup \set{j_{t}^S}$}
        \State{Remove $j_{t}^S$ from the priority queue of $\bm{d}^T$}
      \Else
        \State{$\step{S}{t} \gets \step{S}{t-1}$}
      \EndIf{\label[line]{line:interlace-repeat-end}}
      \State{$i \gets$ \Call{GetArgMax}{$V^T, \bm{d}^T, \bm{u}^T; \step{T}{t-1}, \step{S}{t}, L$}}
      \State{Repeat Lines~\ref{line:interlace-repeat-start}--\ref{line:interlace-repeat-end} interchanging $S$ and $T$}
   \EndFor
  \State\Return ${\left(\step{S}{t}\right)}_{t=0}^k$, ${\left(\step{T}{t}\right)}_{t=0}^k$
  \EndFunction
  \item[]
  \Function{GetArgMax}{$V, \bm{d}, \bm{u}; X, Y, L$}
    \If{$\max_{i^\prime\in \overline{X \cup Y}} d_{i^\prime} < 1$}
      \State{\Return \nil}
    \EndIf
    \While{true}
      \State Pop $i$ with the largest $d_i$ from the priority queue of ${\bm d}$
      \State{\Call{UpdateRow}{$V, \bm{d}, \bm{u}; i, X, L$}}
      \If{$d_{i} \ge \max_{i^\prime\in \overline{X \cup Y}} d_{i^\prime}$}\Comment{Otherwise insert $d_i$ into the priority queue}\label[line]{line:interlace-if-max-S}
        \State{\Return $i$}
      \EndIf
    \EndWhile
  \EndFunction
  \end{algorithmic}
\end{algorithm}

For each $X \in \set{A, B, C, D}$, let $V^X$ be a Cholesky factor of $L[\step{X}{k}]$ (partially) computed in \cref{alg:Lazy_Interlace}, $P^X$ be a permutation matrix such that $P^X V^X$ is lower triangular, $U^X$ be the total number of computed off-diagonals in $P^X V^X$, and $U = U^A + U^B + U^C + U^D$.
Then, \cref{alg:Lazy_Interlace} enjoys the following guarantee.

\begin{thm}\label{thm:LazyInterlaceCorrectness_Complexity}
  \Cref{alg:Lazy_Interlace} returns a solution obtained by naive \textsc{InterlaceGreedy} in $\Ord(nd + U(d+\log n))$ time.
  If the lazy update works best and worst, it runs in $\Ord((n + k^2)d)$ and $\Ord(kn(d + \log n))$ time, respectively.
\end{thm}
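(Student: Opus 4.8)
The plan is to follow the two-part template established in \cref{thm:LazyGreedyCorrectness_Complexity}: first argue that \cref{alg:Lazy_Interlace} faithfully reproduces the selections of naive \textsc{InterlaceGreedy}, and then bound the running time in terms of $U = U^A + U^B + U^C + U^D$, specializing afterwards to the best and worst cases.

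For correctness, the key is to show that each call to \textsc{GetArgMax} with arguments $X, Y$ returns exactly $\argmax_{i \in \overline{X \cup Y}} f_i(X)$, i.e., the element that naive \textsc{InterlaceGreedy} would pick. First, by \cref{prop:rel_bet_Gre_Chol}, once \textsc{UpdateRow} finishes filling the $i$th row of the relevant Cholesky factor against the current solution $X$, the resulting $d_i$ satisfies $2\log d_i = f_i(X)$; by \cref{obs:independency}, this row update is well defined and row-independent, so the two interleaved factors $V^S$ and $V^T$ can be maintained separately. Second, because $f$ is submodular, the stale value $d_i$ stored before an update upper-bounds the freshly computed one, so the \textsc{LazyGreedy}-style test in \cref{line:interlace-if-max-S} is valid: the first popped element whose updated $d_i$ still dominates the queue is genuinely the maximizer. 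The one new ingredient relative to \cref{thm:LazyGreedyCorrectness_Complexity} is the coupling between the two queues. I would argue that the cross-removals keep each queue supported exactly on the right feasible set: when an element enters $S$ it is deleted from the $\bm{d}^T$ queue (and vice versa), so the domains used by \textsc{GetArgMax} coincide with $\overline{\step{S}{t-1} \cup \step{T}{t-1}}$ and $\overline{\step{T}{t-1} \cup \step{S}{t}}$, as required by the interlacing rule. A stale $d^T_i$ left in $T$'s queue after $i$'s row was partially built against an earlier $T$ still upper-bounds $f_i(\step{T}{t-1})$ by submodularity, so discarding such partial rows upon removal is harmless. The identical reasoning applies to the second invocation producing $C$ and $D$ from the inherited initial solution $\step{A}{1} = \{j^A_1\}$, and the final $\argmax$ over $\log\det$ values recorded in \cref{alg:Lazy_Interlace} matches the naive output.

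For the running time, I would mirror the accounting in \cref{thm:LazyGreedyCorrectness_Complexity}. Computing the diagonals $\sqrt{L_{i,i}}$ and constructing the constantly many priority queues costs $\Ord(nd)$. Each off-diagonal computed inside \textsc{UpdateRow} costs $\Ord(d)$ for the access to $L_{i,j_t}$ plus $\Ord(k) \lesssim \Ord(d)$ for the inner product, and each associated queue operation costs $\Ord(\log n)$; summing over all $U$ computed off-diagonals across the four factors gives $\Ord(U(d + \log n))$, for a total of $\Ord(nd + U(d + \log n))$. In the best case, when the lazy test succeeds on the first pop of every \textsc{GetArgMax} call, each of the four chains computes only the $k(k-1)/2$ off-diagonals of its own lower-triangular factor, so $U = \Ord(k^2)$, and only $\Ord(k)$ successful pops per chain incur queue updates, i.e., $\Ord(k \log n) \lesssim \Ord(nd)$; this yields $\Ord((n + k^2)d)$. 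In the worst case, each \textsc{GetArgMax} call degenerates into computing an entire column as in \textsc{FastGreedy}, so each factor contributes $(k-1)(n - k/2)$ off-diagonals, $U = \Ord(kn)$, and the bound becomes $\Ord(kn(d + \log n))$.

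The main obstacle I anticipate is the correctness argument for the coupled queues rather than the complexity bound: one must verify that removing an element from the opposite queue never invalidates the lazy upper-bound invariant, that the domains seen by \textsc{GetArgMax} track $\overline{X \cup Y}$ exactly, and that the boundary bookkeeping for $C$ and $D$ (whose first element is inherited from $\step{A}{1}$ and hence must not be recomputed) is handled correctly. Once these invariants are in place, the complexity analysis is a routine adaptation of the single-queue argument.
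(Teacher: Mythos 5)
Your proposal is correct and follows essentially the same route as the paper: correctness is reduced to the \textsc{LazyFastGreedy} argument (with the extra observation that the cross-removals keep each queue supported on $\overline{X \cup Y}$ and that stale $d_i$ values remain valid upper bounds by submodularity), and the running time is charged as $\Ord(nd + U(d+\log n))$ with $U = \Ord(k^2)$ in the best case and $U = \Ord(kn)$ in the worst case. The only cosmetic difference is that the paper counts the worst-case off-diagonals exactly as $4(k-1)(n-k)$ by exploiting that the candidate pool shrinks by two per step, whereas you use the looser per-factor bound $(k-1)(n-k/2)$; both give the same asymptotic conclusion.
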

\begin{proof}
  The correctness of \cref{alg:Lazy_Random} and the time complexity depending on the $U$ value follow from similar arguments to those in \cref{subsection:lazyfastgreedy}.
  We below discuss the best and worst cases.

  In the best case, \textsc{UpdateRow} is called up to $4k$ times and for $X \in \set{A, B, C, D}$, $U^X = k(k-1)/2$ off-diagonals of $\left(P^X V^X\right)[\step{X}{k}]$ are computed.
  Therefore, $4\cdot k(k-1)/2 = 2k(k-1) = \Ord(k^2)$ off-diagonals are computed, taking $\Ord(k^2 d)$ time in total.
  Moreover, the priority queues are updated at most $4k$ times, taking $\Ord(k \log n)$ ($\lesssim \Ord(nd)$) time in total.
  Thus, it runs in $\Ord((n + k^2)d)$ time.

  We then discuss the worst case.
  In the $t$th step ($t = 2,3,\dots, k$) of \textsc{GetInterlacedSets}, the off-diagonals in $V^S[\overline{\step{S}{t-1}\cup \step{T}{t-1}}, j_{t-1}^S]$ and $V^T[\overline{\step{S}{t}\cup \step{T}{t-1}}, j_{t-1}^T]$ are computed.
  Therefore, the total number of computed off-diagonals is
  \begin{align}
    U = \sum_{t=2}^{k}((n-2t+2)+(n-2t+1)+(n-2t+3)+(n-2t+2)) = 4(k-1)(n-k).
  \end{align}
  Hence, \cref{alg:Lazy_Interlace} runs in $\Ord(kn(d + \log n))$ time.
\end{proof}

\subsection{Experimental results on random, stochastic, and interlace greedy algorithms}\label{app-section:experiments-variants}
We experimentally compare the running time of the four versions, Naive, Lazy, Fast, and LazyFast, for each of \textsc{RandomGreedy}, \textsc{StochasticGreedy}, and \textsc{InterlaceGreedy}.
We also compare objective values achieved by the algorithms.

We use the same experimental setups as those in \cref{section:experiments}.
We here focus on the $B$-input setting; as we will see below, the results were similar to those of \textsc{Greedy}, which we confirmed to be true in both $B$- and $L$-input settings.
As mentioned above, \textsc{RandomGreedy}, \textsc{StochasticGreedy}, and \textsc{InterlaceGreedy} require $n\ge 2k$, $n\ge 3k$, and $n\ge 4k$, respectively.
Therefore, when increasing the $k$ value, we set the upper bound to $\lfloor n/4 \rfloor$ to satisfy all the requirements.
Since the procedures of \textsc{RandomGreedy} and \textsc{StochasticGreedy} depend on the $k$ value as in \cref{alg:Lazy_Random,alg:Lazy_Stochastic}, respectively, we cannot measure their running time incrementally for $k = 1,2,\dots,\lfloor n/4 \rfloor$.
Therefore, we select evaluation points in increments of $200$ and $500$ for \textsc{RandomGreedy} and \textsc{StochasticGreedy}, respectively.
We set the parameter $\varepsilon$ of \textsc{StochasticGreedy} to $0.5$.

\Cref{fig:random-greedy,fig:stochastic-greedy,fig:interlace-greedy} summarize the results of \textsc{RandomGreedy}, \textsc{StochasticGreedy}, and \textsc{InterlaceGreedy}, respectively.
Similar to the results of \textsc{Greedy}, our LazyFast version was the fastest in all cases.
In the synthetic setting with $k = 200$ and real-world settings, our LazyFast ran faster than Fast by avoiding the redundant computation of off-diagonals, while it still ran faster in the synthetic setting with $n = 6000$ due to the cache efficiency discussed in \cref{subsection:experiment-greedy}.

\Cref{fig:objective-values} summarizes the running times and objective values of the LazyFast version of the algorithms, where ratios relative to those of \textsc{LazyFastGreedy} are shown.
Although \textsc{LazyFastGreedy} empirically achieved the best objective values, the other algorithms performed comparably.
Note that while the objective values of \textsc{StochasticGreedy} were the worst, its LazyFast version ran the fastest in many cases.
Therefore, when the running time matters more than the solution quality, we can use the LazyFast version of \textsc{StochasticGreedy} instead of \textsc{LazyFastGreedy}.

\begin{figure}[tb]
  \begin{minipage}[b]{0.25\linewidth}
    \centering
    \includegraphics[width=\textwidth]{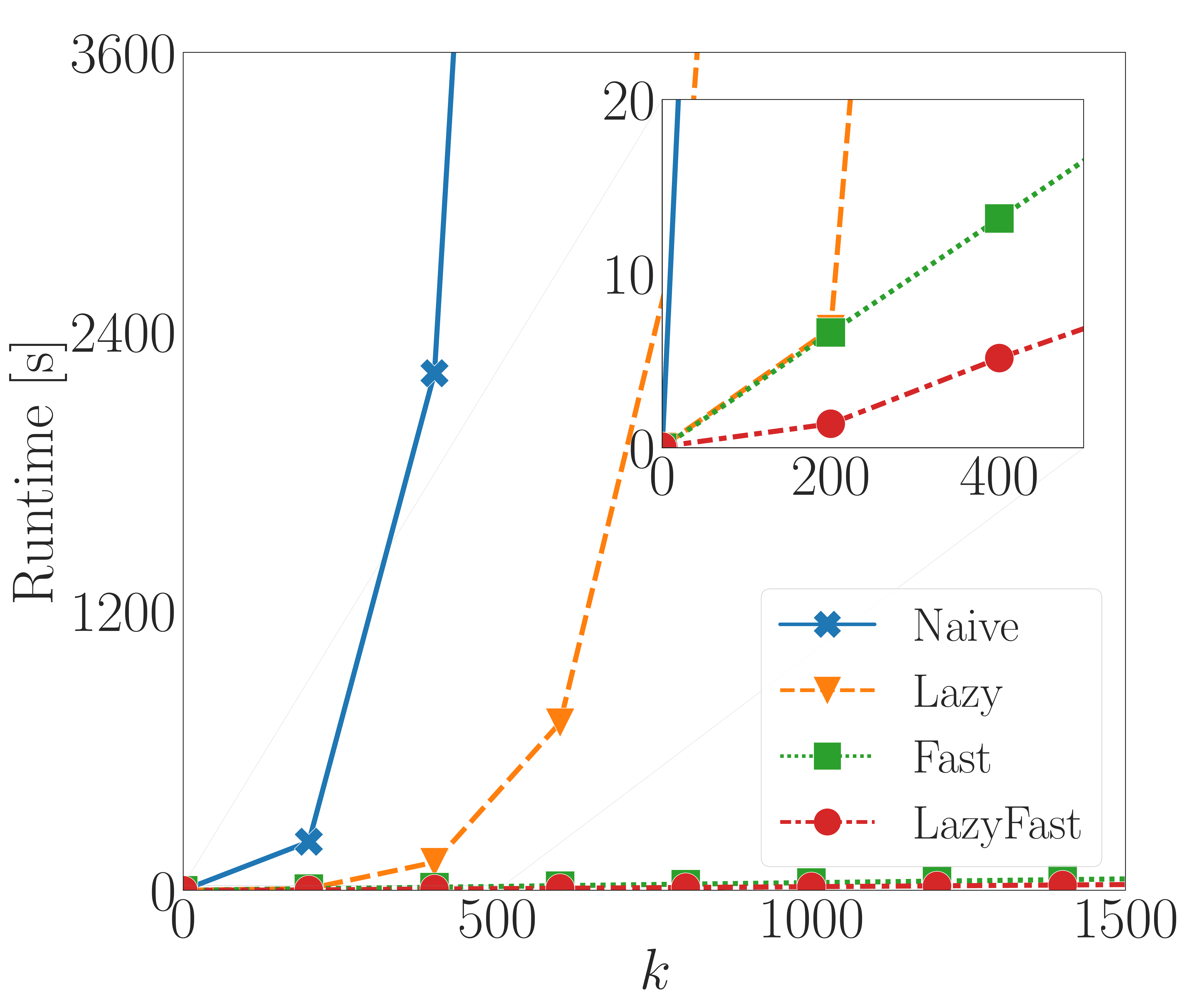}
    \subcaption{$n=6000$, Runtime}\label{subfig:random-synth-k-time}
  \end{minipage}%
  \begin{minipage}[b]{0.25\linewidth}
    \centering
    \includegraphics[width=\textwidth]{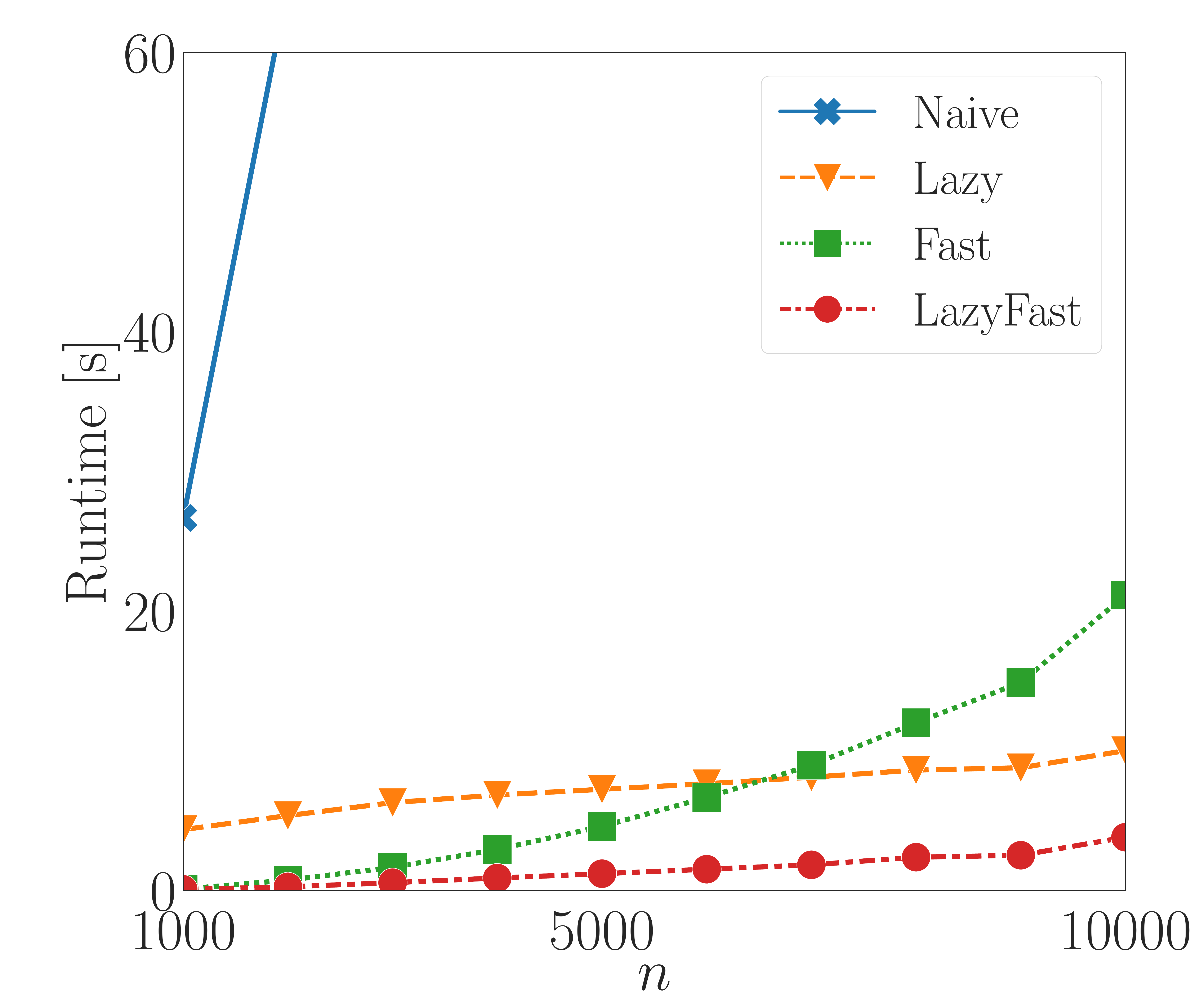}
    \subcaption{$k=200$, Runtime}\label{subfig:random-synth-n-time}
  \end{minipage}%
  \begin{minipage}[b]{0.25\linewidth}
    \centering
    \includegraphics[width=\textwidth]{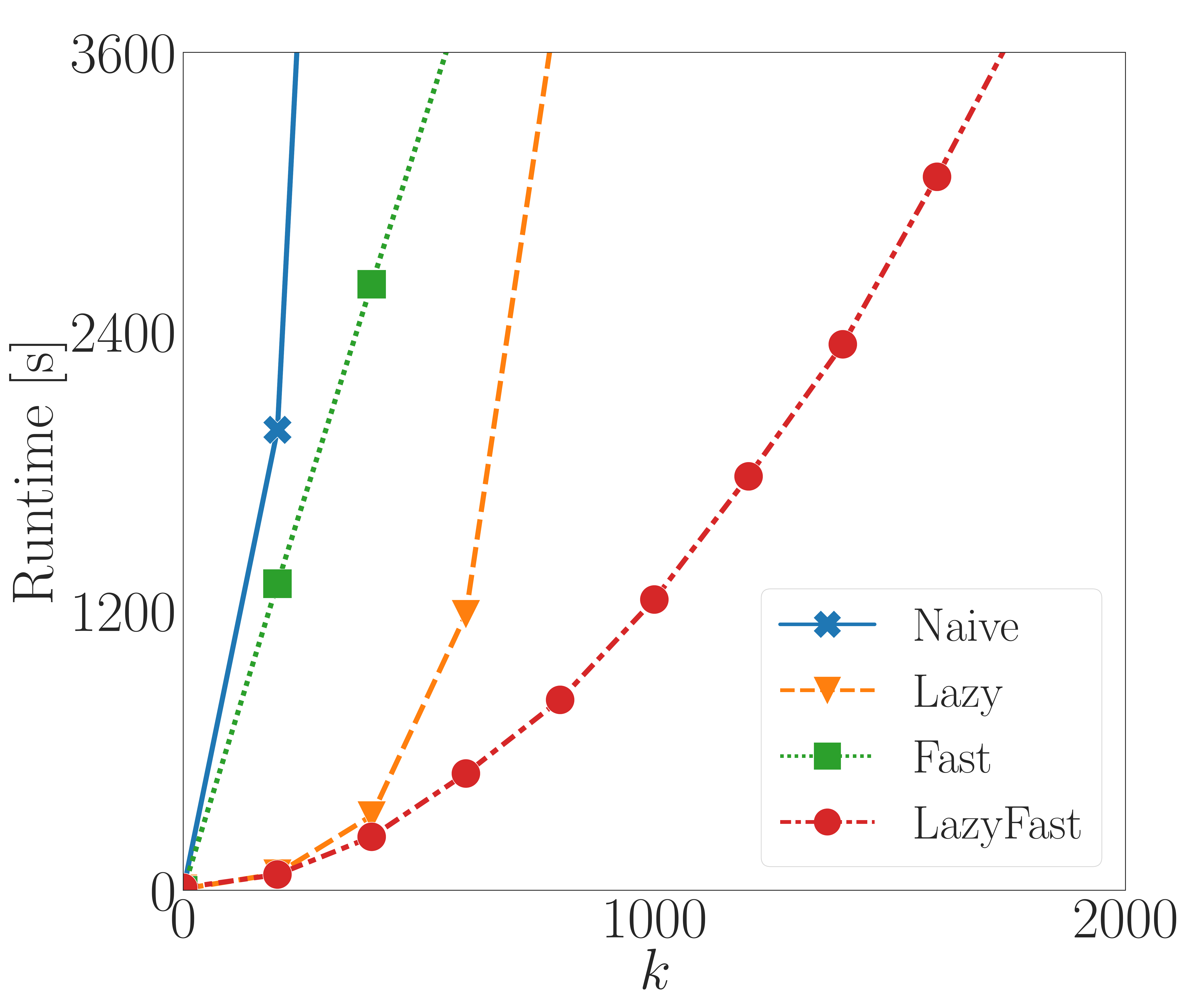}
    \subcaption{Netflix, Runtime}\label{subfig:random-movielens-k-time}
  \end{minipage}%
  \begin{minipage}[b]{0.25\linewidth}
    \centering
    \includegraphics[width=\textwidth]{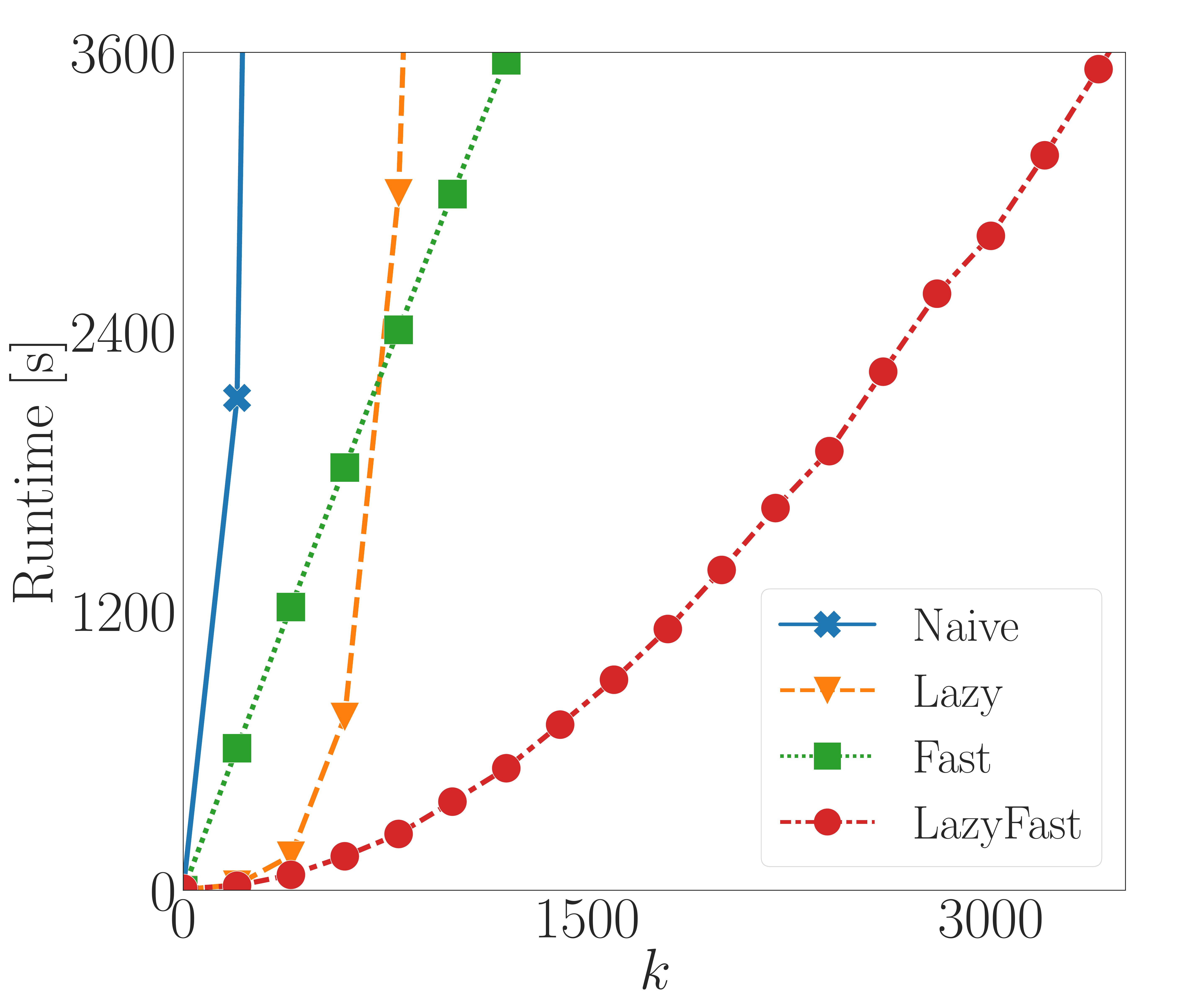}
    \subcaption{MovieLens, Runtime}\label{subfig:random-netflix-k-time}
  \end{minipage}
  \begin{minipage}[b]{0.25\linewidth}
    \vspace{10pt}
    \centering
    \includegraphics[width=\textwidth]{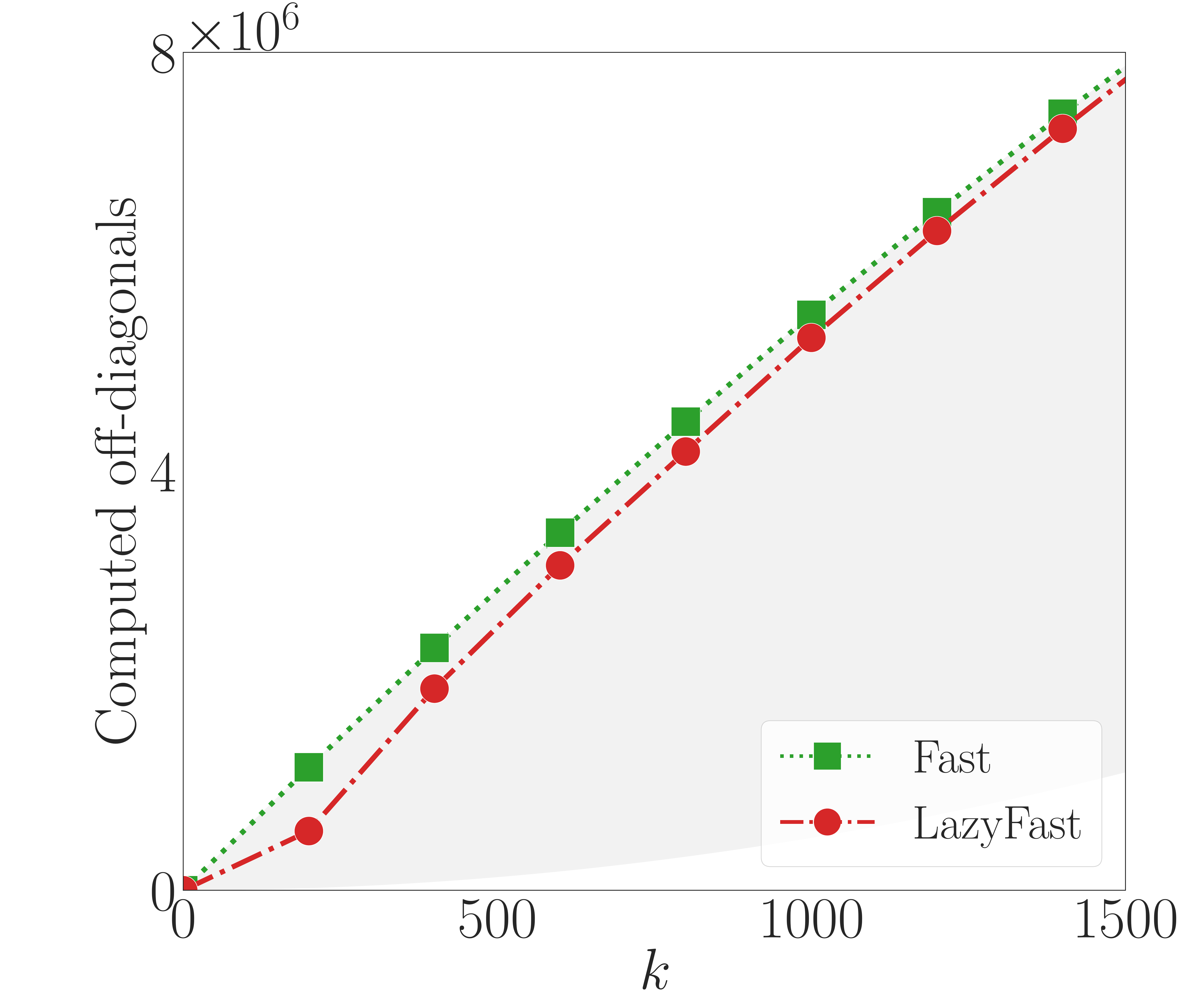}
    \subcaption{$n=6000$, Off-diag.}\label{subfig:random-synth-k-offdiag}
  \end{minipage}%
  \begin{minipage}[b]{0.25\linewidth}
    \vspace{10pt}
    \centering
    \includegraphics[width=\textwidth]{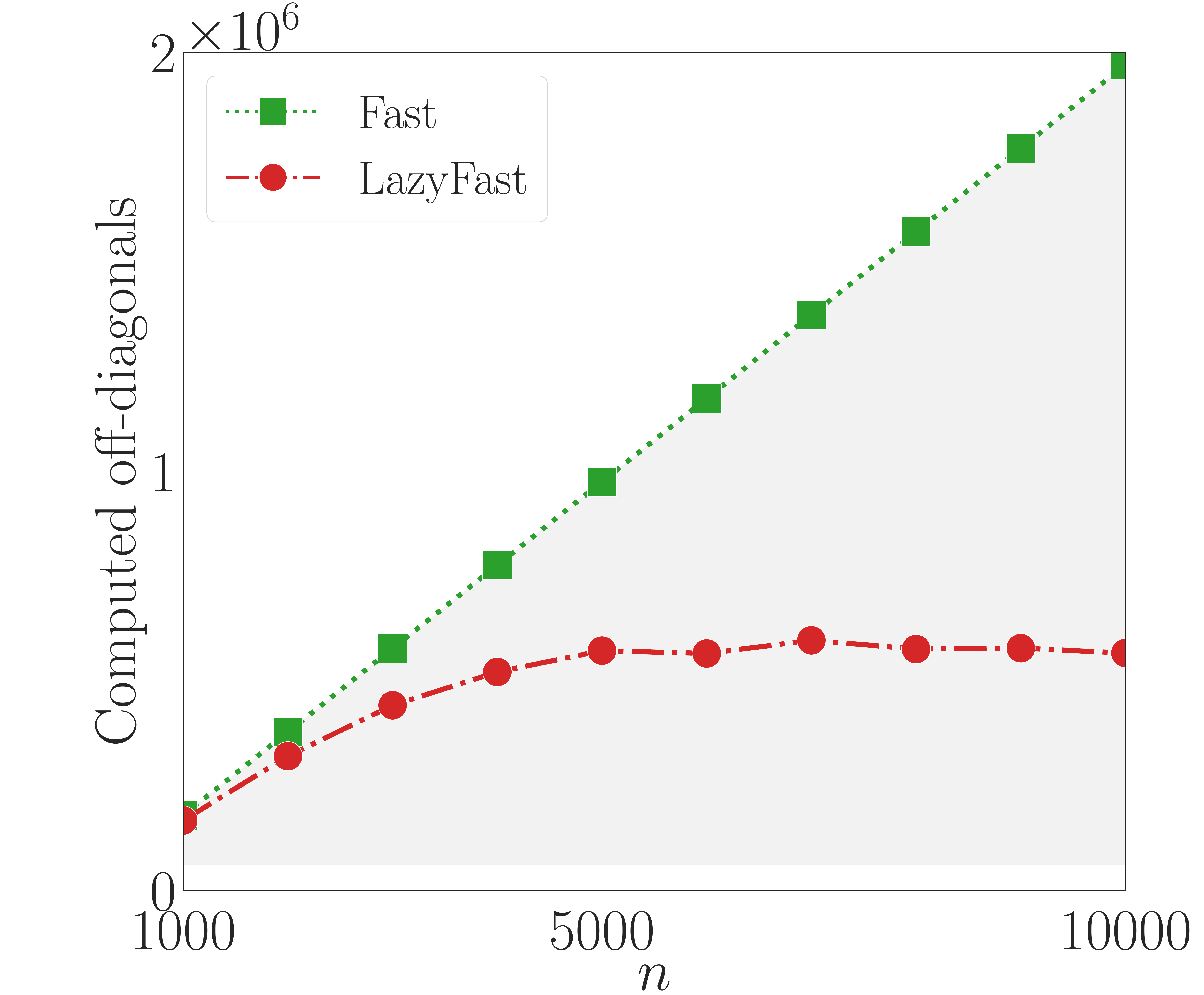}
    \subcaption{$k=200$, Off-diag.}\label{subfig:random-synth-n-offdiag}
  \end{minipage}%
  \begin{minipage}[b]{0.25\linewidth}
    \vspace{10pt}
    \centering
    \includegraphics[width=\textwidth]{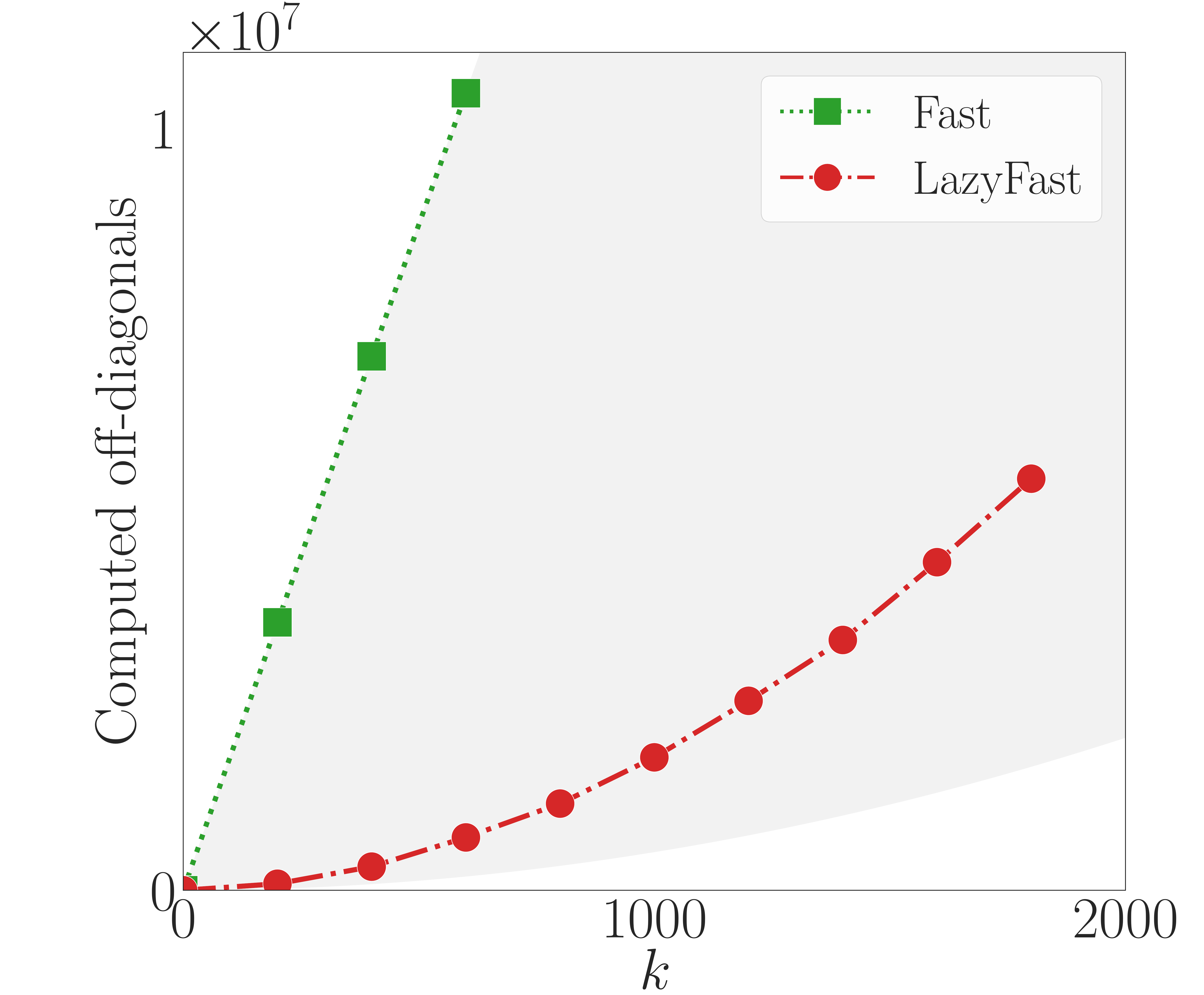}
    \subcaption{Netflix, Off-diag.}\label{subfig:random-movielens-k-offdiag}
  \end{minipage}%
  \begin{minipage}[b]{0.25\linewidth}
    \vspace{10pt}
    \centering
    \includegraphics[width=\textwidth]{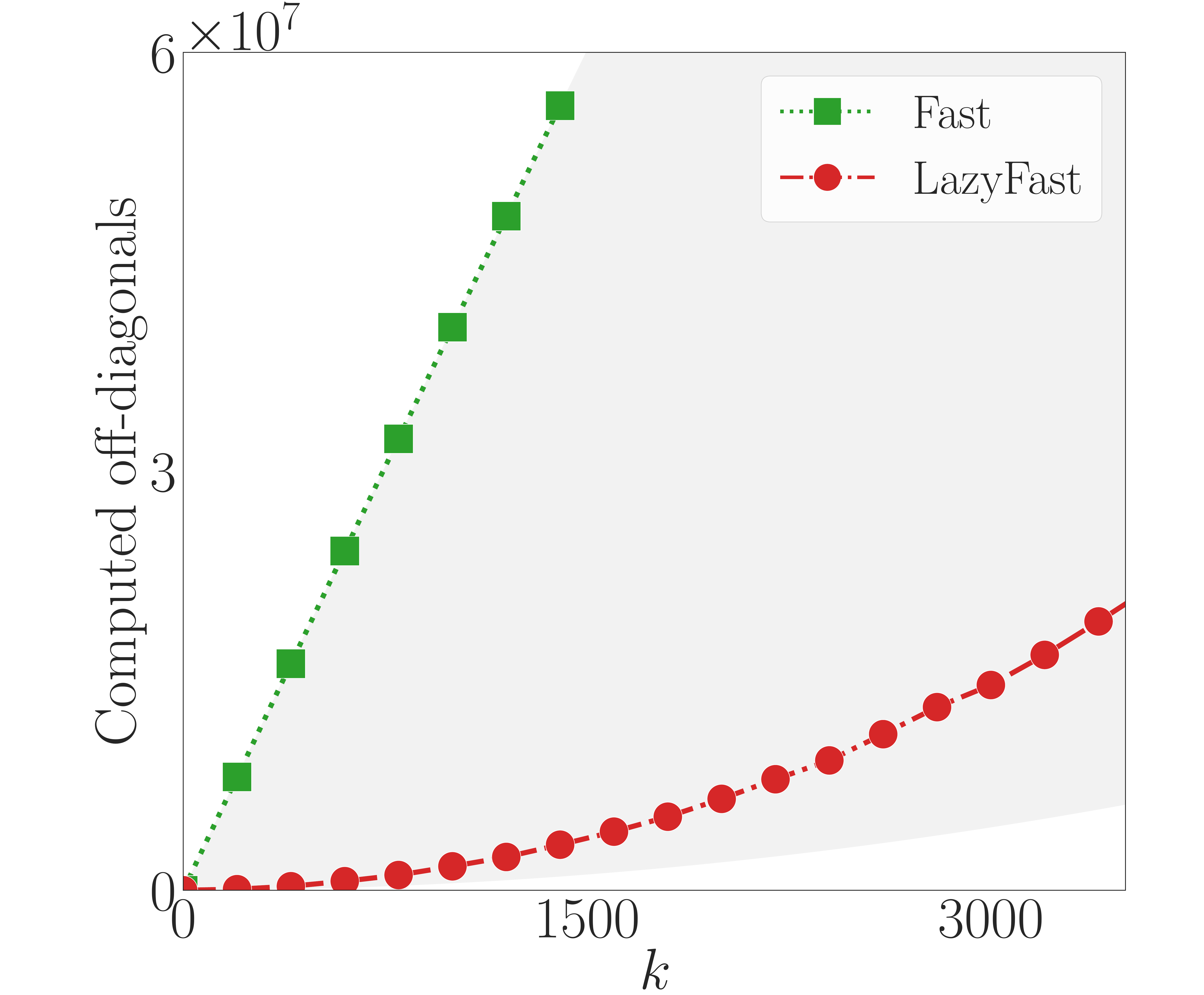}
    \subcaption{MovieLens, Off-diag.}\label{subfig:random-netflix-k-offdiag}
  \end{minipage}
  \caption{Results of \textsc{RandomGreedy}.
  In \cref{subfig:random-synth-k-time}, enlarged views of lower left parts are shown for visibility.
  In the lower figures, the gray band indicates the range of the possible number of computed off-diagonals: $\left[k(k-1)/2, (k-1)(n-k/2)\right]$.}\label{fig:random-greedy}
  \vspace{10pt}
  \begin{minipage}[b]{0.25\linewidth}
    \centering
    \includegraphics[width=\textwidth]{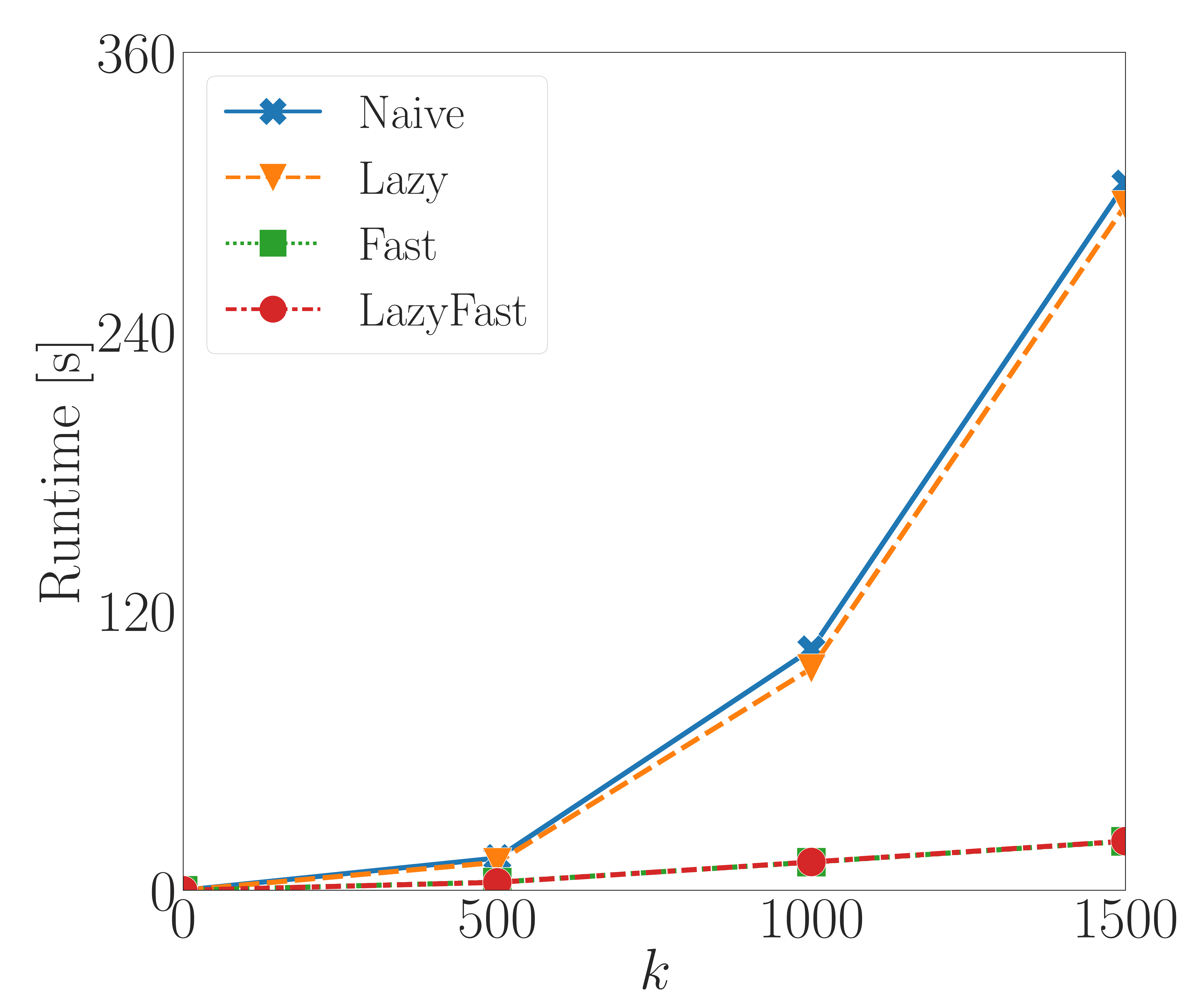}
    \subcaption{$n=6000$, Runtime}\label{subfig:stochastic-synth-k-time}
  \end{minipage}%
  \begin{minipage}[b]{0.25\linewidth}
    \centering
    \includegraphics[width=\textwidth]{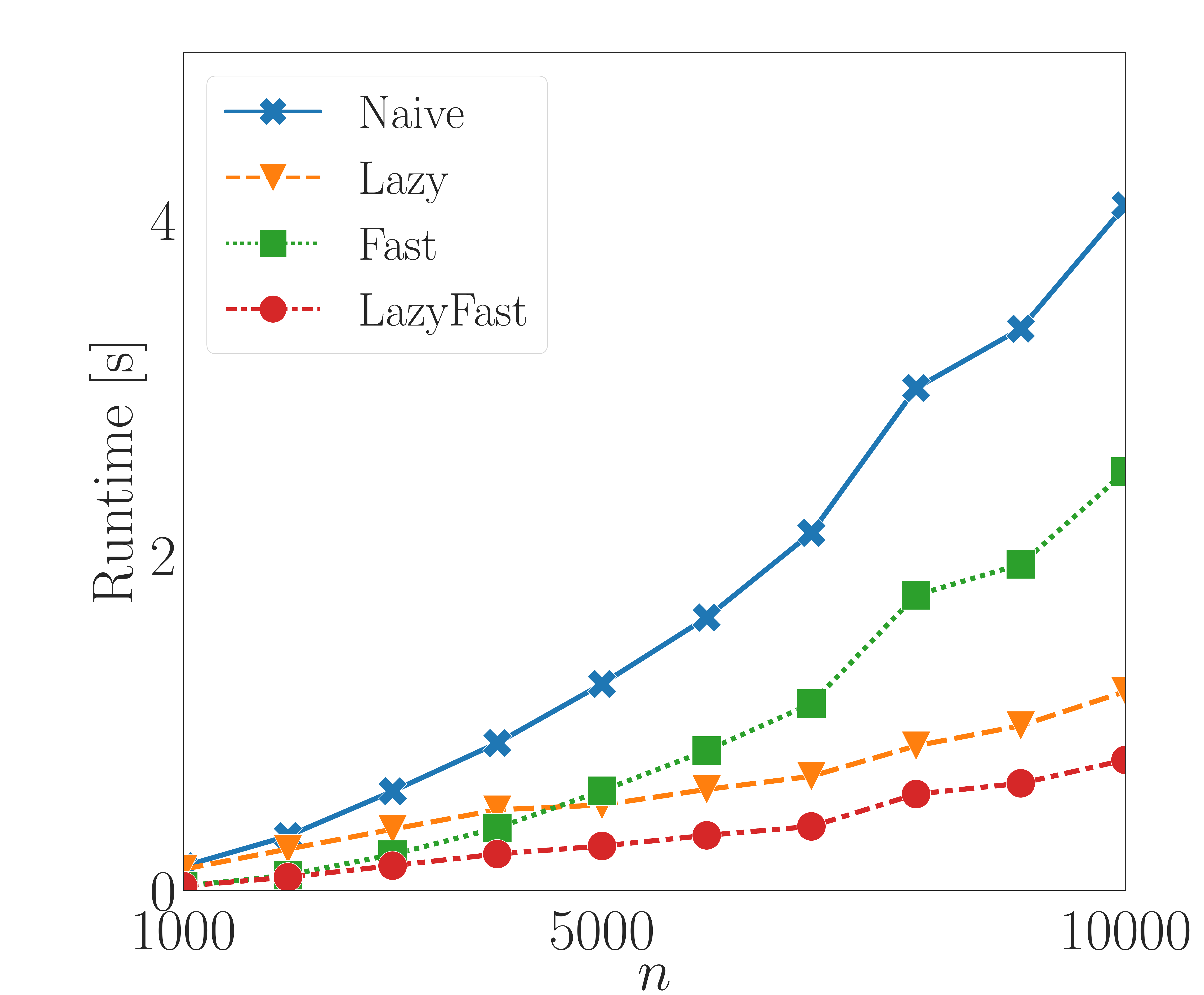}
    \subcaption{$k=200$, Runtime}\label{subfig:stochastic-synth-n-time}
  \end{minipage}%
  \begin{minipage}[b]{0.25\linewidth}
    \centering
    \includegraphics[width=\textwidth]{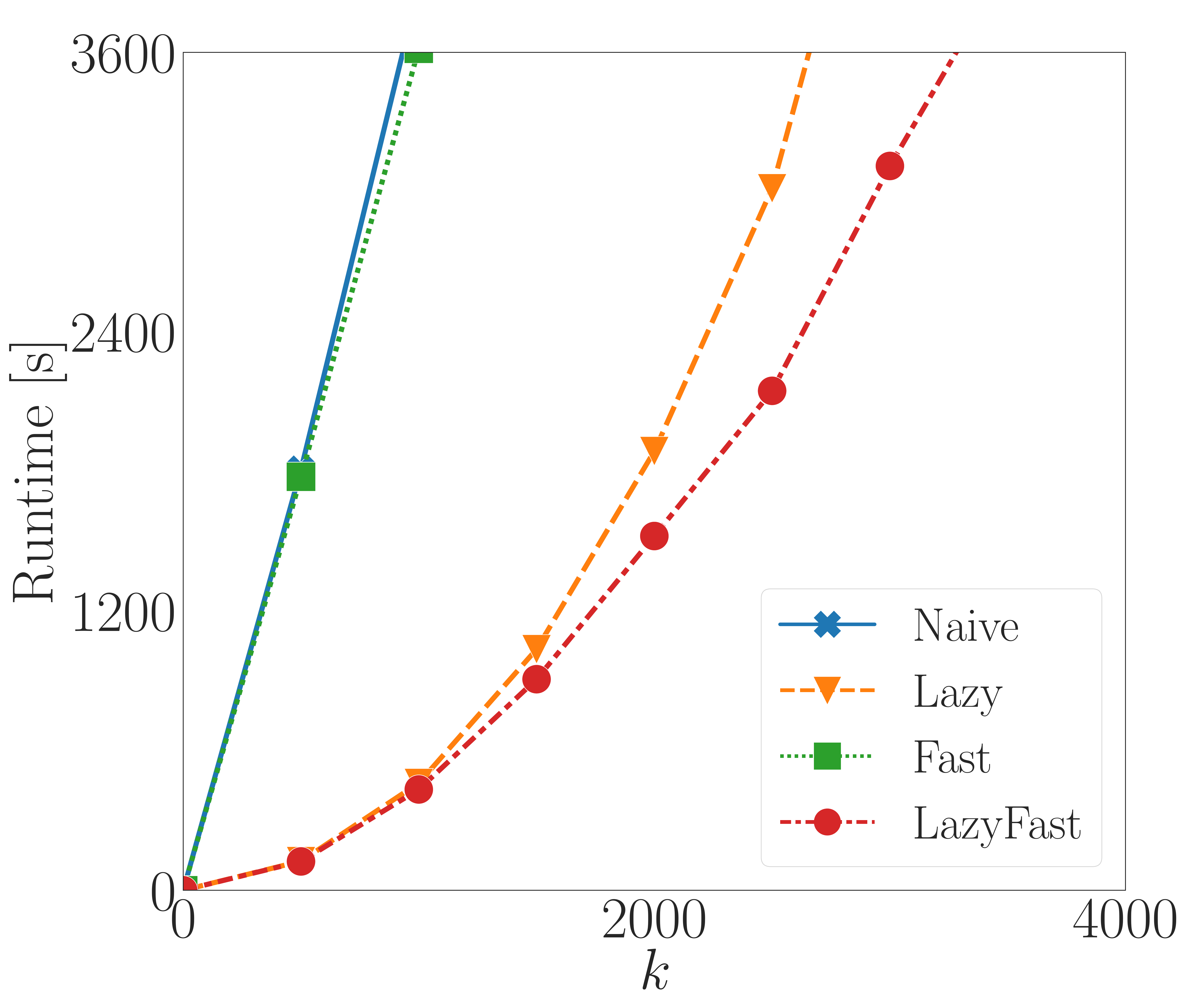}
    \subcaption{Netflix, Runtime}\label{subfig:stochastic-movielens-k-time}
  \end{minipage}%
  \begin{minipage}[b]{0.25\linewidth}
    \centering
    \includegraphics[width=\textwidth]{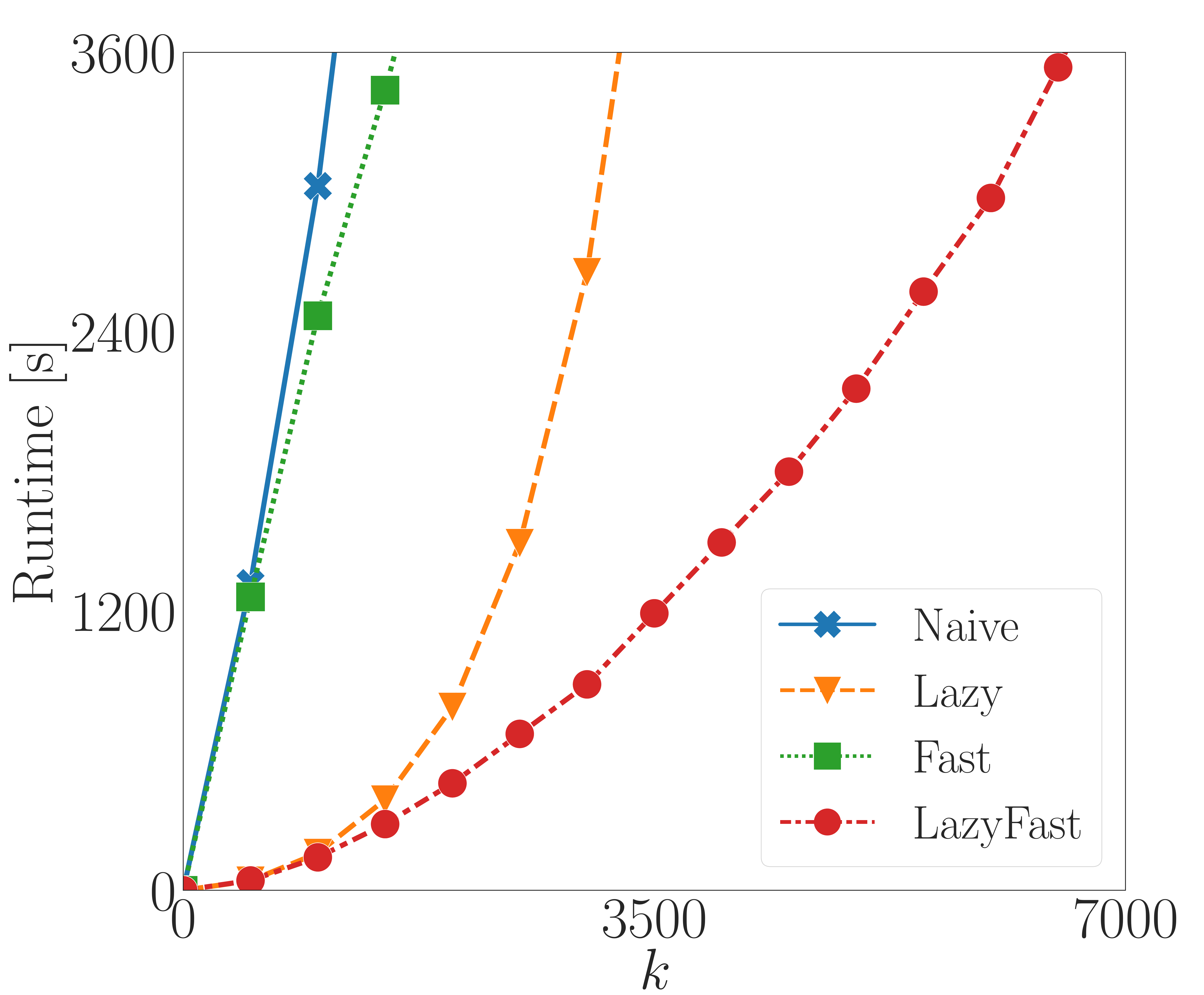}
    \subcaption{MovieLens, Runtime}\label{subfig:stochastic-netflix-k-time}
  \end{minipage}
  \begin{minipage}[b]{0.25\linewidth}
    \vspace{10pt}
    \centering
    \includegraphics[width=\textwidth]{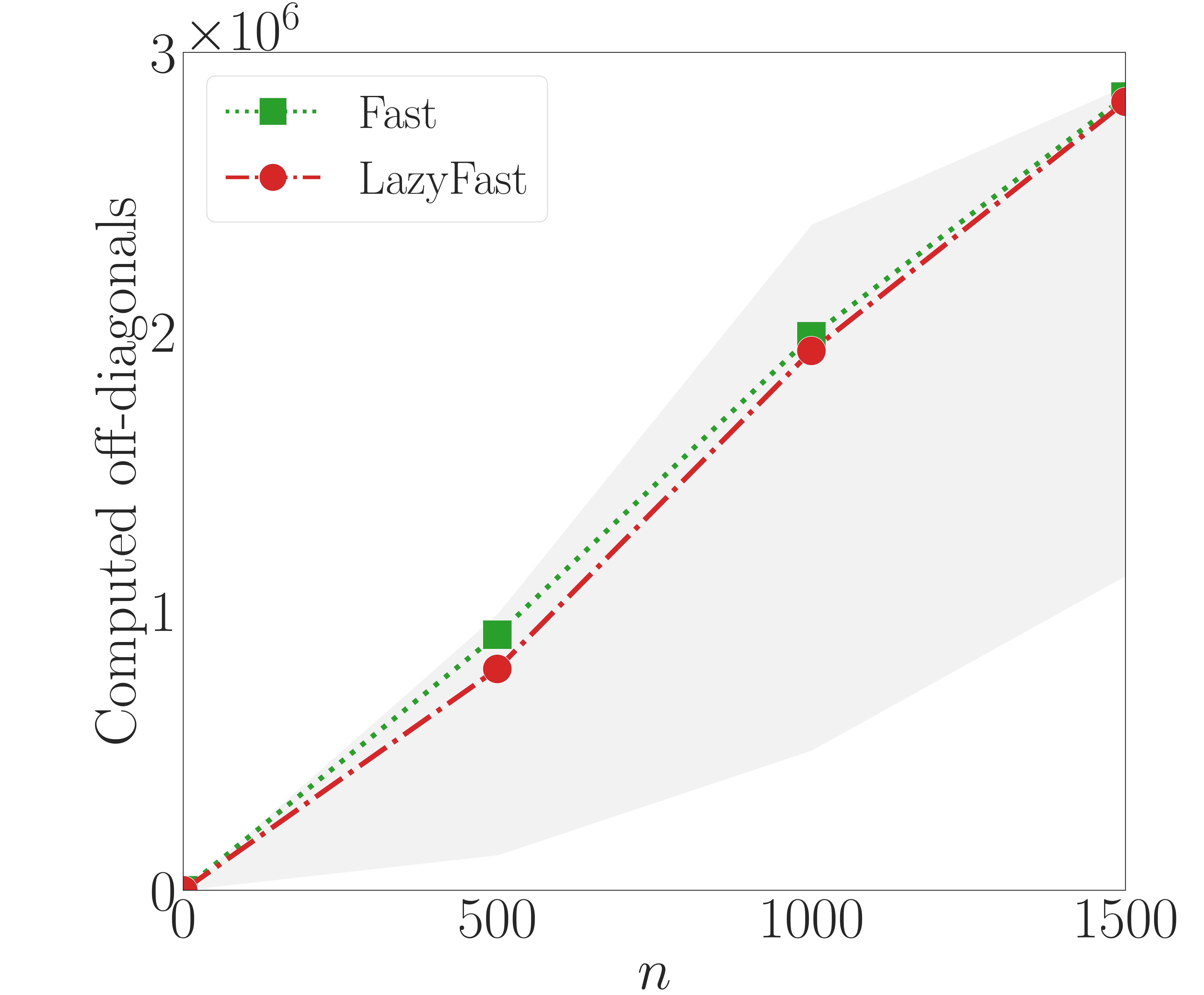}
    \subcaption{$n=6000$, Off-diag.}\label{subfig:stochastic-synth-k-offdiag}
  \end{minipage}%
  \begin{minipage}[b]{0.25\linewidth}
    \vspace{10pt}
    \centering
    \includegraphics[width=\textwidth]{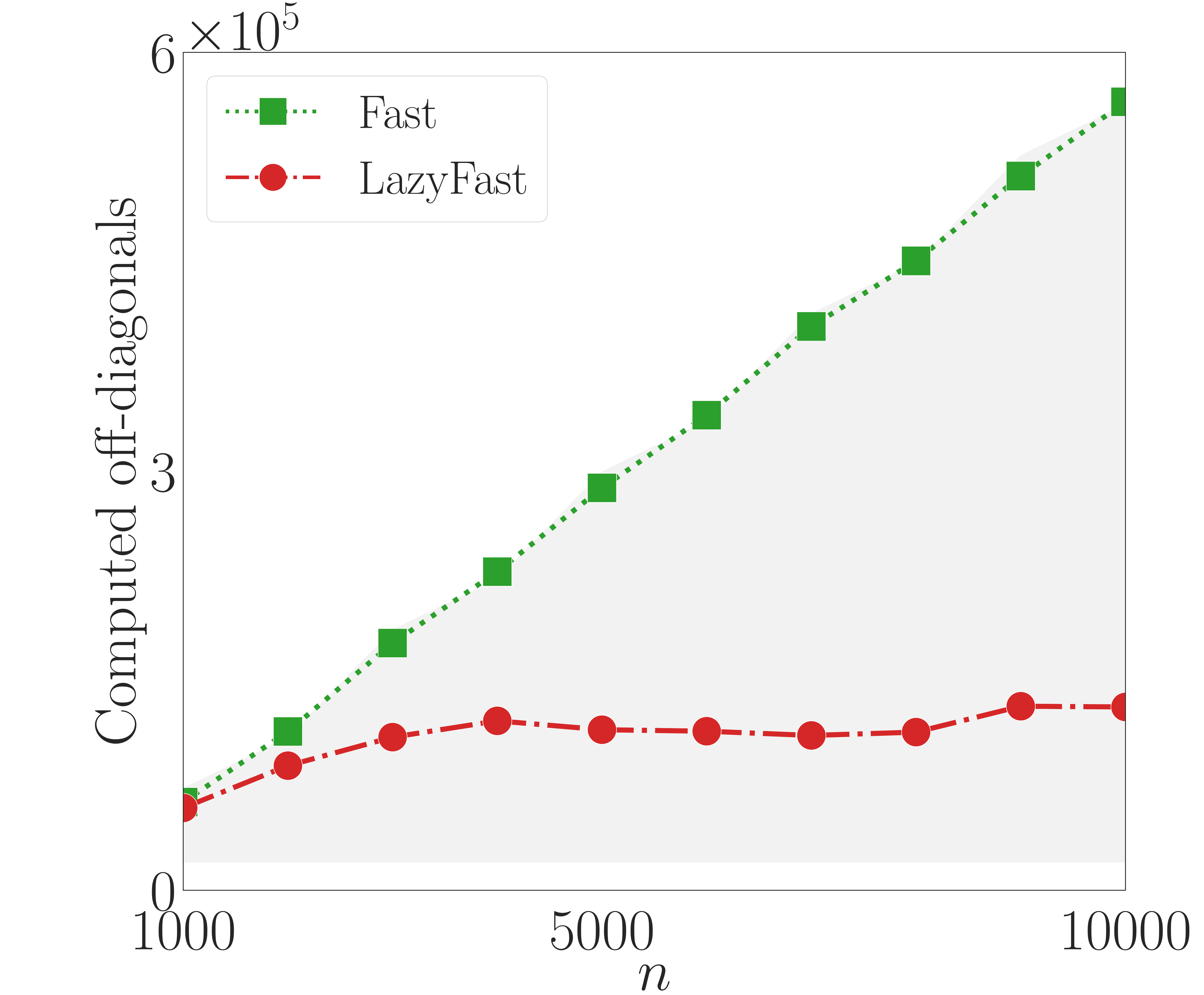}
    \subcaption{$k=200$, Off-diag.}\label{subfig:stochastic-synth-n-offdiag}
  \end{minipage}%
  \begin{minipage}[b]{0.25\linewidth}
    \vspace{10pt}
    \centering
    \includegraphics[width=\textwidth]{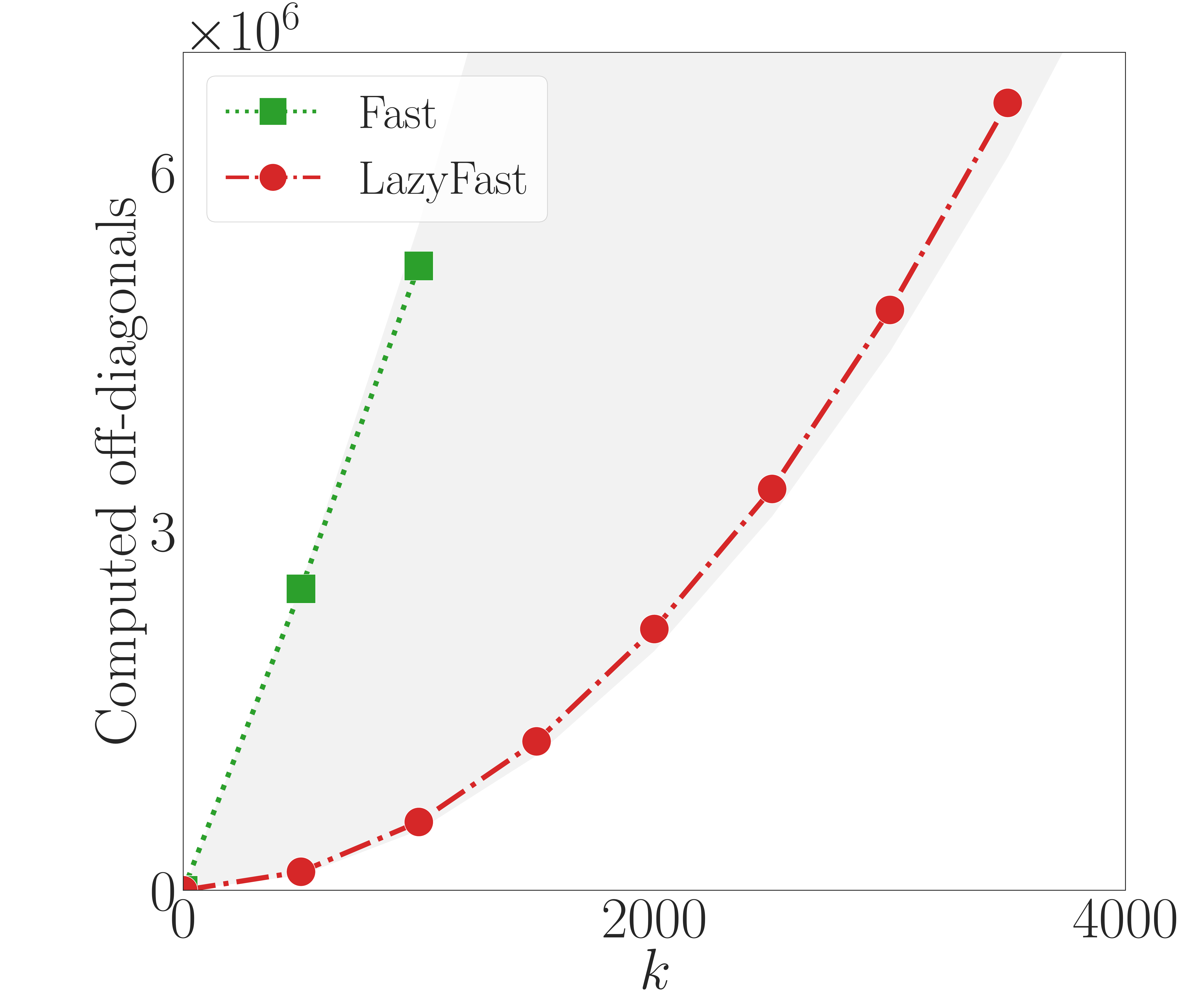}
    \subcaption{Netflix, Off-diag.}\label{subfig:stochastic-movielens-k-offdiag}
  \end{minipage}%
  \begin{minipage}[b]{0.25\linewidth}
    \vspace{10pt}
    \centering
    \includegraphics[width=\textwidth]{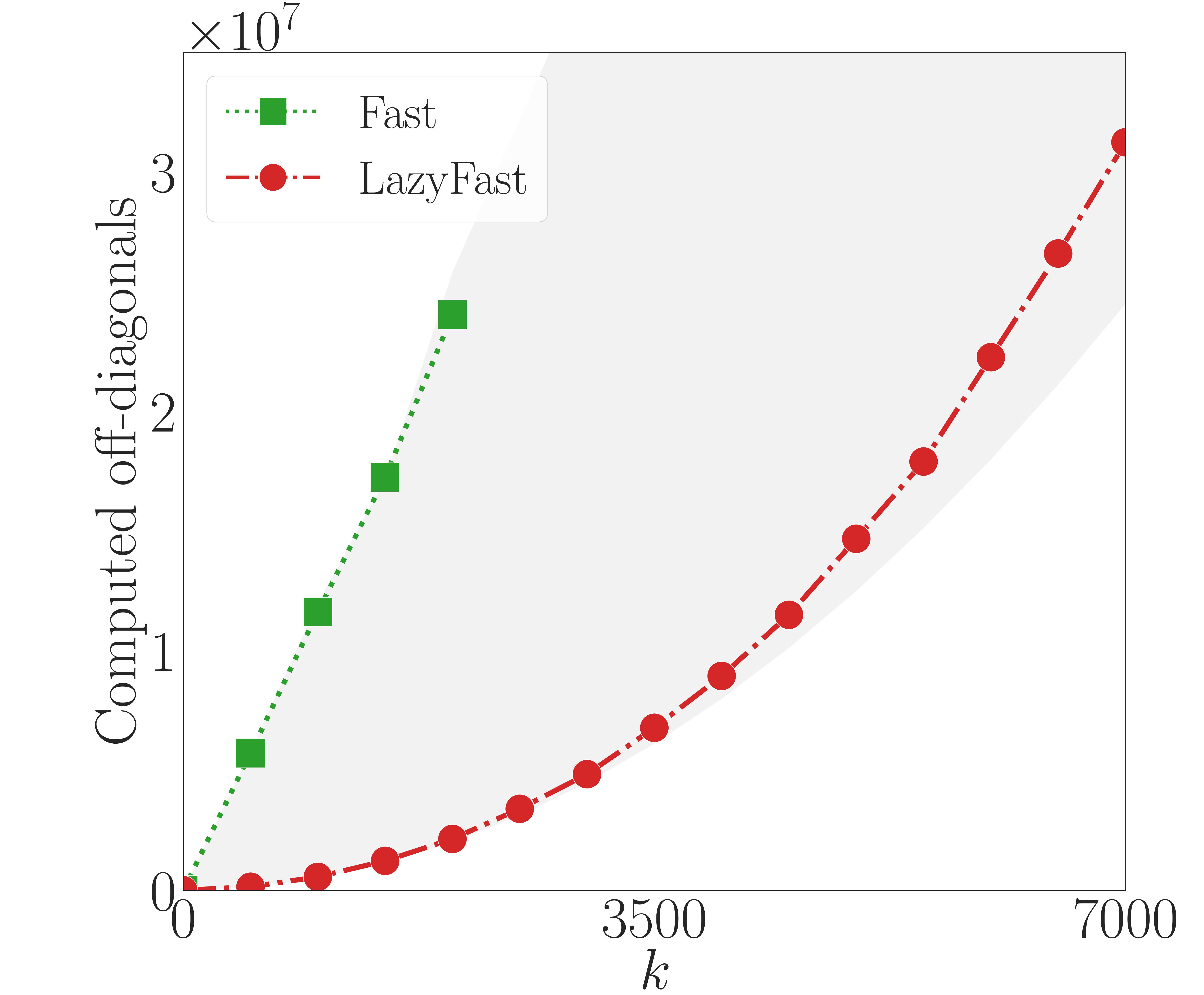}
    \subcaption{MovieLens, Off-diag.}\label{subfig:stochastic-netflix-k-offdiag}
  \end{minipage}
  \caption{Results of \textsc{StochasticGreedy}.
  In the lower figures, the gray band indicates the range of the possible number of computed off-diagonals: $\left[k(k-1)/2, \left(n-k/2\right)(k-q-1)+ kq/2\right]$, where $q\in\N$ is the quotient of $n$ divided by $s$.}\label{fig:stochastic-greedy}
\end{figure}

\begin{figure}[tb]
  \begin{minipage}[b]{0.25\linewidth}
    \centering
    \includegraphics[width=\textwidth]{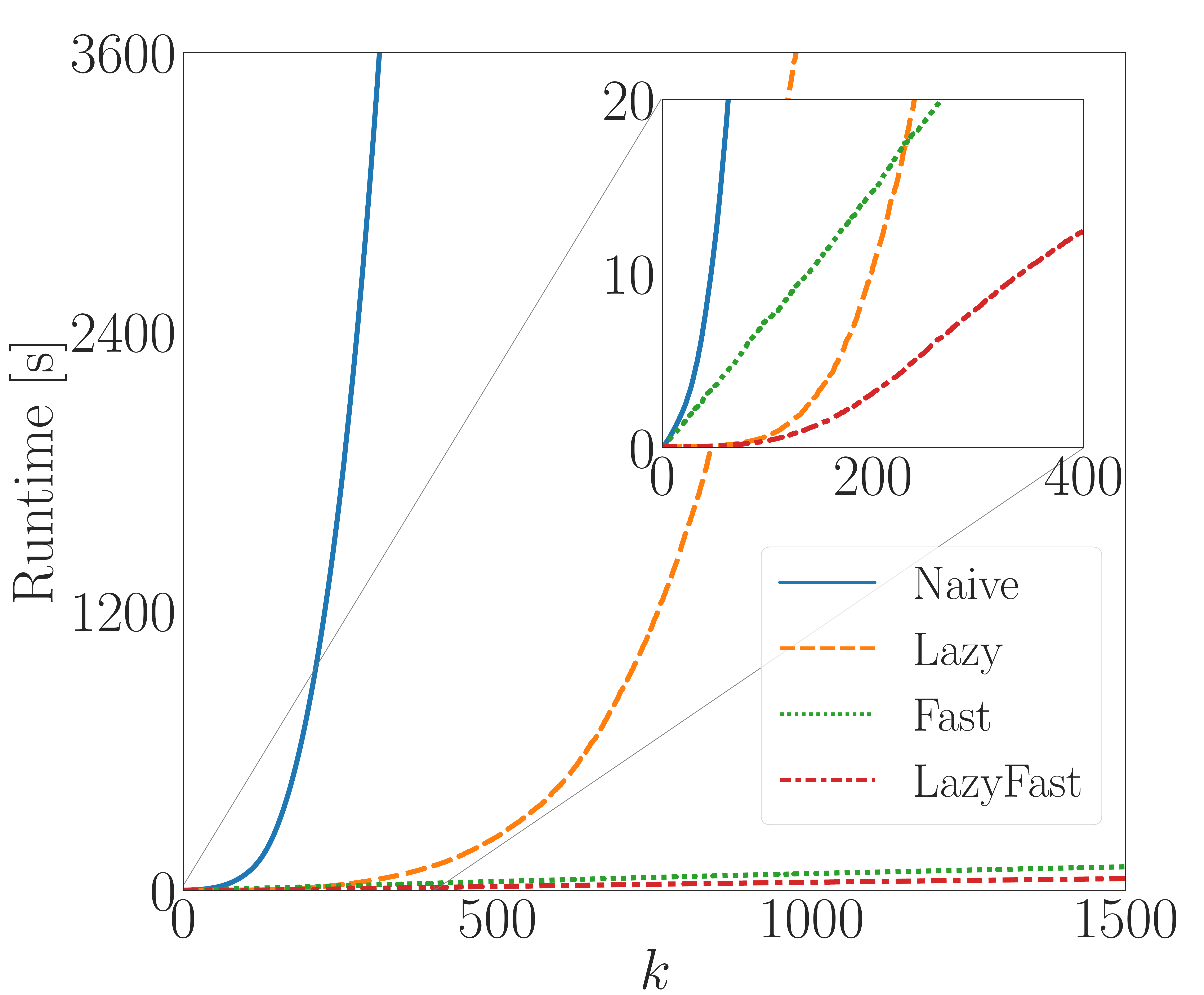}
    \subcaption{$n=6000$, Runtime}\label{subfig:interlace-synth-k-time}
  \end{minipage}%
  \begin{minipage}[b]{0.25\linewidth}
    \centering
    \includegraphics[width=\textwidth]{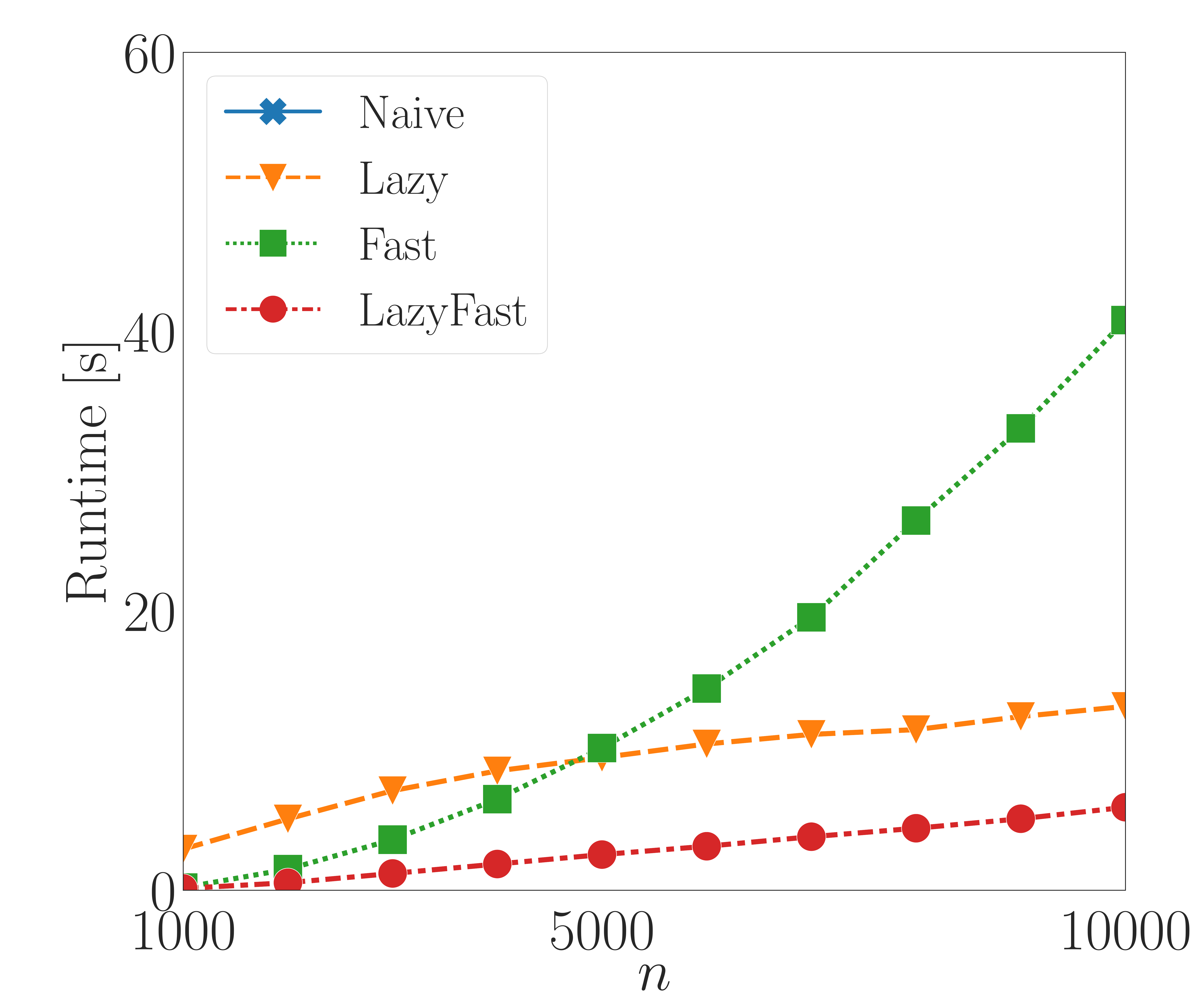}
    \subcaption{$k=200$, Runtime}\label{subfig:interlace-synth-n-time}
  \end{minipage}%
  \begin{minipage}[b]{0.25\linewidth}
    \centering
    \includegraphics[width=\textwidth]{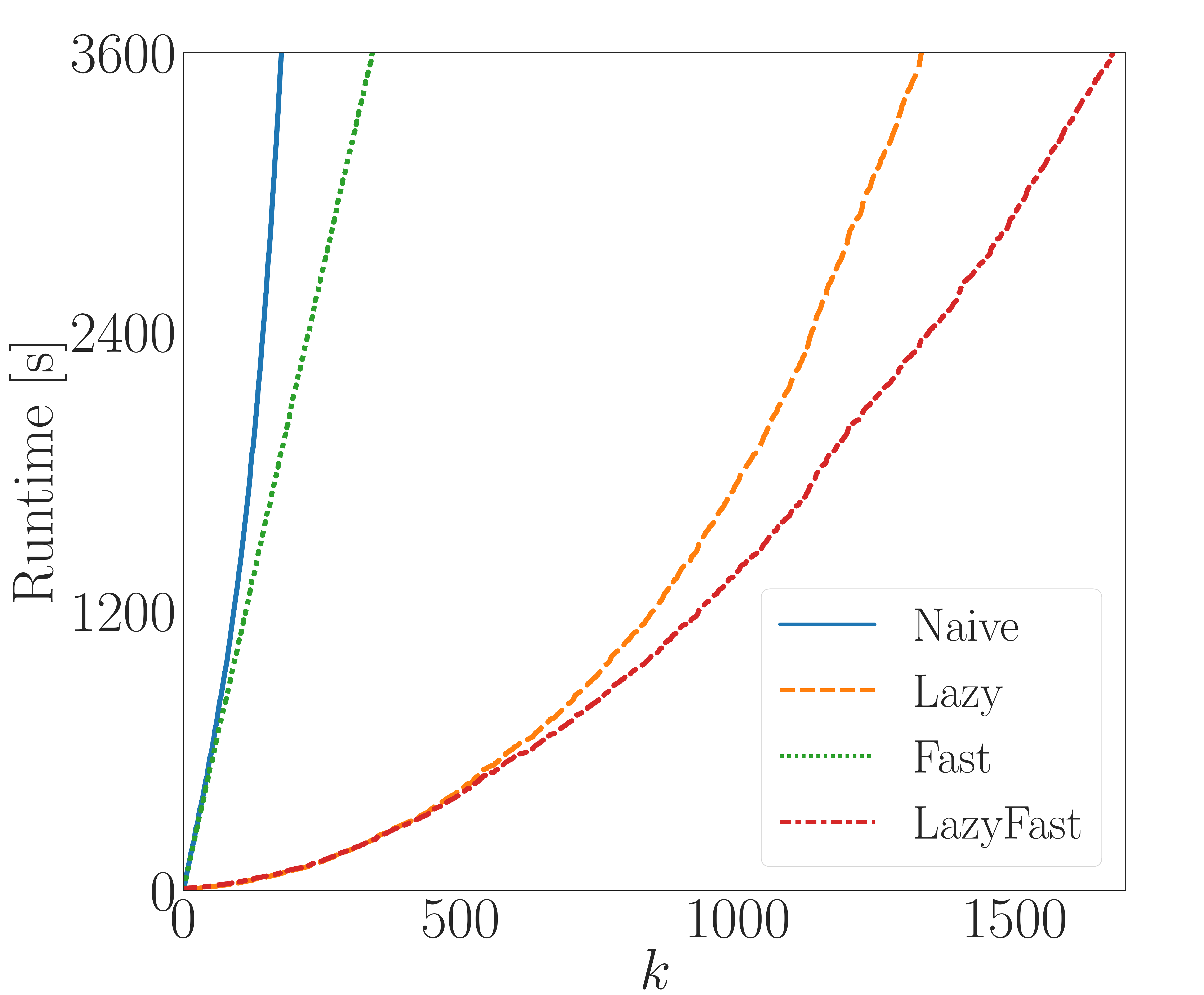}
    \subcaption{Netflix, Runtime}\label{subfig:interlace-movielens-k-time}
  \end{minipage}%
  \begin{minipage}[b]{0.25\linewidth}
    \centering
    \includegraphics[width=\textwidth]{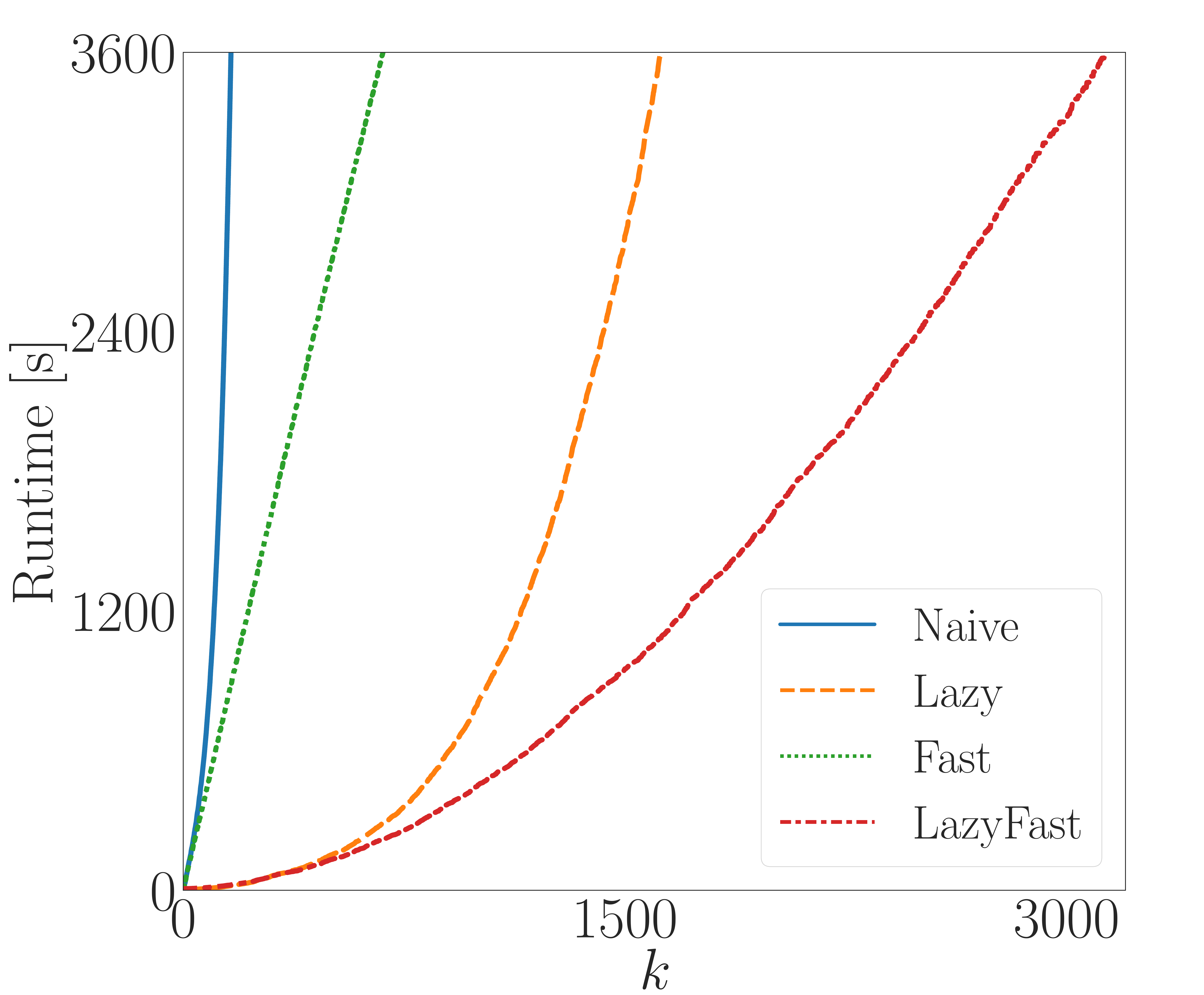}
    \subcaption{MovieLens, Runtime}\label{subfig:interlace-netflix-k-time}
  \end{minipage}
  \begin{minipage}[b]{0.25\linewidth}
    \vspace{10pt}
    \centering
    \includegraphics[width=\textwidth]{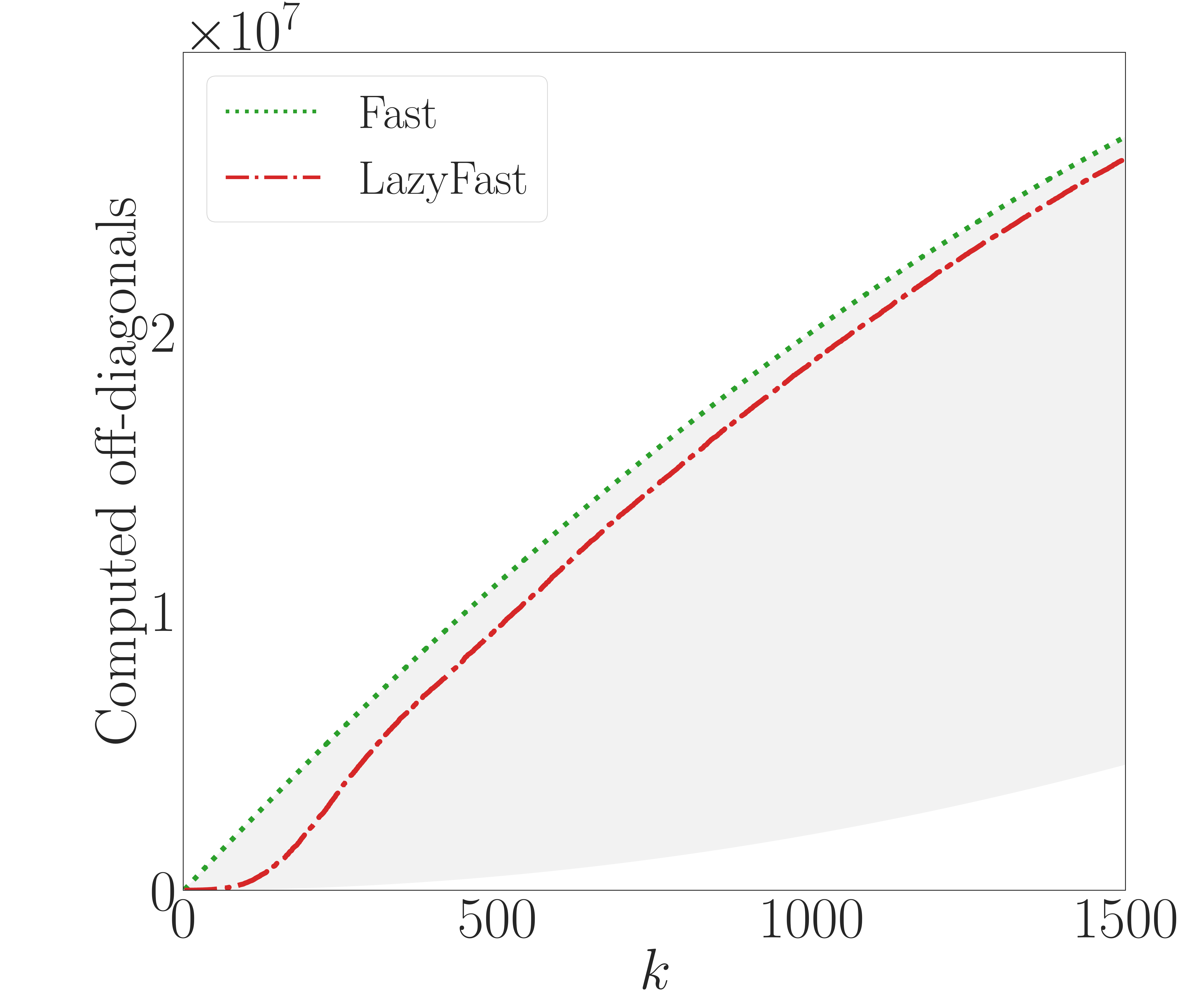}
    \subcaption{$n=6000$, Off-diag.}\label{subfig:interlace-synth-k-offdiag}
  \end{minipage}%
  \begin{minipage}[b]{0.25\linewidth}
    \vspace{10pt}
    \centering
    \includegraphics[width=\textwidth]{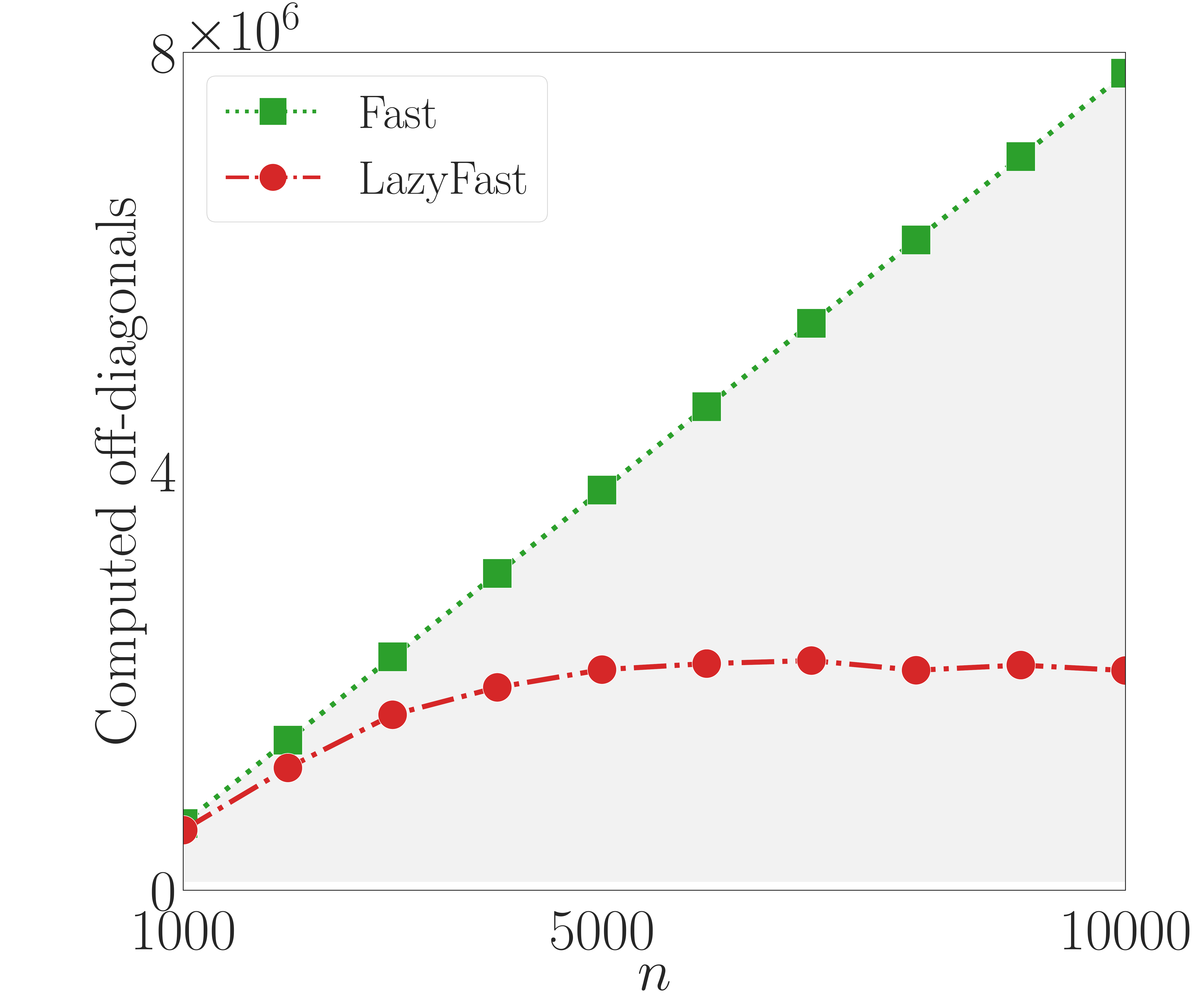}
    \subcaption{$k=200$, Off-diag.}\label{subfig:interlace-synth-n-offdiag}
  \end{minipage}%
  \begin{minipage}[b]{0.25\linewidth}
    \vspace{10pt}
    \centering
    \includegraphics[width=\textwidth]{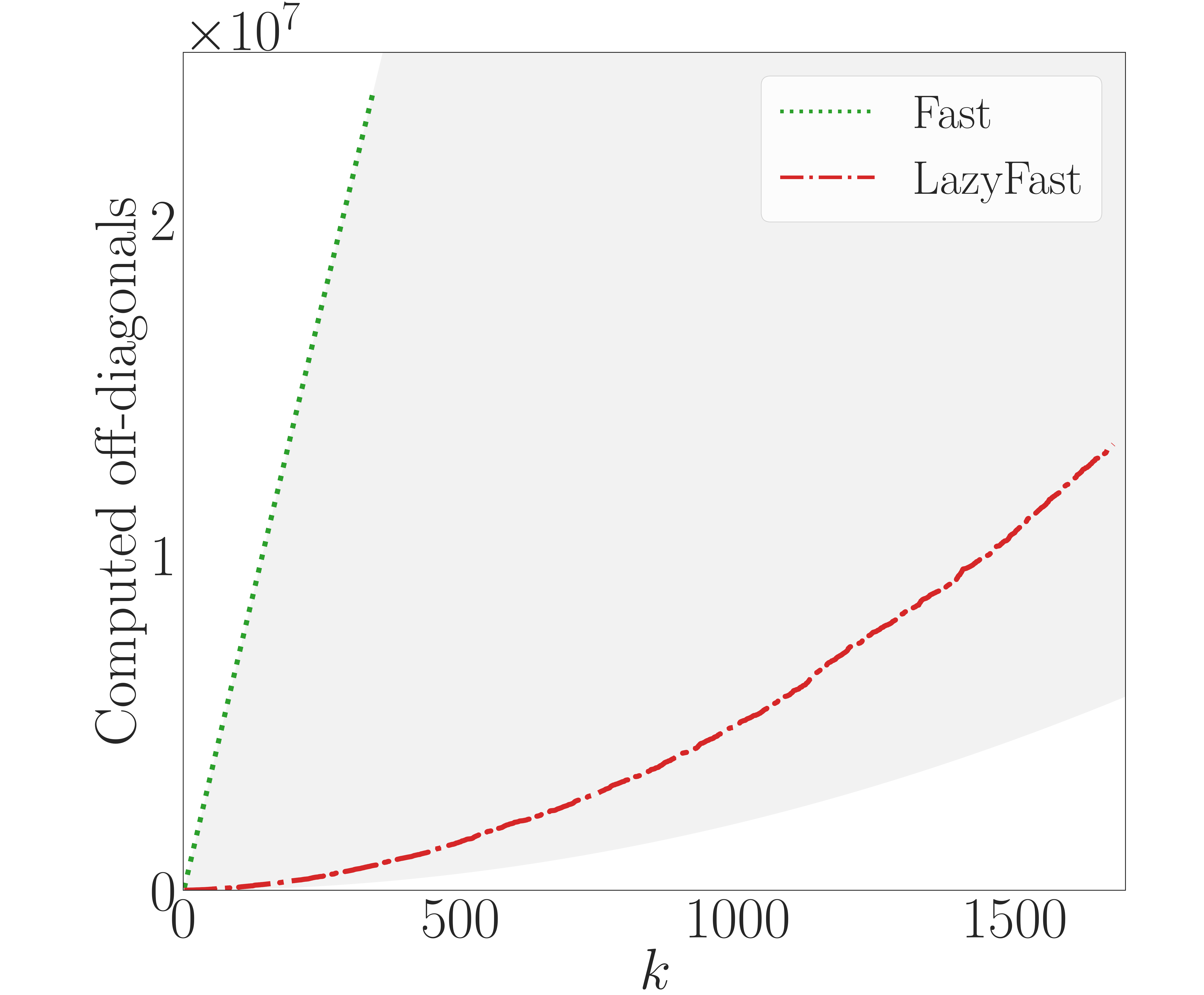}
    \subcaption{Netflix, Off-diag.}\label{subfig:interlace-movielens-k-offdiag}
  \end{minipage}%
  \begin{minipage}[b]{0.25\linewidth}
    \vspace{10pt}
    \centering
    \includegraphics[width=\textwidth]{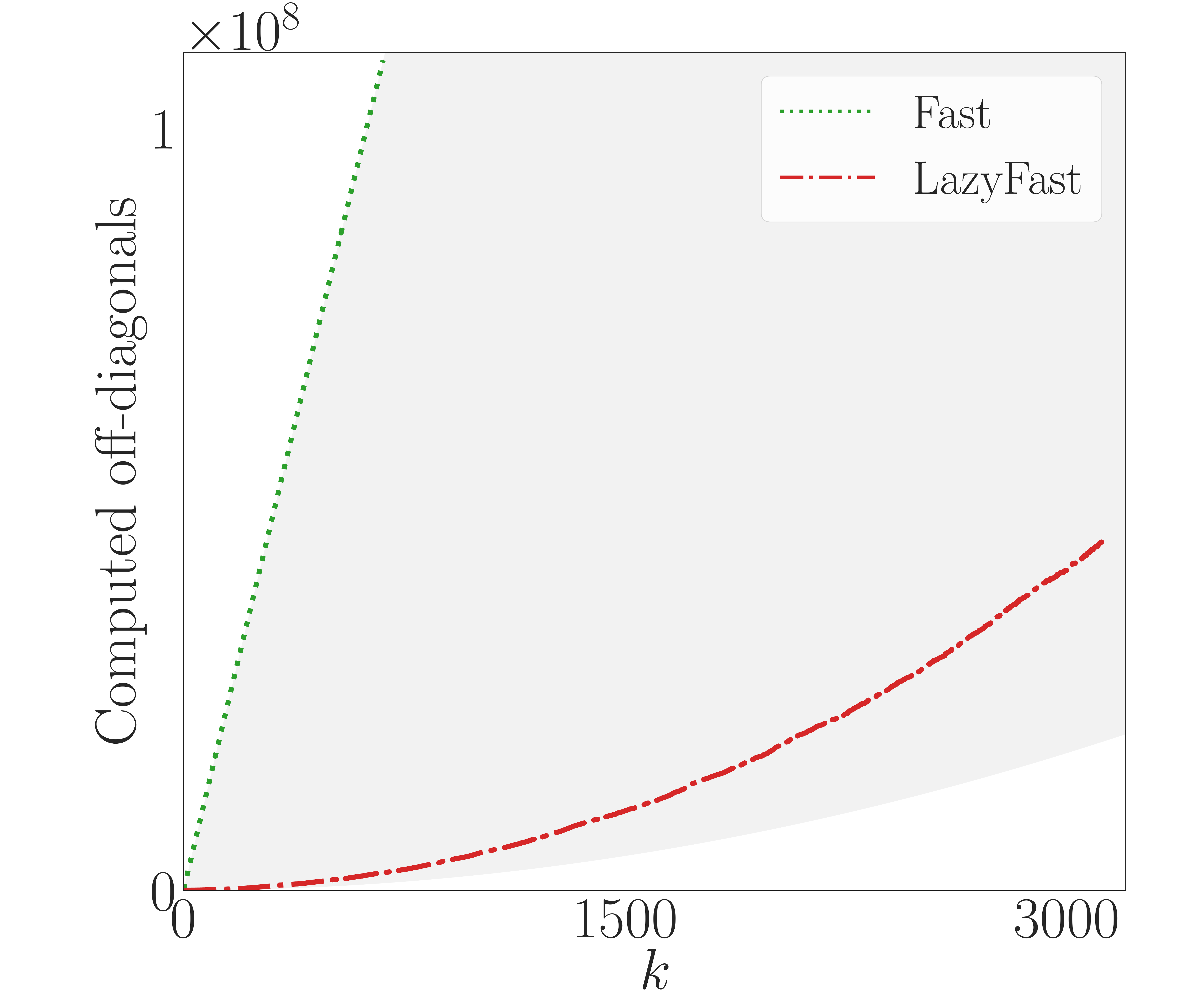}
    \subcaption{MovieLens, Off-diag.}\label{subfig:interlace-netflix-k-offdiag}
  \end{minipage}
  \caption{Results of \textsc{InterlaceGreedy}.
  In \cref{subfig:interlace-synth-k-time}, enlarged views of lower left parts are shown for visibility.
  In \cref{subfig:interlace-synth-n-time}, runtime of Naive is not shown since it did not finish in $60$ seconds even for $n=1000$.
  In the lower figures, the gray band indicates the range of the possible number of computed off-diagonals: $\left[2 k(k-1), 4(n-k)(k-1)\right]$.}\label{fig:interlace-greedy}
  \vspace{10pt}
  \begin{minipage}[b]{0.25\linewidth}
    \centering
    \includegraphics[width=\textwidth]{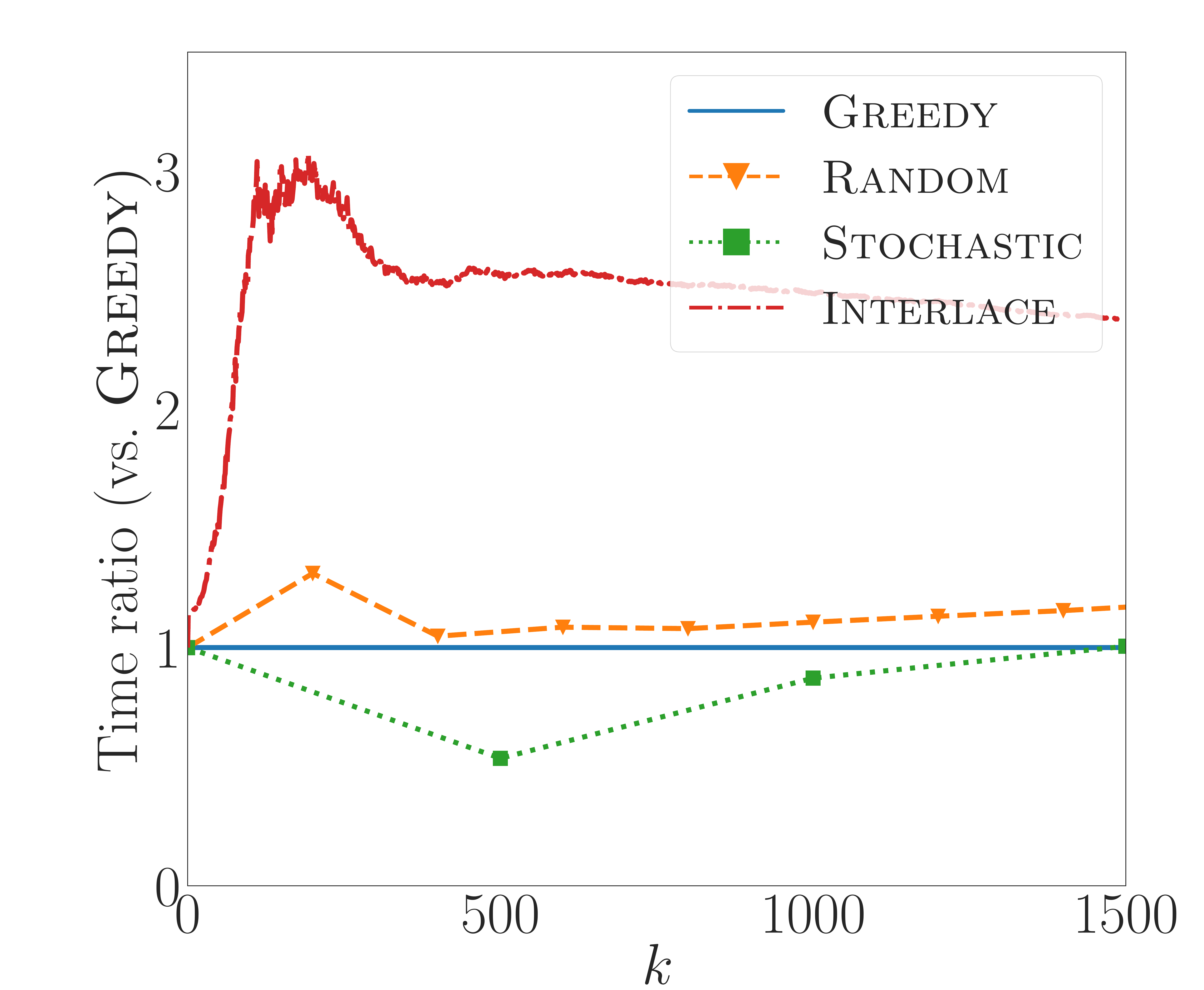}
    \subcaption{$n=6000$, Runtime}\label{subfig:synth-k-function_value}
  \end{minipage}%
  \begin{minipage}[b]{0.25\linewidth}
    \centering
    \includegraphics[width=\textwidth]{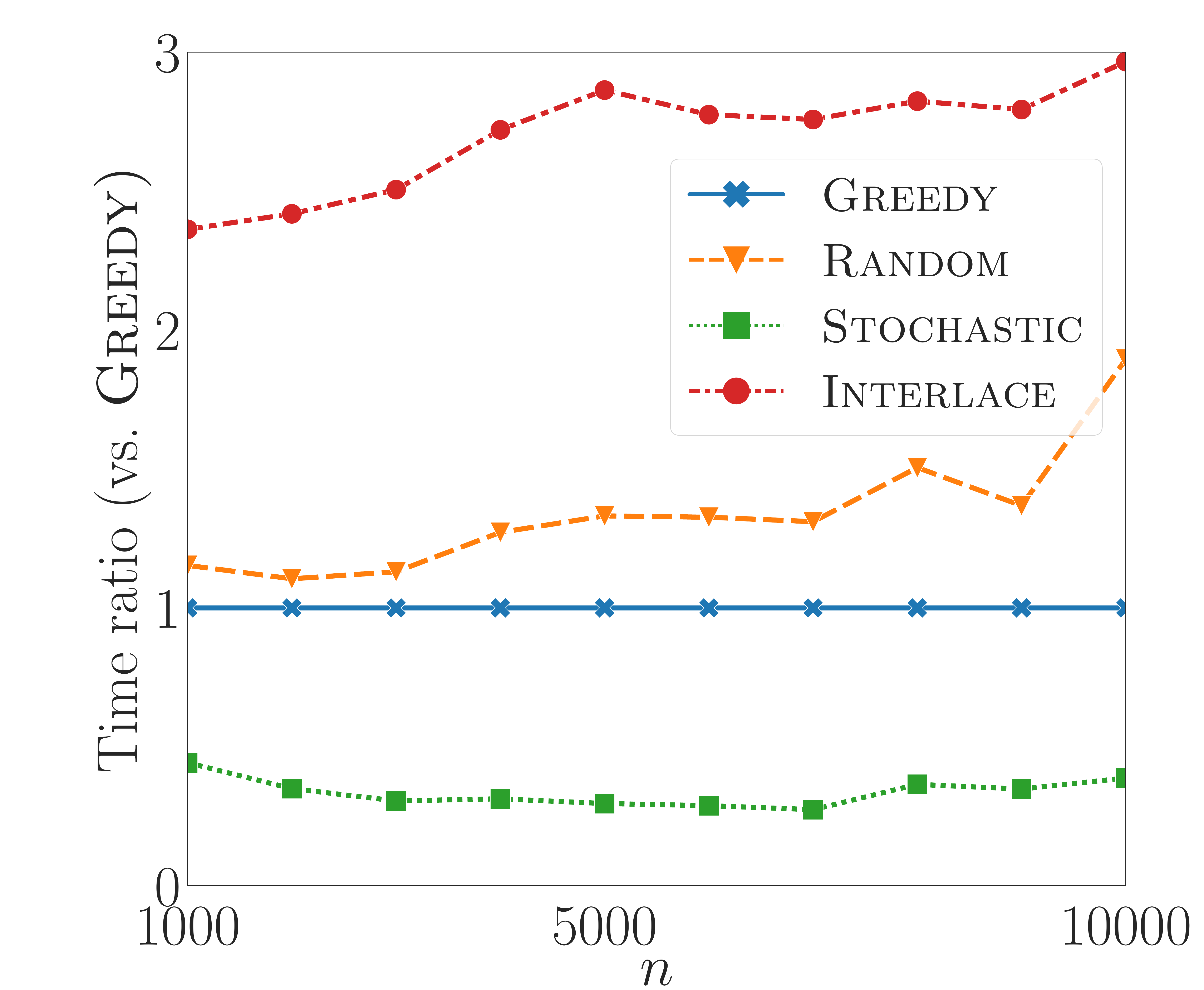}
    \subcaption{$k=200$, Runtime}\label{subfig:synth-n-function_value}
  \end{minipage}%
  \begin{minipage}[b]{0.25\linewidth}
    \centering
    \includegraphics[width=\textwidth]{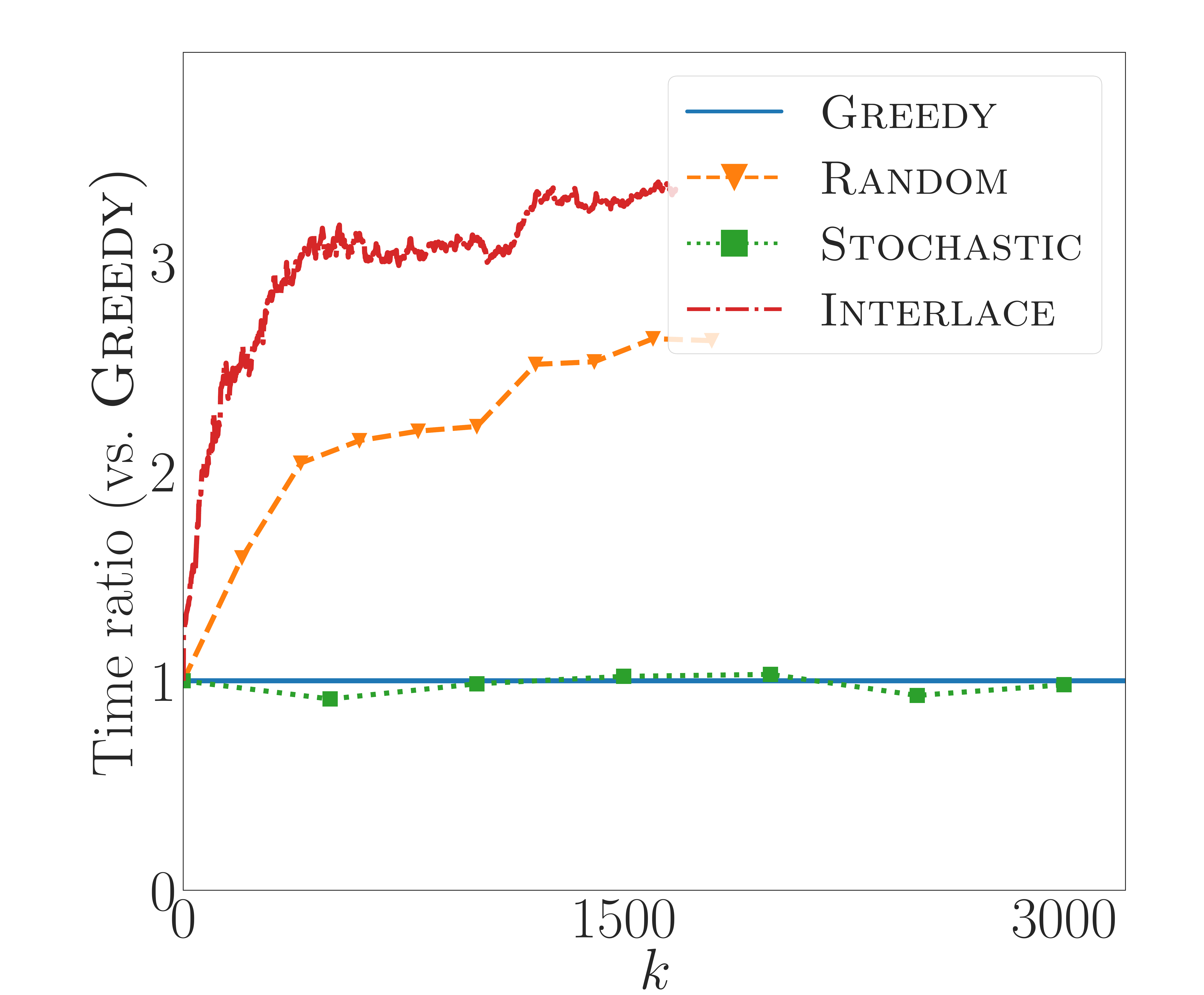}
    \subcaption{Netflix, Runtime}\label{subfig:movielens-k-function_value}
  \end{minipage}%
  \begin{minipage}[b]{0.25\linewidth}
    \centering
    \includegraphics[width=\textwidth]{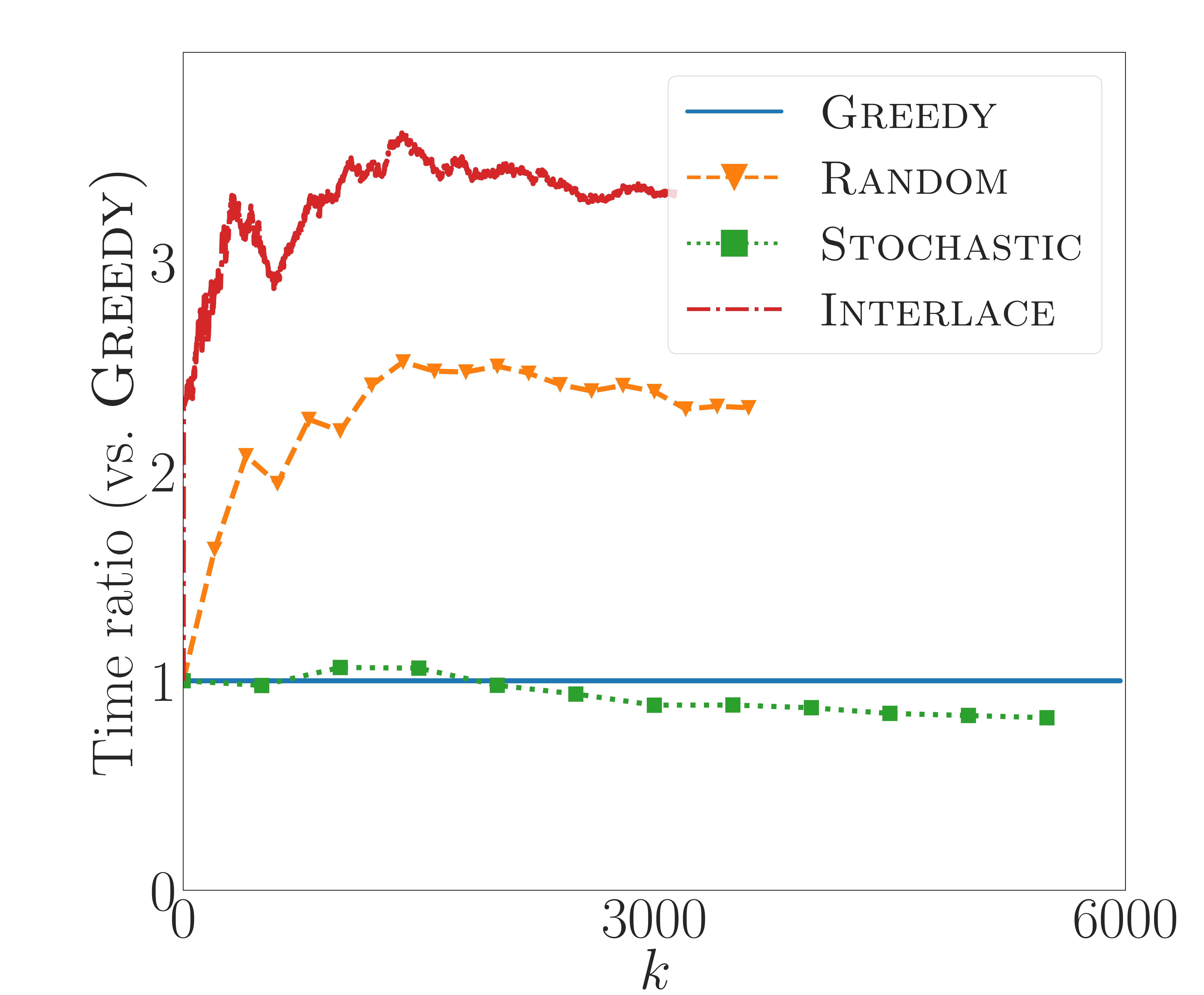}
    \subcaption{MovieLens, Runtime}\label{subfig:netflix-k-function_value}
  \end{minipage}
  \begin{minipage}[b]{0.25\linewidth}
    \centering
    \includegraphics[width=\textwidth]{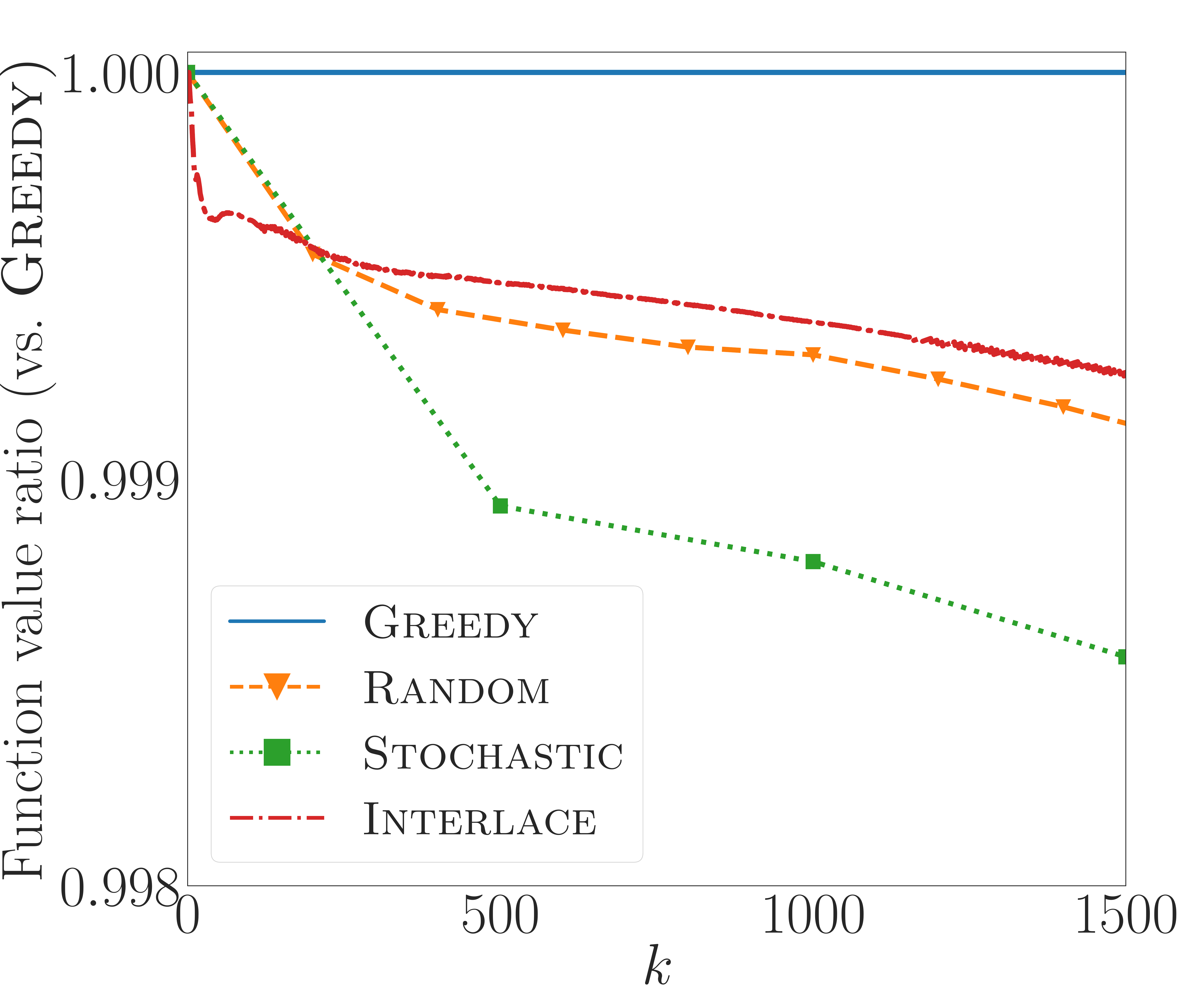}
    \subcaption{$n=6000$, Obj.\ value}\label{subfig:synth-k-function_value}
  \end{minipage}%
  \begin{minipage}[b]{0.25\linewidth}
    \centering
    \includegraphics[width=\textwidth]{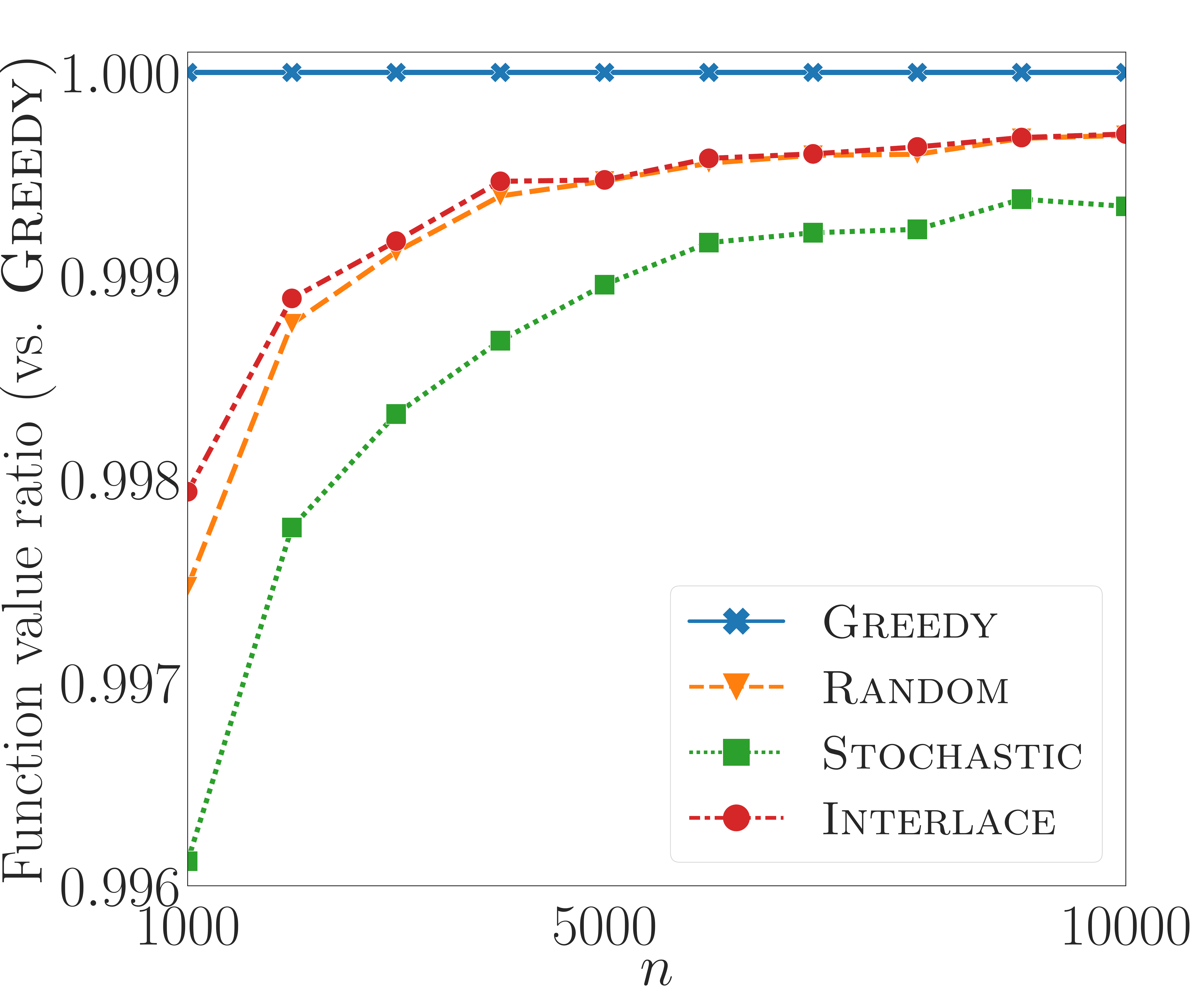}
    \subcaption{$k=200$, Obj.\ value}\label{subfig:synth-n-function_value}
  \end{minipage}%
  \begin{minipage}[b]{0.25\linewidth}
    \centering
    \includegraphics[width=\textwidth]{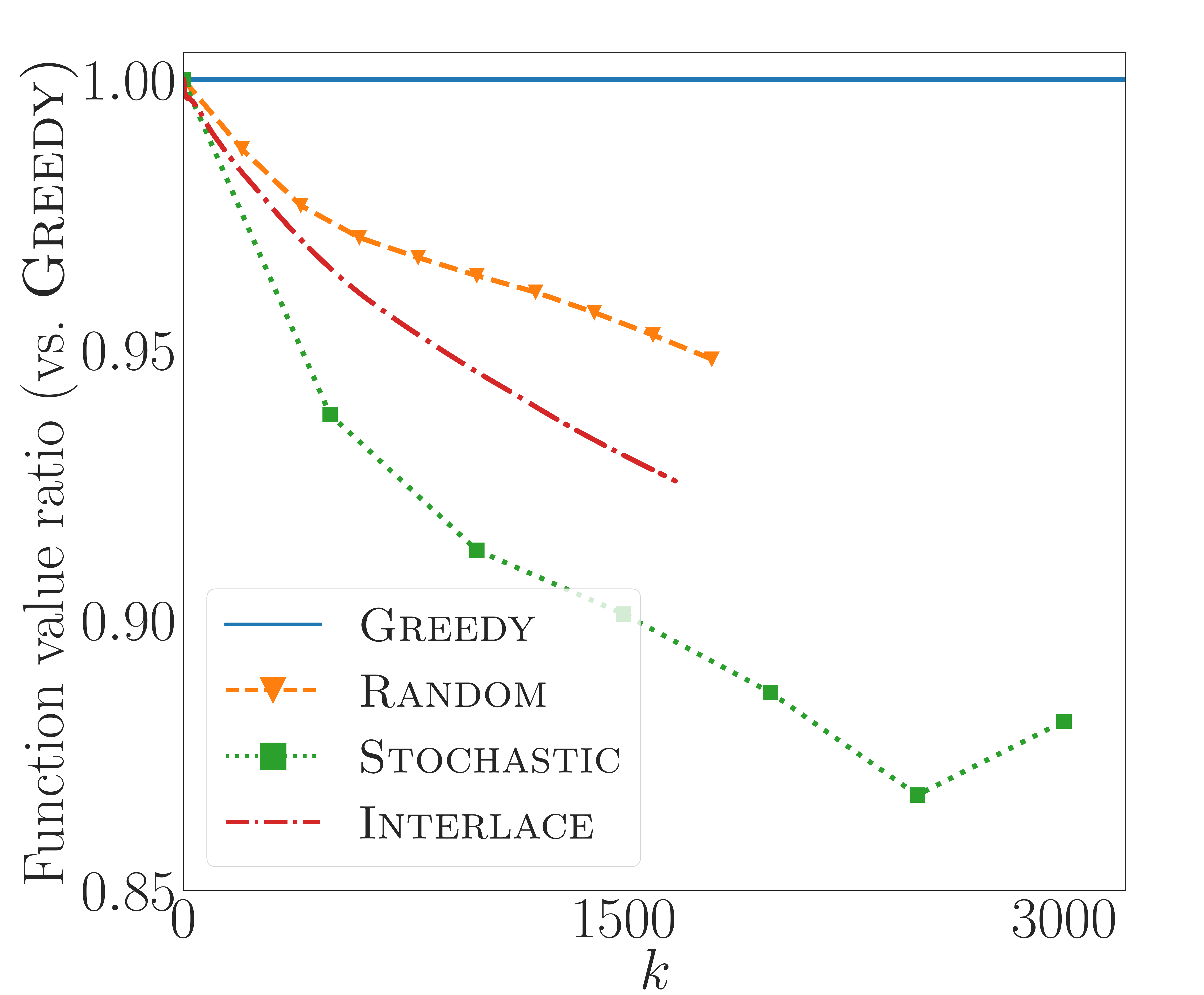}
    \subcaption{Netflix, Obj.\ value}\label{subfig:movielens-k-function_value}
  \end{minipage}%
  \begin{minipage}[b]{0.25\linewidth}
    \centering
    \includegraphics[width=\textwidth]{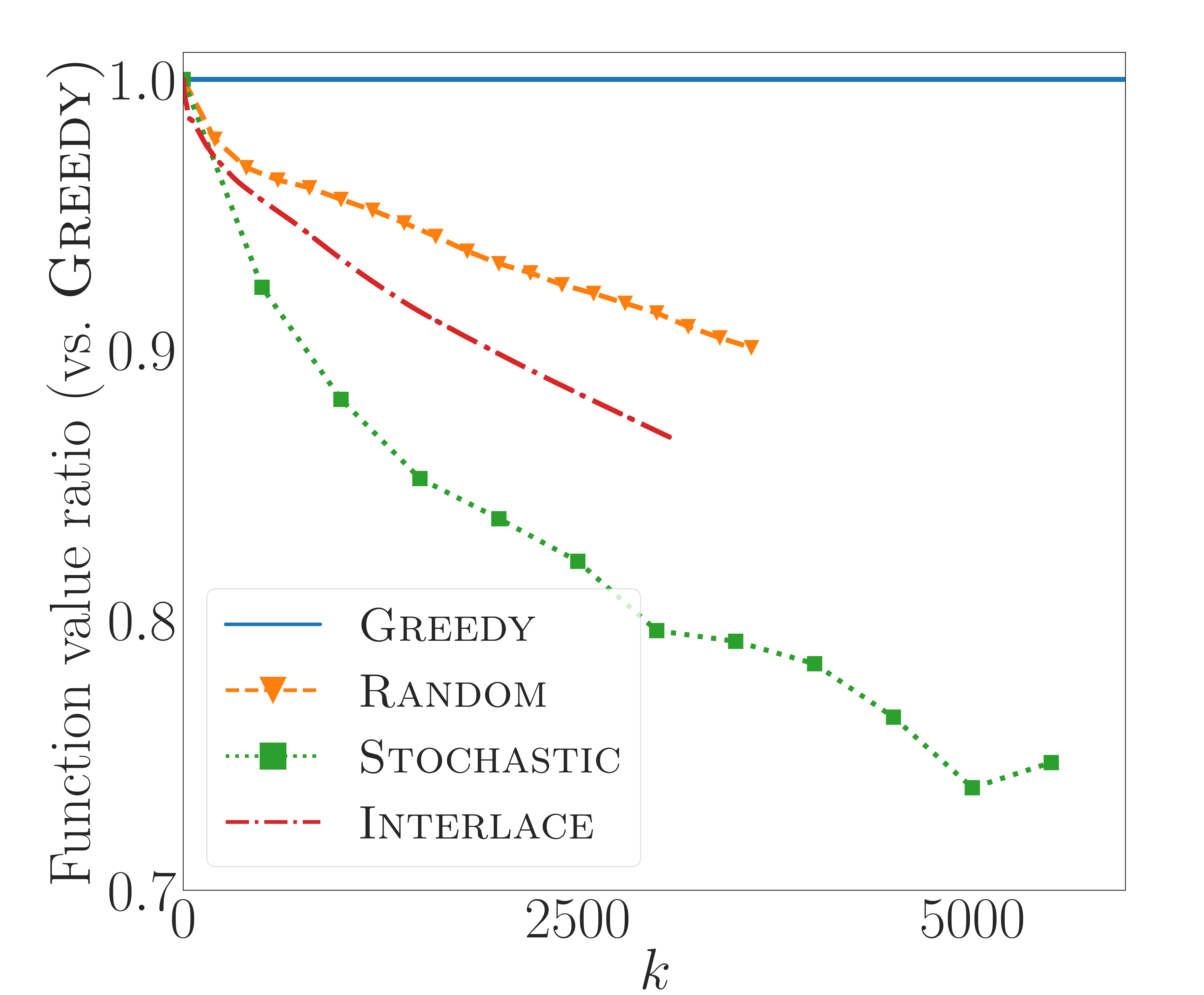}
    \subcaption{MovieLens, Obj.\ value}\label{subfig:netflix-k-function_value}
  \end{minipage}
  \caption{Running-time and objective-value ratios relative to those of \textsc{LazyFastGreedy}.}\label{fig:objective-values}
\end{figure}

\end{document}